\documentclass[aps,prb,amsmath,amssymb,amsfonts,superscriptaddress,nofootinbib,floatfix]{revtex4} 
\pdfoutput=1
\usepackage{amsmath,amsthm,amsfonts,amssymb}
\usepackage{graphicx}
\usepackage{hyperref}
\usepackage{cleveref}

\usepackage{color}

\newtheorem{theorem}{Theorem}

\newtheorem{lemma}{Lemma}

\DeclareMathAlphabet{\mathpzc}{OT1}{pzc}{m}{it}

\newcommand{\tr}{{\rm tr}}
\newcommand{\be}{\begin{equation}}
\newcommand{\ee}{\end{equation}}
\newcommand{\Wtot}{W_{\rm total}}
\newcommand{\psitf}{\Psi_{\rm tf}}
\newcommand{\dhilb}{D_{\rm qudit}}
\newcommand{\hfk}{h^{fk}}

\begin{document}

\title{On The Complexity of Redundancy-Free Quantum Hamiltonians}

\author{Matthew B.~Hastings}
\author{Alexander Schmidhuber}

\begin{abstract}
We consider a class of \emph{redudancy-free} Hamiltonians, 
which are those where the trace vanishes for any product of Hamiltonian terms in which at least one term appears an odd number of times.
This includes Hamiltonians which are sums of products of Paulis with no relations between them.
These Hamiltonians arise naturally in Hamiltonian Decoded Quantum Interferometry \cite{schmidhuber2025}, which separates the hardness of preparing quantum states into two steps: a classical decoding step, and a step preparing the thermofield double state of such a redundancy-free Hamiltonian.

We focus on the
hardness of these Hamiltonians, both for intrinsic interest and to understand the complexity of
the state preparation step in Hamiltonian DQI. We present several results showing that the problem becomes
easy at quadratically lower temperature than for a general Hamiltonian, with the complexity at given inverse temperature
$\beta$ depending on $\beta^2 d$, where $d$ is the degree of the anticommutation graph.  We give an efficient classical algorithm for small $\beta^2 d$ to approximate the partition function, while for large $\beta^2 d$ we show that approximating the partition function is NP-hard. To this end, we introduce what we call an \emph{anticommutation glass}, where frustration arises purely from anticommutation relations. 
We give a subexponential time quantum algorithm to prepare the thermofield state for small $\beta^2 d$, and present some results toward a polynomial time algorithm based on the Feiguin-Klich Hamiltonian. As a result of possible independent interest, we show that this Hamiltonian leads to a polynomial time algorithm for preparing thermofield double states of general Hamiltonians on an interaction graph of degree $d$ for $\beta\lesssim 1/d$. Finally, we show that anticommutation alone suffices to mimic the hardness of general quantum Hamiltonians: estimating the ground state energy of redundancy-free Hamiltonians is QMA-complete.

\end{abstract}
\maketitle

\section{Introduction and Definitions}
We consider a class of Hamiltonians of the form
\be
H=\sum_i J_i O_i,
\label{eq:one}
\ee
where the $J_i$ are scalars and $\Vert O_i \Vert \leq 1$,
such that the trace vanishes for any product of terms in the Hamiltonian in which some term appears an odd number of times.
We call these Hamiltonians \emph{redundancy-free Hamiltonians}\footnote{We choose to call these Hamiltonians redundancy-free instead of relation-free since individual terms might still satisfy certain relations. For instance, Pauli operators always square to the identity. }. 
One important special case is where
the $O_i$ are products of Paulis, but with no relations between the $O_i$, meaning that
no nontrivial product of the $O_i$ is equal to the identity; we call these Pauli redudancy-free.
Another interesting case is to take a Hamitonian which is an arbitrary sum of terms $H=\sum_i J_i O_i$ without imposing any requirement on the traces and define a new Hamiltonian $\sum_i J_i Z_{{\rm anc},i} O_i$, where $Z_{{\rm anc},i}$ is a Pauli $Z$ operator acting on an extra ancilla qubit, one for each term in the Hamiltonian.

A primary motivation for studying redundancy-free Hamiltonians is the quantum algorithm Hamiltonian Decoded Quantum Interferometry (HDQI) introduced in Ref.~\onlinecite{schmidhuber2025}. For a generic Hamiltonian, HDQI prepares its Gibbs state (and other polynomial filters) by first preparing a ``pilot state" that encodes the anticommutation structure of $H$, and subsequently solving a classical decoding problem defined by the the relations between the terms in the Hamiltonian. Understanding the computational complexity of the pilot state preparation was left as a major open question in Ref.~\onlinecite{schmidhuber2025}, and the lack of general techniques for this problem  limited the applicability of HDQI outside of the commuting or nearly-commuting setting. 

We observe that the pilot state in HDQI is equivalent to the thermofield double (a purification of the Gibbs state) of a certain redundancy-free relaxation of the Hamiltonian. HDQI therefore reduces the preparation of the thermofield double of a general Hamiltonian to that of a redundancy-free one by classically decoding its relations. The two steps carry two different 
sources of hardness: anticommutation relations versus classical redundancies between the terms of the Hamiltonian. 

In this paper, we characterize the hardness of the first step. Beyond the algorithmic implications for HDQI, an additional motivation is to understand the physical properties (such as frustration, hysteresis, and entanglement) that arise from anticommutation relations alone, even in the absence of any redundancies.

In the Pauli redundancy-free case,
 a set of operators $O_i$ as in \eqref{eq:one} is characterized by an \emph{anticommutation graph} $G$.  This is a graph with one vertex per operator $O_i$, labeling the vertex by the index $i$, and with an edge between vertices $i,j$ if $\{O_i,O_j\}=0$ and with no edge if instead $[O_i,O_j]=0$.
Given an anti-commutation graph with $n$ vertices, one possible representation of the operators $O_i$ on $n$ qubits is given by taking
\be
\label{Oirep}
O_i=X_i \prod_{j<i, (i,j)\in E} Z_j,
\ee
where $(i,j) \in E$ indicates that the edge from $i$ to $j$ is in the edgeset $E$ of the graph. 
For more general redundancy-free Hamiltonians, we instead have a \emph{non-commutation graph}, so that operators commute unless they are neighbors, in which case no specific commutation relation is imposed. In the Pauli setting, we call \eqref{Oirep} the canonical redundancy-free representation of a graph, and, for a general Pauli Hamiltonian $H$, we call the canonical redundancy-free representation of its anticommutation graph the \textit{redundancy-free relaxation} of $H$.  

For the rest of the paper, we will, for simplicity and for unity of presentation,  consider the Pauli case and refer to anticommutation graphs.  However, all our results about complexity of evaluating the partition function will apply in the more general case. 

For normalization, we fix $|J_i|\leq 1$ for all $i$.  (Indeed, due to the absence of relations, the sign of each $J_i$ is irrelevant for thermal properties of the systems and for any $i$, in the representation above the operator $Z_i$ anticommutes with $O_i$ and commutes with $O_j$ for $j\neq i$).

We are interested in the complexity of computing properties of these Hamiltonans, such as approximating the partition function.  Of course, if all $O_i$ commute, then the problem is trivial, reducing to $n$ decoupled degrees of freedom. Accordingly, the ``pilot state'' of standard DQI~\cite{Jordan2025DQI} is a simple superposition of Dicke states.

We consider the complexity of the problem as a function of the degree $d$ of the anticommutation graph. 
Define
\be
\kappa=\beta^2 d.
\ee
The central theme of our results is that the problem is ``easy" at sufficiently small $\kappa$, i.e., for $\beta$ sufficiently small compared to $1/\sqrt{d}$.
This contrasts with general Hamiltonians, where the problem is only ``easy" for $\beta$ sufficiently small compared to $1/d$.
This gives evidence that Hamiltonian Decoded Quantum Interferometry may simplify the preparation of thermal states of certain Hamiltonians, as this redundancy-free Hamiltonians may be ``easy" at $\beta$ where the decoding step is efficient. 

 In \cref{sec:polymer}, we show that for sufficiently small $\kappa$, a polymer expansion for the free energy converges, giving an efficient classical algorithm to approximate the partition function; the time is polynomial in both $n$ and the error $\epsilon$.
 Conversely, for $\kappa$ sufficiently large, we show in \cref{sec:hardness} that the problem is NP-hard. We start by giving physics arguments and numerical simulations, arguing that we can effectively encode the partition function of an arbitrary Ising Hamiltonian, with both ferromagnetic and antiferromagnetic ferms, into one of these redundancy-free Hamiltonians.  This is perhaps slightly surprising: such an Ising Hamiltonian is a sum of terms $Z_i$ or $Z_i Z_j$, with some coefficients, and it \emph{has redundancy} among the terms in the Hamiltonian, but here we encode it into a problem without redundancy. We then make this intuition precise by giving a formal proof of hardness. Moreover, we show that the ground state energy estimation problem of redundancy-free Hamiltonians remains QMA-hard.

In \cref{sec:QSP}, we consider the problem of preparing a thermofield double state on a quantum computer, and give a subexponential algorithm to do this for sufficiently small $\kappa$.  In \cref{sec:beyond}, we present some results towards improving the runtime of the algorithm to polynomial.  While we do not succeed in doing this for redundancy-free Hamiltonians at $\beta \lesssim 1/\sqrt{d}$, we do
give an
algorithm for efficient preparation of thermofield double states of general Hamiltonians on an interaction graph of degree $d$ for $\beta \lesssim 1/d$.  This algorithm can be compared to
Ref.~\onlinecite{rouze2025efficient} which also works for $\beta \lesssim 1/d$; however, our algorithm may have better scaling for high-dimensional lattices as the allowed $\beta$ in Ref.~\onlinecite{rouze2025efficient} depends exponentially on lattice dimension while we make no assumption on lattice dimension.
See also Ref.~\onlinecite{bakshi2024high}.

Note on use of AI: most of the quantum Monte Carlo code was written by AI, and AI was used to find some references on the polymer expansion. AI was also used to help with language editing and proofreading in certain sections. No other use was made of AI in this project.

\section{Hamiltonian Decoded Quantum Interferometry}
\label{sec:hdqi_background}
In this section, we briefly explain how redundancy-free Hamiltonians arise in Hamiltonian Decoded Quantum Interferometry.
Decoded quantum interferometry (DQI) prepares a state whose amplitudes are a polynomial of a classical cost function by coherently running a classical decoder~\cite{Jordan2025DQI}. Hamiltonian DQI (HDQI) extends this idea to polynomials of quantum Hamiltonians~\cite{schmidhuber2025}. The quantum input to this procedure is the so-called pilot state, which captures the anticommutation structure of the Hamiltonian and for thermal states is Clifford-equivalent to the thermofield double (TFD) of its redundancy-free relaxation. The decoder then removes the extra register that distinguishes products related in the physical Hamiltonian. For more details on HDQI, consult Ref.~\onlinecite{schmidhuber2025}. For simplicity, we focus on the Pauli setting throughout this section. 

Write the Hamiltonian on $n$ qubits as
\begin{equation}
\label{eq:hdqi_hamiltonian}
H_v=\sum_{i=1}^m J_i v_iP_i,
\qquad J_i\geq0,\qquad v_i\in\{\pm1\},
\end{equation}
where the $P_i$ are Hermitian Pauli operators. Let $b_i=(x_i,z_i)\in\mathbb F_2^{2n}$ record the $X$ and $Z$ exponents of $P_i$, ignoring its scalar phase. Multiplication adds these labels. Set $B=(b_1\ \cdots\ b_m)$. The \emph{relation code} is
\begin{equation}
\label{eq:hdqi_relation_code}
\mathcal C=\ker B.
\end{equation}
Indeed, the ordered product $P(y)=P_1^{y_1}\cdots P_m^{y_m}$ is scalar exactly when $By=0$. Decoding means recovering $y$ from its syndrome $By$.

Let $A$ be the anticommutation matrix, with $P_iP_j=(-1)^{A_{ij}}P_jP_i$. On $m$ qubits take the canonical redundancy-free generators and Hamiltonian
\begin{equation}
\label{eq:hdqi_relaxation}
\widehat P_i=X_i\prod_{j<i}Z_j^{A_{ij}},
\qquad
\widehat H=\sum_iJ_i\widehat P_i.
\end{equation}
Their ordered products $\widehat P(y)=\widehat P_1^{y_1}\cdots\widehat P_m^{y_m}$ are orthonormal for normalized trace. Expand
\begin{equation}
\label{eq:hdqi_thermal_expansion}
e^{-\beta\widehat H/2}=\sum_y a_y\widehat P(y),
\qquad
\widehat Z=\tr(e^{-\beta\widehat H})=\sum_y|a_y|^2.
\end{equation}
Reordering a word uses only the pairwise commutation signs, which are the same for the physical and relaxed generators. The same coefficients therefore give
\begin{equation}
\label{eq:hdqi_physical_expansion}
e^{-\beta H_v/2}=\sum_y a_yv^yP(y),
\qquad v^y=\prod_i v_i^{y_i}.
\end{equation}
The expansion is worked out in Appendix~\ref{app:hdqi_expansion}.

For an operator $Q$ on $k$ qubits, write $|Q\rangle=(Q\otimes I)\Psi_{\rm ME}$, where $\Psi_{\rm ME}=2^{-k/2}\sum_z|z\rangle_L|z\rangle_R$. Then $\langle Q|R\rangle=\tr(Q^\dagger R)$. The redundancy-free TFD is $|\widehat\psi_\beta\rangle=|e^{-\beta\widehat H/2}\rangle/\sqrt{\widehat Z}$, while the HDQI pilot and its error distribution are
\begin{equation}
\label{eq:hdqi_pilot}
|\pi_\beta\rangle=\frac1{\sqrt{\widehat Z}}\sum_y a_y|y\rangle,
\qquad
\mu_\beta(y)=\frac{|a_y|^2}{\widehat Z}.
\end{equation}
A Clifford circuit converts one state into the other. To see this, write
\begin{equation}
\label{eq:hdqi_vectorized_product}
|\widehat P(y)\rangle
=2^{-m/2}\sum_z
(-1)^{\sum_{j<i}A_{ij}y_i z_j}
|y+z\rangle_L|z\rangle_R.
\end{equation}
CNOTs from $R_i$ to $L_i$ put $y$ in the left register. A controlled-$Z$ between $L_i$ and $R_j$ for each $j<i$ with $A_{ij}=1$ removes the phase, and Hadamards on the right register give $|y\rangle|0^m\rangle$. Calling this circuit $W_A$, we have
\begin{equation}
\label{eq:hdqi_clifford}
W_A|\widehat\psi_\beta\rangle
=|\pi_\beta\rangle|0^m\rangle.
\end{equation}
The same equivalence holds for any redundancy-free Pauli realization of the generators, as shown in Appendix~\ref{app:hdqi_expansion}.

The remaining algorithimic steps of HDQI are as follows. Starting with the pilot, apply the phase $v^y$ and the controlled ordered product $P(y)$ to one half of $n$ Bell pairs. A Bell basis change exposes the label $By$. We compute $\operatorname{Dec}(By)$, add it to the pilot register, and undo the decoder workspace and the Bell basis change. This gives
\begin{equation}
\label{eq:hdqi_coherent_output}
|\Psi_v\rangle
=\frac1{\sqrt{\widehat Z}}\sum_y a_yv^y
|y+\operatorname{Dec}(By)\rangle\,|P(y)\rangle.
\end{equation}
Correctly decoded labels all leave the first register in $|0^m\rangle$, so their contributions interfere. We discard the first register at the end. After preparing the redundancy-free TFD, the cost is polynomial in $m+n$ plus the cost of running the decoder reversibly.

When the terms commute, $\widehat H=\sum_iJ_iX_i$. The pilot is a product state, with its $i$th factor proportional to $\cosh(\beta J_i/2)|0\rangle-\sinh(\beta J_i/2)|1\rangle$. Its bits are independent and have error probabilities
\begin{equation}
\label{eq:hdqi_bsc}
p_i=\frac{\sinh^2(\beta J_i/2)}{\cosh(\beta J_i)}
=\frac{1-\operatorname{sech}(\beta J_i)}2
=O(\beta^2 J_i^2).
\end{equation}
For equal coupling magnitudes this is a binary symmetric channel. Hence HDQI applied to commuting Hamiltonian gives a  conceptually simple picture: the pilot state is always efficiently preparable since there are no anti-commutation relations, and the decoding problem to reach inverse temperature $\beta$ is that of the binary symmetric channel (BSC). As $\beta \to \infty$ so physical temperature approaches 0, the BSC noise rate approaches the information-theoretic limit $p_i = 1/2$.

Noncommuting terms introduce correlations such that the pilot and error distribution  do not have this simple product structure, but (as shown in Appendix~\ref{app:hdqi_weights}) the error distribution $\mu_\beta$ still satisfies
\begin{equation}
\label{eq:hdqi_local_stochastic}
\Pr_{Y\sim\mu_\beta}[S\subseteq\operatorname{supp}Y]
\leq\prod_{i\in S}\left(\frac{\beta J_i}{2}\right)^2
\qquad(S\subseteq[m]).
\end{equation}
For small $\beta$, both these results achieve quadratically better scaling in $\beta$ than the generic scheme first introduced in Hamiltonian DQI, which approximates a Gibbs state by choosing a polynomial approximation to the exponential function instead of working with the exact expansion \eqref{eq:hdqi_physical_expansion}.

 We focus on Gibbs states and exponential functions in this document for physical and conceptual simplicity; every result here directly extends to arbitrary polynomials as studied in HDQI. For a degree-$\ell$ polynomial, the pilot has a sharp weight cutoff: $a_y=0$ for $|y|>\ell$. The thermal pilot instead has a small tail at high weights. To show that this does not impact the applicability of HDQI, at least in the setting of randomly signed couplings, we prove in Appendix~\ref{app:hdqi_robustness} the following robustness theorem. In words, for a Hamiltonian
\begin{equation}
H_v=\sum_{i=1}^m v_iJ_iP_i
\end{equation}
with uniformly random signs $v$, HDQI prepares the thermofield double of $H_v$ to high fidelity for all but a $\delta$-fraction of sign choices, provided (1) the  thermofield double of the redundancy-free relaxation is prepared with high fidelity and (2) the decoder succeeds with high probability. The two sources of error simply add: imperfections in the pilot are not amplified by decoding failures, and the decoder only needs to succeed whp on average, not in the worst case.

\begin{theorem}[Robustness of HDQI]
\label{thm:hdqi_robustness}
Fix $\beta\geq0$, the operators $P_i$, the magnitudes $J_i$, and a decoder $\operatorname{Dec}$, and let
\[
r=\Pr_{Y\sim\mu_\beta}[\operatorname{Dec}(BY)\neq Y].
\]
Choose the signs $v_i$ independently and uniformly. With the exact pilot, the physical output $\rho_v$ of \eqref{eq:hdqi_coherent_output} satisfies
\begin{equation}
\label{eq:hdqi_average_infidelity}
\mathbb E_v\bigl[1-\langle\psi_{\beta,v}|\rho_v|\psi_{\beta,v}\rangle\bigr]
\leq2r,
\qquad
|\psi_{\beta,v}\rangle=
\frac{|e^{-\beta H_v/2}\rangle}{\sqrt{\tr(e^{-\beta H_v})}}.
\end{equation}
If the prepared redundancy-free TFD has squared fidelity at least $1-\varepsilon_{\rm p}$, the resulting output $\widetilde\rho_v$ obeys
\begin{equation}
\label{eq:hdqi_robustness}
D_{\rm tr}\!\left(\widetilde\rho_v,
|\psi_{\beta,v}\rangle\langle\psi_{\beta,v}|\right)
\leq\sqrt{\varepsilon_{\rm p}}+\sqrt{\frac{2r}{\delta}}
\end{equation}
with probability at least $1-\delta$ over the signs, for every $0<\delta<1$. Here $D_{\rm tr}(\rho,\sigma)=\tfrac12\operatorname{Tr}|\rho-\sigma|$.
\end{theorem}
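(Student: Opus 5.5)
Write the ideal output of the coherent procedure as $|\Psi_v\rangle$ in \eqref{eq:hdqi_coherent_output}. The plan is to split it into a correctly decoded part and a failure part, compute the overlap of the correct part with $|0^m\rangle|\psi_{\beta,v}\rangle$, and average over signs. Let $G=\{y:\operatorname{Dec}(By)=y\}$ and write
\[
|\Psi_v\rangle=|0^m\rangle|\Phi_v^G\rangle+|\Xi_v\rangle,\qquad
|\Phi_v^G\rangle=\frac1{\sqrt{\widehat Z}}\sum_{y\in G}a_yv^y|P(y)\rangle,
\]
where $|\Xi_v\rangle$ collects the terms with $y\notin G$. Note that $|0^m\rangle$ can also occur in $|\Xi_v\rangle$ in principle; it is cleaner to project onto $|0^m\rangle$ first.

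By \eqref{eq:hdqi_physical_expansion}, $\sum_y a_yv^y|P(y)\rangle=|e^{-\beta H_v/2}\rangle$. Hence the full-sum vector (dropping the restriction to $G$) is $\sqrt{Z_v/\widehat Z}\,|\psi_{\beta,v}\rangle$, where $Z_v=\tr e^{-\beta H_v}$. The fidelity of the physical output $\rho_v$ (the second register after discarding the first) is at least $|\langle 0^m,\psi_{\beta,v}|\Psi_v\rangle|^2$, since tracing out only increases fidelity with a product target. Because $\langle\psi_{\beta,v}|$ is a linear functional, the overlap becomes
\[
\langle\psi_{\beta,v}|\Phi_v\rangle-\langle\psi_{\beta,v}|\Phi^{B}_v\rangle,
\]
where $\Phi^B_v$ is the sum over $y\notin G$ that are mapped to $0^m$. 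I would instead bound the overlap from below directly by
\[
\mathrm{Re}\,\langle 0,\psi_{\beta,v}|\Psi_v\rangle\ \ge\ \sqrt{Z_v/\widehat Z}-\bigl|\langle\psi_{\beta,v}|\Phi^{\rm bad}_v\rangle\bigr|,
\]
where $\Phi^{\rm bad}_v=\widehat Z^{-1/2}\sum_{y\notin G,\ y+\operatorname{Dec}(By)=0}\cdots$, minus the missing good terms.

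The key averaging fact is orthonormality in $v$. Since $\mathbb E_v[v^yv^{y'}]=\delta_{yy'}$, we get $\mathbb E_v Z_v/\widehat Z=\sum_y|a_y|^2\,\|P(y)\|^2/\widehat Z$, up to the norm of $|P(y)\rangle$, which is $1$. So $\mathbb E_v Z_v=\widehat Z$: the average physical partition function equals the relaxed one. Similarly, any restricted sum $\sum_{y\in S}a_yv^y|P(y)\rangle$ has $\mathbb E_v\|\cdot\|^2\le\sum_{y\in S}|a_y|^2$ when $P(y)$ collides for different $y$. Collisions produce cross terms $a_y\bar a_{y'}\,\mathbb E[v^{y+y'}]=0$ for $y\neq y'$, so this is in fact an equality: $\mathbb E_v\|\cdot\|^2=\widehat Z\,\mu_\beta(S)$.

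Writing $|\Psi_v\rangle=|0\rangle|\Phi_v\rangle+|E_v\rangle$ with $|E_v\rangle=|0\rangle(-\Phi_v^{\bar G})+(\text{failure terms})$, one has $\mathbb E\|E_v\|^2\le 2r$. The infidelity with the normalized target is then bounded by $\|\Psi_v-\sqrt{Z_v/\widehat Z}\,|0,\psi_{\beta,v}\rangle\|^2$ type quantities. The main obstacle is the normalization mismatch $Z_v\neq\widehat Z$ for individual $v$. I would handle it by using $1-F\le\|\Psi_v-c\,\chi\|^2$ optimized over $c$, which gives $1-F\le\min_c\|\Psi_v-c\chi\|^2=\|E_v^\perp\|^2\le\|E_v\|^2$. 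This is valid since $\Psi_v$ is a unit vector, because the circuit is unitary. Averaging then yields \eqref{eq:hdqi_average_infidelity}.

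For \eqref{eq:hdqi_robustness}, apply Markov's inequality to the nonnegative infidelity: with probability at least $1-\delta$, $1-F_v\le 2r/\delta$. The Fuchs–van de Graaf inequality then converts this into a trace distance of at most $\sqrt{2r/\delta}$. The pilot error is handled separately: the whole HDQI circuit after the pilot, including $W_A^\dagger$, is a fixed $v$-dependent unitary followed by a partial trace. Trace distance is contractive under these operations, and pure states of squared fidelity $1-\varepsilon_{\rm p}$ have trace distance $\sqrt{\varepsilon_{\rm p}}$. The triangle inequality gives the sum.
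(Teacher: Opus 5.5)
Your final argument is correct and is essentially the paper's proof. Both compare $\Psi_v$ with the unnormalized target $|0^m\rangle|e^{-\beta H_v/2}\rangle/\sqrt{\widehat Z}$, whose difference contains only misdecoded labels, so $\mathbb E_v\|E_v\|^2=2r$ because $\mathbb E_v[v^yv^{y'}]=\delta_{yy'}$. The infidelity is then bounded by $\|E_v\|^2$; your $\min_c$ step is equivalent to the paper's projection onto the complement of $|\psi_{\beta,v}\rangle$. Both finish with Markov, Fuchs--van de Graaf, contractivity and the triangle inequality. You can drop the detours in your opening paragraphs: $y\notin G$ is equivalent to $y+\operatorname{Dec}(By)\neq0$, so no misdecoded term ever lands on $|0^m\rangle$. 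Also, the TFD-to-pilot conversion is $W_A$, not $W_A^\dagger$.
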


\section{Polymer Expansion at Small $\kappa$}
\label{sec:polymer}
The partition function $Z=\tr(\exp[\beta H])$ can be expanded as a power series
$$
Z=1+\sum_{k\geq 1} \frac{\beta^k}{k!} \tr(H^k).
$$
In turn, $\tr(H^k)$ can be written as a sum of $n^k$ terms,
$$\tr(H^k)=\sum_{i_1,\ldots,i_k} \prod_{a=1}^k J_{i_a} \tr(O_{i_1} O_{i_2} \ldots O_{i_k}).$$
The trace vanishes unless each operator $O_i$ appears an even number of times, in which case the trace is equal to $\pm 1$ for the Pauli case and is bounded in absolute value by $1$ in general.
Any given choice of $i_1,\ldots,i_k$ induces some subgraph $G_{i_1,\ldots,i_k}$ of the anticommutation graph $G$.  This subgraph has a vertex $i$ iff $i\in \{i_1,\ldots,i_k\}$, and has an edge $(i,j)$ between two such vertices iff $\{O_i,O_j\}=0$.

Write the vertex set $V_{i_1,\ldots,i_k}$ of this subgraph as a union of connected components; here connectivity is with respect to this subgraph $G_{i_1,\ldots,i_k}$.

\subsection{The Polymer Partition Function}
We now introduce the polymer partition function.

Consider any given sequence $i_1,\ldots,i_k$ where the vertex set has $C$ connected components.  Consider any other
sequence $j_1,\ldots,j_k$ that can be obtained from $i_1,\ldots,i_k$ by a permutation that \emph{preserves the order in each connected component}.
For example, if $C=2$ and $i=1,i=2$ are in one component and $i=3,i=4$ are in another component, then $1,2,1,2,3,4,3,4$ can be permuted to $1,3,2,4,1,3,2,4$ as the order $1,2,1,2$ and $3,4,3,4$ is the same in both.
Then, both sequences give the same term:  $\beta^k \prod_{a=1}^k J_{i_a} \tr(O_{i_1} O_{i_2} \ldots O_{i_k})=\beta^k \prod_{a=1}^k J_{j_a} \tr(O_{j_1} O_{j_2} \ldots O_{j_k})$.
Let's count the number of sequences that can be obtained from a given $i_1,\ldots,i_k$ by such a permutation.
Suppose the sequence $i_1,\ldots,i_k$ has $n_c$ entries in the $c$-th connected component, counting multiplicity.  For example, the sequence $1,2,1,2,3,4,3,4$ has $n_1=n_2=4$.  Given $i_1,\ldots,i_k$, we can specify any sequence $j_1,\ldots,j_k$ simply by indicating which component each term of $j_1,\ldots,j_k$ is from.  That is, by giving a sequence of $k$ numbers, each from $1,\ldots,C$, with $n_1$ appearances of $1$, $n_2$ appearances of $2$, and so on.
The number of such sequences is equal to $k!/(\prod_{c=1}^C n_c!)$.

Thus, for any given sequence $i_{1,1},\ldots,i_{1,n_1},i_{2,1},\ldots,i_{2,n_2},\ldots,i_{C,1},\ldots,i_{C,n_C}$, written so that $i_{a,1},\ldots,i_{a,n_a}$ are all in the $a$-th connected component, if we sum over all sequences related by such a permutation to that given sequence,
the total contribution to $Z$ is equal to
$$\prod_{c=1}^C \Bigl( \frac{1}{n_c!} \prod_{a=1}^{n_c} (\beta J_{i_{c,a}}) \tr(O_{i_{c,1}} O_{i_{c,2}} \ldots O_{i_{c,k}})\Bigr).$$

We can then express the sum over all contributions to $Z$ by summing over the number of connected components, summing over choices of connected components, summing over sequences in each connected component, and for each choice of a sequence in each connected component, using the above expression for the contribution to $Z$.
We can replace the sum over sequences in each connected component by a sum over multisets (i.e., sets of vertices with multiplicities), followed by a sum over sequences in which each element of that multiset appears a number of times equal to its multiplicity.  We say that such sequences are \emph{compatible} with the given multiset.
Formally, let $\Gamma$ be the set of all finite multisets $\gamma$ such that the subgraph induced by $\gamma$ (i.e., the subgraph containing the vertices which appear at least once in that multiset)  has exactly one connected component.
Let the weight $w(\gamma)$, where $\gamma$ is a given multiset, be given by
$$w(\gamma)=\frac{1}{m!} \; \sum_{\stackrel{i_1,\ldots,i_m}{ {\rm compatible\, with\,} \gamma}}\;  \prod_{a=1}^{m} (\beta J_{i_a}) \tr(O_{i_{1}} \ldots O_{i_m}),$$
where $m$ is the total multiplicity of the multiset
Note that we can bound the weight $w(\gamma)$ by the product $\prod_i (\beta J_{i})^{m_i}$, where $m_i$ is the multiplicity of $i$ in the given multiset, since there are at most $m!$ different sequences compatible with the given multiset.
Since $|J_i|\leq 1$, this bounds $w(\gamma) \leq \beta^{\sum_i m_i}$.

Then,
\be
\label{Zpolymer}
Z=\sum_{\Gamma'\subset \Gamma} \bigl( \prod_{\gamma\in \Gamma'} w(\gamma)\bigr) \bigl(\prod_{\{\gamma,\gamma'\}\subset \Gamma'}\delta(\gamma,\gamma'),
\ee
where here $\delta(\gamma,\gamma')$ enforces the requirement that different components must be non-adjacent to each other.  We set $\delta(\gamma,\gamma')=0$ if some vertex in $\gamma$ is equal to or adjacent to some vertex in $\gamma'$.
Otherwise we set it equal to $1$.
We include the term $\Gamma'=\emptyset$ in this sum, giving a contribution of $1$ to the partition function.

\Cref{Zpolymer} is in the form of a polymer partition function.  Each $\gamma$ is called a \emph{polymer}.
A theory has been developed in statistical mechanics studying these partition functions.  See for example chapter 5 in Ref.~\onlinecite{friedli_velenik_2017} for a review.

Before proceeding and using this theory,
it is convenient to further group all multisets having the same support into one polymer.  From here on, $\gamma$ will label then a set of sites, rather than a multiset.  Summing over multisets having the given support, summing $m_i=2,4,\ldots$ for each $i$,
we bound
\be
w(\gamma) \geq \bigl(\frac{ \beta^2}{1-\beta^2}\bigr)^{|\gamma|},
\ee
where $|\gamma|$ is the cardinality of the support.  Note here that we use that all $m_i$ must be even or else the trace $\tr(O_{i_{1}} \ldots O_{i_m})$.  \emph{This requirement that each $m_i$ be even is the key idea that leads to convergence at $\beta\sim 1/\sqrt{d}$}.

The key idea in this theory is roughly as follows.  If the weights are small, one might hope to develop a good approximation theory, using a small number of terms in the sum, i.e., sets $\Gamma'$ which have small cardinality.  However, if we directly study the partition function, the small weights are counteracted by the possible large number of polymers.  Just to give an example, suppose we take all operators $O_i$ to commute, i.e., $O_i=Z_i$, and let all $J=1$.  Then, the simplest polymer repeats some given index twice, and has weight $\beta^2$.  For small $\beta$, this weight is small, but the sum over all terms with $|\Gamma'|=1$ is equal to $n\beta^2$ and hence is not small for large enough $n$, and then the next term with two polymers is $\approx n^2 \beta^4/2$ and may be even larger.
However, if we study the logarithm of the partition function, these problems are resolved in many cases, and the series converges rapidly if the weights are small enough and if each polymer interacts with few enough other polymers.  Here the interaction between polymers is due to the requirement that the different components be non-adjacent to each other, i.e., it is caused by non-zero $\zeta(\gamma,\gamma')$, where
$\zeta(\gamma,\gamma')\equiv 1-\delta(\gamma,\gamma')$.

\subsection{The Series for the Log Partition Function and a Convergence Criterion}
It is proven that, under a certain convergence condition $(*)$ given below,
we can write $Z=\exp[F]$, where $F$ has the power series expansion
\be
\label{Fexpansion}
F=\sum_{m\geq 1} \sum_{\gamma_1,\ldots \gamma_m} \phi(\gamma_1,\ldots,\gamma_m) \prod_i w(\gamma_i),
\ee
where $\phi(\gamma_1,\ldots,\gamma_m)$ is defined as follows.  If $m=1$, $\phi(\gamma)=1$.  If $m\geq 1$, define $\phi(\cdot)$ as follows.  Let $\sum_{G_{{\rm conn},m}}$ denote the sum over graphs $G_{{\rm conn},m}$ which are connected and which have $m$ vertices, labeled $1,\ldots,m$.  Then for $m\geq 2$,
$$\phi(\gamma_1,\ldots,\gamma_m)=\frac{1}{m!} \sum_{G_{{\rm conn},m}} \prod_{(i,j)\in G_{{\rm conn},m}} \zeta(\gamma_i,\gamma_j).$$

The required convergence condition $(*)$ is that there exist some positive real number $a(\gamma)$ for each $\gamma$ so
that
\be
\label{convcon}
\sum_{\gamma' \neq \gamma} |w(\gamma')| e^{a(\gamma')} \zeta(\gamma,\gamma')\leq a(\gamma).
\ee
Under this assumption, it is proven (see theorem 5.4 in \cite{friedli_velenik_2017}) that the expansion \cref{Fexpansion} converges absolutely, and that $Z=\exp[F]$.
Further, it is proven that 
\be
1+\sum_{k\geq 2} k \sum_{\gamma_2,\ldots,\gamma_k} |\phi(\gamma_1,\gamma_2,\ldots,\gamma_k)| \prod_{j=2}^k |w(\gamma_j)| \leq e^{a(\gamma_1)}
\ee
for all $\gamma_1$, and hence
\be
\label{oneterm}
w(\gamma_1)+\sum_{k\geq 2} k \sum_{\gamma_2,\ldots,\gamma_k} |\phi(\gamma_1,\gamma_2,\ldots,\gamma_k)| \prod_{j=1}^k |w(\gamma_j)| \leq e^{a(\gamma_1)} w(\gamma_1)
\ee
\cref{oneterm} bounds the sum of absolute values of terms in \cref{Fexpansion} which include polymer $\gamma_1$.

Indeed, \cref{convcon} does hold for sufficiently small $\kappa$ if we pick
\be
a(\gamma)=|\gamma|.
\ee
To show this,
we need a bound on the number of connected subsets of the anticommutation graph.  
Such subsets are called lattice animals.  Using standard bounds, the number of polymers $\gamma$ with cardinality $|\gamma|=W$ which include some given 
vertex of the anticommutation graph
is bounded by
$C (e d)^{W-1}$ for some constant $C$.
Any polymer $\gamma'$ which has non-vanishing $\zeta(\gamma,\gamma')$ must include some vertex in $\gamma$ or which neighbors $\gamma$.  There are at most $(d+1) |\gamma|$ such vertices.
Hence, the number of polymers $\gamma'$ with non-vanishing $\zeta(\gamma,\gamma')$ and
with given $|\gamma'|=W$ is bounded by $C (d+1) (e d)^{W-1} |\gamma|
\leq C' (e d)^{W}$ for some other constant $C'$,  while the weight of such polymers is bounded by $(\beta/(1-\beta^2))^W$, and of course
$\exp(a(\gamma'))=\exp(W)$.
So, the sum over $\gamma'$ in \cref{convcon} is
bounded by
$$\sum_{W\geq 1}C' \bigl(\frac{e^2d \beta^2}{1-\beta^2}\bigr)^{W}
 |\gamma|,
 $$
 and for
 sufficiently small $\kappa$ this inequality holds.
 Here we are assuming that $d\geq 1$ so that small $\kappa$ also implies $\beta^2<1$.
 
 Let us know see how rapidly \cref{Fexpansion} converges.
 Consider the sum of all terms in \cref{Fexpansion} which include some given polymer $\gamma$ and where the total cardinality of the polymers appearing in that is equal to some $\Wtot$.
Let $\kappa_0$ be such that the inequality holds for all $\kappa<\kappa_0$.
Then, for any given $\kappa<\kappa_0$, since the sum of these terms converges absolutely for $\kappa<\kappa_0$, indeed the sum is bounded by
$a(\gamma) w(\gamma) (\kappa/\kappa_0)^{\Wtot}$.

Hence, for any fixed ratio $\kappa/\kappa_0<1$, this sum is exponentially small in $\Wtot$, and so we may truncate up to terms of total cardinality of support $<\Wtot$ with exponentially small error.

In particular, the derivative of $F$ with respect to any coupling constant can be expressed as a sum of terms including polymers which include the given vertex, and so if we truncate the series for this at total cardinality of support $<\Wtot$ the error $\epsilon$ is bounded by
$C (\kappa/\kappa_0)^{\Wtot}$ for some constant $C$.
Similarly, the error $\epsilon$ in $F$ if we truncate the series at total cardinality of support $<\Wtot$ is bounded by
$C n (\kappa/\kappa_0)^{\Wtot}$.

\subsection{Complexity of Calculating The Truncation}
\label{compcalc}
We now consider the complexity of computing this truncated series, in particular the sum of all terms appearing in $F$ up to $\Wtot$.
First consider the complexity of computing the weight $w(\gamma)$ of some $\gamma$ with $|\gamma|=W$.
Recall that $w(\gamma)$
equals the sum $\beta^k \prod_{a=1}^k J_{i_a} \tr(O_{i_1} O_{i_2} \ldots O_{i_k})$
over \emph{all} sequences $i_1,\ldots,i_k$ such that each vertex in $\gamma$ appears at least once in the sequence and no vertices outside $\gamma$ appear in the sequence.  We could approximate $w(\gamma)$ by explicitly summing over such sequences, but the number of sequences grows factorially in $|\gamma|$.

A more efficient technique is as follows.   Identify the label $\gamma$ of a polymer with sets of vertices.  For any such set $\gamma$, let $$Z(\gamma,\beta)=\tr(\exp[\beta H(\gamma)]),$$ where $H(\gamma)=\sum_{i\in \gamma} J_i O_i$.
For any $\gamma$, in the Pauli case we can represent the operators $O_i$ for $i\in \gamma$ using $|\gamma|$ qubits and obtain the correct commutation/anticommutation relations; in the general case we must assume some representation of the operators on a bounded number of qudits exists.  Using this representation, we can compute $Z(\gamma,\beta)$ in time $\exp(O(|\gamma|))$ and up to an accuracy $\epsilon$, with the time scaling polylogarithmically in the error.

Of course, the series expansion of $Z(\gamma,\beta)$ includes terms where not every $i\in \gamma$ appears, meaning that the corresponding operator $O_i$ is not in the trace in the given term.
However, we can solve this using an inclusion-exclusion method.
We have
\be
w(\gamma)=\sum_{\gamma'\subset \gamma} (-1)^{|\gamma|-|\gamma'|} Z(\gamma',\beta).
\ee
Since there are only $2^{|\gamma|}$ subsets $\gamma'$ of $\gamma$, the total time to compute $w(\gamma)$ is 
$\exp(O(|\gamma|))$, again up to a polylogarithm in the error.

We now count the number of terms that appear in this truncated series for $F$.  Any term that appears involves some number of polymers, $\gamma_1,\ldots,\gamma_m$.
These polymers must be connected with each other in the following sense: there is some permutation $\pi$ so that $\gamma_{\pi(i)}$ neighbors $\gamma_{\pi(i+1)}$ for $i=1,\ldots,m-1$, where we say that two polymers neighbor each other if they intersect on a vertex or if some vertex in one neighbors some vertex in another.
Thus, these collection of polymers can be regarded as a lattice animal on an enlarged graph which consists of $m$ copies of the anticommutation graph where we additional add edges between each vertex $v$ in the $i$-th copy with the same vertex $v$ in the $(i+1)$-st copy as well as to all neighbors of $v$ in the $(i+1)$-st copy.  The vertices in the lattice on the enlarged graph in the $i$-th copy are simply those in $\gamma_{\pi(i)}$.
The degree of this enlarged graph is bounded by $3d+2$.
So, we can bound the number of such animals by
$n C (C' d)^W$
for some constants $C,C'$.

Truncating at $W<\Wtot$, and using the exponential complexity of computing each $w(\gamma)$, the complexity of evaluation is bounded by
$$O(n (C d )^{\Wtot})=O(n \exp(O(\log(d)) \Wtot)),$$ for some other constant $C$.

To obtain error $\epsilon$ in $F$ we need
$\Wtot\sim \log(n/\epsilon)/\log(\kappa^{-1}).$
Hence, the complexity is
$$O(n (n/\epsilon)^{O(\log(d)/\log(\kappa^{-1}))}).$$

For any fixed $d,\kappa$, this time is polynomial.

\section{NP-hardness at Large $\kappa$}
\label{sec:hardness}

We now show that approximating the partition function for sufficiently large $\kappa$ is NP-hard.  

We first construct a simple system that in some sense acts as a classical Ising spin: the thermal state of this Hamiltonian has two distinct ``basins".  In a particular representation of the operators $O_i$ that we will use (different from the representation in \cref{Oirep}), these basins will be either $X$ or $Z$ polarized for the spins.  These basins act as the state of this classical Ising spin.  By adjusting the strength of certain coupling terms in the Hamiltonian, we can make one or the other of the basins preferred, acting a magnetic field on this classical Ising spin. For the formal hardness proof, however, it suffices to use equal strengths and no magnetic fields.  Finally, we then show that by adding appropriate coupling terms between a pair of these systems we can effectively induce ferromagnetic or anti-ferromagnetic interactions between these effective Ising spins.
We can then build a system that mimics a classical Ising model by taking many such systems, adding interactions and magnetic fields in this way, and then approximating the partition function is NP-hard; the product of $\beta$ times the effective interaction strength of this effective classical Ising model will be large, growing with $d$ so that model can be in the low temperature regime.
The important point in this construction is that \emph {the effective Ising model can have redundancies}.

The simple system we construct may have independent interest as a quantum spin glass.  It shares some relation to the quantum random energy model\cite{goldschmidt1990solvable}.
The anticommutation graph will have two anticliques, i.e., two sets of operators, where operators in a given set commute with all others in that set, while operators are roughly equally likely to commute or anticommute if they are in different sets.
Frustration and glassiness, including properties such as hysteresis, arises from the anticommutation relation.
Hence, we call it the \emph{anticommutation glass}.
Our construction of the anticommutation glass uses randomness.  However, the randomness is simply a way to obtain some simple system with some useful properties. 
In \cref{sec:NP_formal}, we use an explicit deterministic family of invertible matrices with the same two-family structure to obtain a formal NP-hardness proof. Rather than assuming a decomposition of the physical thermal state into two basins, we will make this picture precise through upper and lower bounds on the full partition function.

The reader interested primarily in the formal hardness proof should read \cref{subsec:ag,subsec:cag} for the physical motivation before proceeding to \cref{sec:NP_formal}, but may skip \cref{subsec:se,subsec:ns,subsec:tm}.
\subsection{Anticommutation Glass}
\label{subsec:ag}
The anticommutation glass Hamiltonian is as follows.  We slightly change normalization and notation in this section.  We have $n$ qubits and $2n$ operators $O_i$, $i=1,\ldots,2n$, for some integer $n$.  
The degree of the anticommutation graph for this Hamiltonian will be bounded by $n$.
When we couple several such anticommutation glasses to produce an effective classical Ising Hamiltonian, the degree of the anticommutation graph will be proportional to $n$ times the degree of the interaction graph of that Ising Hamiltonian.

The first $n$ of these operators are Pauli $X$ operators on a single qubit:
$$1\leq i \leq n  \quad \rightarrow \quad O_i=X_i.$$
On the other hand, for the remaining operators we set
$$
O_{n+i}=D_i\equiv\prod_{j=1}^n Z_j^{A_{ij}},
\qquad 1\leq i\leq n,
$$
where $A$ is an $n$-by-$n$ invertible matrix over $\mathbb F_2$.
The Hamiltonian is
\be
H_A(J)=-\sum_{i=1}^n X_i-J\sum_{i=1}^n D_i,
\ee
with $J>0$.
That is, each operator $O_i$ for $i>n$ is a product of $Z_j$ over $j$ such that $A_{i-n,j}=1$.
In a change of normalization from previously, we will allow $J>1$ so long as it is still bounded by a constant; one can overall rescale $H$ and $\beta$ by this constant to return to the previous normalization.

We  refer to operators $O_i$ with $i\leq n$ as $X$-type operators, while those with $i>n$ are $Z$-type operators.
For $i>n$ and $j\leq n$, the operators $O_i,O_j$ anticommute whenever $A_{i-n,j}=1$.

We choose the matrix $A$ randomly, but require that it be invertible.  In the numerical simulations reported below, we effectively chose such an $A$ by starting with the identity matrix and then doing a very large number (large compared to $n$) of random row operations on the matrix.

The requirement that $A$ be invertible is equivalent to the requirement that there be no relations among the $O_i$.
Choosing $A$ in this random way means that for $i\leq n$ and $j>n$, the pair of operators $O_i$ and $O_j$ are roughly equally likely to commute or anticommute with each other.

Since the matrix $A$ is invertible, we could apply a sequence of CNOT gates to the qubits so that, conjugating by these gates, $O_i=Z_{i-n}$ for $n+1\leq i \leq 2n$.  However, after applying this sequence of gates, we have $O_i = \prod_{j} X_j^{A_{ji}} $ for $1\leq i \leq n$.
We could then follow with a Hadamard on all qubits, turning Pauli $X$ to Pauli $Z$.
Thus, statistically, there is no distinction between $X$-type and $Z$-type operators, other than of course the coupling constant $J$; by applying such a sequence of CNOTs followed by Hadamards, we can interchange $X$-type and $Z$-type operators up to replacing $A$ by its transpose which is statistically the same.
The symmetry between $X$-type and $Z$-type operators is an important difference between this and the quantum random energy model.

We claim that the physical properties of the system depend on $\kappa$ as follows, with these properties observed for almost all random instances.
Consider the expectation value of a given operator, either $X$-type or $Z$-type operator.  If we \emph{only had operators of the given type and not of the other type}, then the expectation value can be determined easily: since the Hamiltonian is redundancy-free, the expectation value would be
$\tanh(\beta)$ for an $X$-type operator and $\tanh(\beta J)$ for a $Z$-type operator.
Let the \emph{ratio} be defined to be the ratio between the observed expectation value and this quantity (either $\tanh(\beta)$ or $\tanh(\beta J)$).
The ratio clearly approaches $1$ as $\kappa\rightarrow 0$; we show below that for small $\kappa$, the ratio is decreasing with $\kappa$ for both types of operators.
However, for $J>1$, for $Z$-type operators, the ratio hits a minimum at some $\kappa$ and then increases, converging to $1$ at large $\kappa$.
At the same time, for $J>1$, for $X$-type operators, the ratio is decreasing for all $\kappa$, converging to $0$ at large $\kappa$.
Note, however, that the individual expectation values of $Z$-type operators are still small compared to $1$ at large $\kappa$, so long as $\beta \ll 1$.
So, it is not that the system is fully polarized in the $Z$-direction, but rather that the $X$-type operators become negligible for the thermal properties, and we have thermal fluctuations in the $Z$-basis about this polarized state.
Conversely, for $J<1$, the same phenomenon is observed with $X$- and $Z$-type interchanged: the ratio for the $Z$-type operators converges to $0$ at large $\kappa$ while that for $X$-type converges to $1$.

Further, we observe numerically some hysteresis.  That is, starting the path-integral Monte Carlo simulations below in a state polarized in the $Z$-direction but with $J<1$ means that it sometimes gets stuck, leaving the ratio for $Z$-type operators close to $1$ at large $\kappa$ and the ratio for $X$-type operators close to $0$, even though in equilibrium the ratios should display the opposite behavior.

For this reason, we expect that, for  large $\kappa$, $J$ close to $1$, and $n$ large, there will be two different states relevant for the thermodynamics, one state stable and one state only metastable.  Call these states $X$-type or $Z$-type, where the $X$-type state is
approximately described by ignoring the $Z$-type operators and considering the thermal fluctuations for the $X$-type operators, or vice-versa.
Indeed, representing the $X$-type operators as $X_i$, the $X$-type state is simply a product state of qubits in an $X$ magnetic field at finite $\beta$. 

Thus, the $X$-type and $Z$-type states of the anticommutation glass act as the two different states of the classical Ising spin system we will build, and $J-1$ acts as an effective magnetic field on this effective classical Ising spin, choosing one or the other state.

\subsection{Coupling Anticommutation Glasses and Effective Classical Ising Model}
\label{subsec:cag}
Now consider two such anticommutation glasses.  Let us show how to couple the effective Ising spins.  We represent the first using unprimed Pauli operators such as $X_i$ and the second using primed Pauli operators, such as $X'_i$.  We will additionally add a third set of $n$ ancilla qubits written with two primes; these will be added solely to ensure that the Hamiltonian we write is redundancy-free; the only terms in the Hamiltonian involving these ancilla qubits will be those in \cref{combineacg} or \cref{combineacgAF} below.
These two anticommutation glasses
may have different coupling constants $J,J'$.

We now add an interaction term to the Hamiltonian.
For a ferromagnetic interaction, take
\be
\label{combineacg}
H_{\rm coupling}
=-\widetilde J\sum_i
\left(X_iX_i'X_i''+D_iD_i'Z_i''\right),
\ee
where $\widetilde J\geq0$.
 If both anticommutation glasses are in the same type of state (either $X$-type or $Z$-type), then the
system can increase the log of its partition function by roughly $n \ln(\cosh(\beta \tilde J))$ as one can gain from thermal fluctuations in either the $X$ or $Z$ terms in $H_{\rm coupling}$, while little increase in log partition function is possible if the glasses are in different types of state.  Thus, such a term acts a ferromagnetic coupling.

Similarly, we can induce an effective anti-ferromagnetic coupling with
\be
\label{combineacgAF}
H_{\rm coupling}
=-\widetilde J\sum_i
\left(X_iD_i'X_i''+D_iX_i'Z_i''\right).
\ee
Now an interaction term agrees with the chosen families at both endpoints precisely when the two choices are different.

The ancillas are redundancy-free even though the two interaction types use the same ancilla. In any binary relation, its $X$ and $Z$ coordinates force the coefficients of both terms using that ancilla to vanish. Repeating this leaves only relations inside the original glasses, which do not occur because their matrices are invertible.

By combining then many anticommutation glasses, we can build an arbitrary effective Ising spin system.  We keep the degree of the interaction graph for this Ising spin system bounded by a constant, so that the degree of the anticommutation graph is only increased by a constant factor.
Note that the increase in log-partition function is roughly, for small $\beta \tilde J$, equal to $\frac{1}{2}n \beta^2 \tilde J^2$, and hence is proportional to $\kappa \tilde J^2$.  Thus, while our effective Ising spin system will be effectively at nonzero temperature, at large enough $\kappa$ we can take the temperature low enough that approximating the partition function is NP-hard.

In the construction above, we have been treating the $X$-type and $Z$-type states as some effective classical states.  Likely there is some effective quantum tunneling between these states, with that tunneling becoming small at large $\kappa$.  Thus, it might be possible to encode effective quantum Ising Hamiltonians using anticommutation glasses.  We leave this as an open question.

In the next three subsections, we justify the claims about the behavior of the anticommutation glass. Then, in \cref{sec:NP_formal}, we complete our argument with a formal NP-hardness proof.

\subsection{Series Expansion}
\label{subsec:se}
First consider small $\kappa$.  We series expand the expectation values $\langle O_i \rangle$ in $\beta$.  Consider a $Z$-type operator $O_i$.
We have
\begin{eqnarray}
\langle O_i \rangle &=& \tr(\exp(-\beta H))^{-1} \tr(\exp(-\beta H) O_i) \\ \nonumber
&=& \bigl(1+\frac{1}{2}\beta^2 n (1+J^2)+\ldots\bigr)^{-1} \bigl(\beta J+\frac{n-1}{2} \beta^3 J^3+\frac{1}{3!} \beta^3 J^3 +\frac{1}{3!} \beta^3 J \sum_{j=1}^n \tr(O_i O_j O_j O_i + O_i O_j O_i O_j + O_i O_i O_j O_j)+\ldots \bigr)
\end{eqnarray}
In the second line, the term $\frac{n-1}{2} \beta^3 J^3$ arises from expanding terms in $\exp(\beta H)$ where $O_i$ appears once and some different $Z$-type operator appears twice.
The trace $\tr(O_i O_j O_j O_i + O_i O_j O_i O_j + O_i O_i O_j O_j)$ will equal either $3$ or $1$ depending on whether $O_i,O_j$ commute or anticommute.
Summing this trace over $j$, since both cases are roughly equally likely, we get roughly $2n$.
Hence, we have
\begin{eqnarray}
\langle O_i \rangle & \approx& \bigl(1+\frac{1}{2}\beta^2 n (1+J^2)+\ldots\bigr)^{-1} \bigl(\beta J+\frac{n-1}{2} \beta^3 J^3+\frac{1}{3!} \beta^3 J^3 +\frac{1}{3!} \beta^3 J (2n)+\ldots \bigr) \\ \nonumber
&=&
\beta J-\frac{1}{3} \beta^3 J^3-\frac{1}{6} \beta^3 J n+\ldots
\end{eqnarray}
The terms $\beta J-(1/3)\beta^3J^3$ are the first two terms in the expansion of $\tanh(\beta J)$. The additional term arises from anticommutation. More precisely, if $k_i$ is the number of $X$-type generators anticommuting with $O_i$, the trace sum above is $3n-2k_i$, and for fixed finite $n$,
\be
\langle O_i\rangle
=\beta J-\frac13\beta^3J^3-\frac13\beta^3Jk_i+O(\beta^5).
\ee
For a dense row with $k_i\simeq n/2$, this gives the displayed approximation. Since $\kappa=\beta^2d$, the anticommutation correction is $-(k_i/3d)\beta J\kappa$, which is of order $\beta J\kappa$ when $k_i=\Theta(d)$. The term $-(1/3)\beta^3J^3$ is smaller at large $n$ and fixed $\kappa$.

This series expansion gives the claimed decrease in ratio for both $X$-type and $Z$-type operators; we have done the calculation only for $Z$-type but we see a decrease for both $J<1$ and $J>1$.

We could proceed to higher orders in the series expansion to explore larger $\beta$, but instead we use numerical simulation.

\subsection{Numerical Simulation}
\label{subsec:ns}
With the negative signs in $H_A(J)$, the Hamiltonian has no sign problem in the $Z$ basis.
We performed a path integral Monte Carlo.  We used $50$ time slices (more slices may improve the simulation).  

We started with all physical spins on all time slices intially equal to $+1$, so that each $D_i$ initially equals $+1$ and each $Z$-type Hamiltonian contribution $-JD_i$ equals $-J$.
We included Monte Carlo moves that flip a spin at a single time slice, and also
we added a type of nonlocal move to the simulation.  For each $i$, this move flips spins in the $Z$ basis so that the sign of operator the $Z$-type operator $O_{i+n}$ flips in expectation, but all other $Z$-type operators do not have their expectation value change.  We do this flip at \emph{all} time slices so that also the expectation value of the $X$-type operators does not change.  Indeed, for simplicity, we do not explicitly do this move by flipping spins, but rather we consider a Monte Carlo move that leaves the spin unchanged but flips the sign of the coefficient in front of the $Z$-type operators.

This move helps equilibration; if this move is not done, then it is very hard for the system to equilibrate the $Z$-type operators.  Indeed, for the given initial configuration, a single spin flip would change the sign of roughly half the $Z$-type operators, costing a large amount of energy, so that the move is likely to be rejected even at fairly high temperatures.  With this nonlocal move, we can flip a single operator at a time, giving correct behavior at high temperature.

We report simulations below on a single random instance for $n=50,100,200$.  Similar behavior was shown for other random instances.  However, we show in some cases multiple runs for the same instance to show issues with equilibrating the Monte Carlo.
We did $5\times 10^6$ Monte Carlo steps before measuring observables, and then did $5\times 10^6$ Monte Carlo steps to measure observables, averaging data over those steps.

In \cref{fig1} we plot the ratio for the sum of $Z$-type operators for $J=0.9,1.1$ for $n=50,100,200$ as a function of inverse temperature $\beta$.  The $x$-axis is scaled by a factor proportional to $1/\sqrt{n}$, and we see a good collapse of the curves for $n=50,100,200$ on top of each other for given $J$.
However, we observe one surprising effect: the $J=0.9,n=200$ ratio follows the $J=0.9,n=50,100$ ratios at smaller $\beta$, but then at large $\beta$ it jumps up close to $1$.  We believe that this is a result of a failure to equilibrate the quantum Monte Carlo simulation: the initial configuration is well-polarized in the $Z$-direction and while equilibrium should have to a smaller expectation value of $Z$-type operators combined with a larger expectation value of $X$-type operators (which in this kind of quantum Monte Carlo simulation translates to a larger  \emph{entropy} at the cost of a higher \emph{energy}), the system can get stuck in the $Z$-polarized basin and be unable to explore the full space.

\begin{figure}
\includegraphics[width=0.5\textwidth]{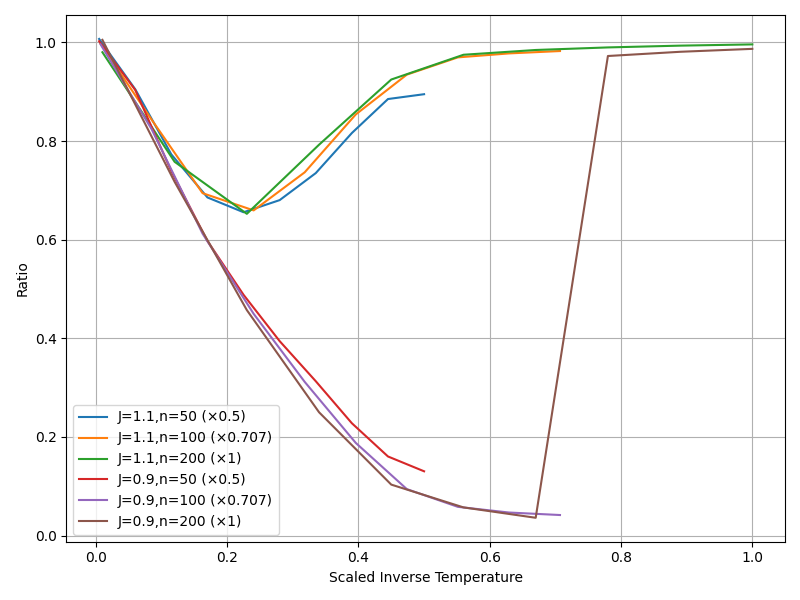}
\caption{Plot of ratio against inverse temperature for $J=0.9,1.1$, and $n=50,100,200$.  Inverse temperature is multiplied by a scale factor, given in parenthesis in the legend.  The scale factor is proportional to $1/\sqrt{n}$.}
\label{fig1}
\end{figure}

In \cref{fig2}, we plot multiple independent Monte Carlo runs for $J=0.9,n=200$.  We see that this jump in ratio appears in all of them, but it appears at slightly different inverse temperature in different ones.  In sum cases the ratio does not jump fully to $1$; this may be due to the system leaving the $Z$-polarized basin partway through the simulation, as observables are plotted averaged over the run after the initial warmup.

\begin{figure}
\includegraphics[width=0.5\textwidth]{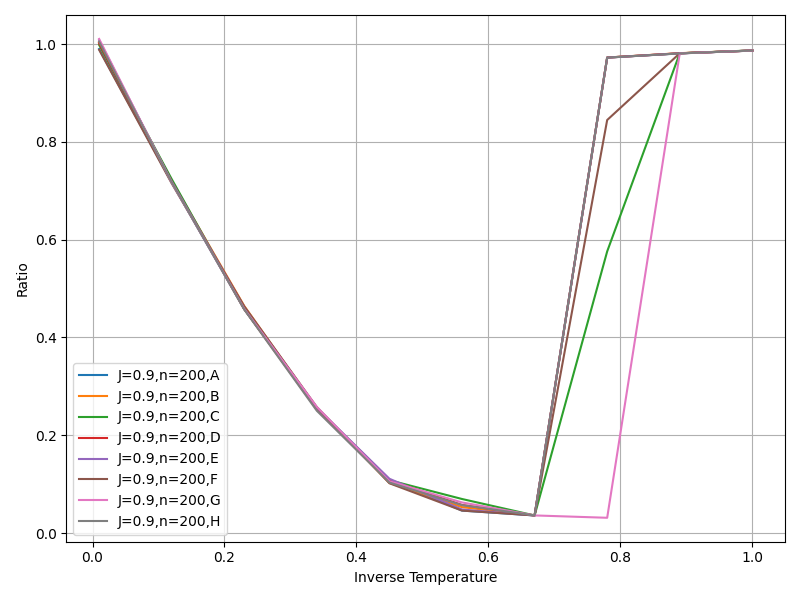}
\caption{Plot of ratio against inverse temperature for $J=0.9,n=200$.  Multiple Monte Carlo runs with the same random realization.}
\label{fig2}
\end{figure}

In \cref{fig3}, we show that failure to equilibrate can occur even for $J>1$ for this initial condition.  We report multiple independent Monte Carlo runs with the same random realization.  All but one of the runs follow a smooth curve, but one run has the ratio jump to a smaller value for one value of inverse temperature, so failure to equilibrate is much less frequent.  

\begin{figure}
\includegraphics[width=0.5\textwidth]{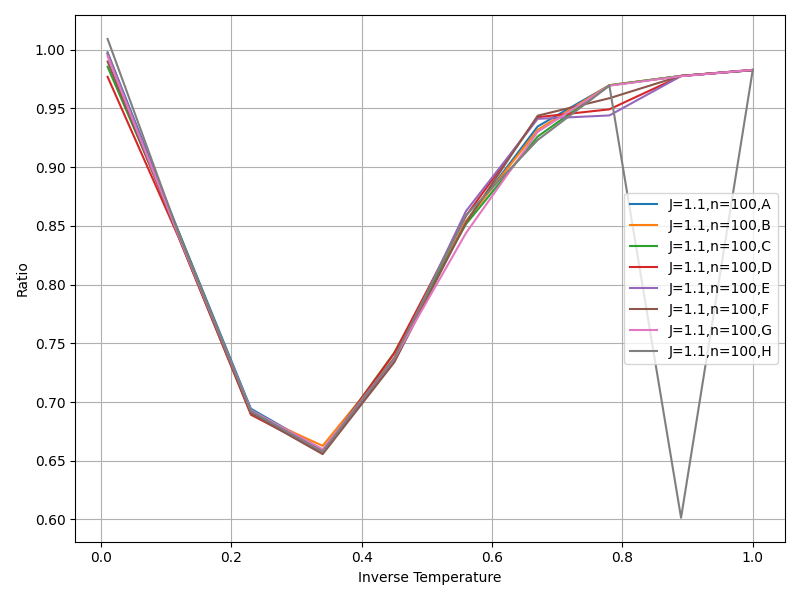}
\caption{Plot of ratio against inverse temperature for $J=1.1,n=100$.  Multiple Monte Carlo runs with the same random realization.}
\label{fig3}
\end{figure}

We leave it as an open problem to improve the equilibration of the Monte Carlo in this kind of model.  We did simulations independently at different temperature.  Using a simulation at a given temperature as a warmup for a nearby temperature may help.  Alternatively, one may try parallel tempering in both temperature and $J$.

\subsection{Toy Model}
\label{subsec:tm}
Finally, we consider the behavior in a toy analytic model. 
Let us first neglect the $X$-type terms and consider only $Z$-type terms.
Consider a typical thermal state in the $Z$-basis of this Hamiltonian with only $Z$-type terms, and call this state $|0\rangle$.  Let $E=\langle0|-J\sum_iD_i|0\rangle$ be the energy of the $Z$-type part in this state. Its thermal mean is $-nJ\tanh(\beta J)$, so typically $E\simeq-\beta J^2n$ when $\beta J\ll1$ and $\beta J\sqrt n\gg1$. Since $d=\Theta(n)$, its magnitude is then of order $J^2\sqrt\kappa\sqrt n$.

We now consider the $X$-type terms. For $i=1,\ldots,n$, let $|i\rangle=O_i |0\rangle$.  
Each of these states has some expectation value for the sum of $Z$-type operators, but since a single spin flip will typically flip about half of the $Z$-type terms, this expectation value can be approximated by just considering the expectation value in a random state, and hence the expectation value is typically proportional to $\sqrt{n}$.  Hence, for large enough $\kappa$, this expectation value is much smaller than $-E$ for most of these states $|i\rangle$.
Hence, considering just the states $|0\rangle,|i\rangle$, we have a Hamiltonian with $n+1$ basis states, a diagonal term on $|0\rangle$ that is proportional to $-\beta J^2 n$ and hopping to all the other basis states $|i\rangle$ with hopping proportional to $1$.  Ignoring the diagonal terms on those states $|i\rangle$, we can consider only the two basis states $|0\rangle$ and $(1/\sqrt{n}) \sum_i |i\rangle$, with hopping proportional to $\sqrt{n}$ between them.

Thus the collective hopping has magnitude $\sqrt n$, but is small compared with the diagonal splitting $|E|$: their ratio is of order $1/(\beta J^2\sqrt n)$, and hence of order $\kappa^{-1/2}$ for fixed $J>0$ and $d=\Theta(n)$.
\subsection{Formal proof}
\label{sec:NP_formal}

Place an anticommutation glass at each vertex of a cubic graph, using the interaction \cref{combineacgAF} with $J=1$. Choosing one of the two operator families at each vertex defines a cut. We will show that the log partition function records the size of a maximum cut, up to a bounded error per vertex. Keeping a commuting subset of the Hamiltonian gives the lower bound; the upper bound comes from the thermal series.

\subsubsection{Construction}

Let $G=(V,E)$ be a simple cubic graph, with $N=|V|$ vertices and $M=|E|=3N/2$ edges. At each vertex $v$, place a block of $n$ qubits with Hamiltonian
\be
D_{v,i}=\prod_{j=1}^n Z_{v,j}^{(A_v)_{ij}},
\qquad H_v=-\sum_{i=1}^n(X_{v,i}+D_{v,i}),
\label{eq:glass_block}
\ee
where the invertible binary matrix $A_v$ will be chosen below. Thus each block has two commuting families, which we label $X$ and $Z$; the label $Z$ refers to the operators $D_{v,i}$.

For each edge $e=uv$, choose an orientation to fix the order of $u,v$, introduce $n$ ancilla qubits $c_{e,i}$, and add
\be
V_e=-\widetilde J\sum_{i=1}^n
\left(X_{u,i}D_{v,i}X_{c_{e,i}}+D_{u,i}X_{v,i}Z_{c_{e,i}}\right),
\qquad
H_G=\sum_{v\in V}H_v+\sum_{e\in E}V_e.
\label{eq:glass_coupling}
\ee
The two terms in parentheses give the two interaction types $(X,Z)$ and $(Z,X)$. If the endpoint blocks choose different families, one term agrees with both choices. If they choose the same family, neither term does.

For the matrices, take $n=4^s\geq16$ and label the qubits and generators within each block by $\mathbb F=\mathbb F_n$. Properly color $G$ with four colors, which is possible by a greedy coloring since the maximum degree is three. Assign a distinct nonzero field element to each color, and let $\alpha_v$ be the element assigned to the color of $v$. Set
\be
A_v(x,y)=\delta_{xy}+\operatorname{tr}_{\mathbb F}(\alpha_vxy),
\label{eq:glass_matrices}
\ee
where $\operatorname{tr}_{\mathbb F}$ is the field trace to $\mathbb F_2$ and addition is over $\mathbb F_2$. These matrices are symmetric and satisfy $A_v^2=I$. Every row with nonzero label has weight $n/2\pm1$. The four choices give blocks related by permutations of their qubits and generators. The sign estimates are proved in Appendix~\ref{app:glass_finite}.

The ancillas ensure that adding the couplings has not introduced relations. Indeed, in any product proportional to the identity, the $X$ and $Z$ exponents on each ancilla must both be even. Hence every coupling term occurs an even number of times. The remaining onsite generators are independent because each $A_v$ is invertible. Thus $H_G$ is redundancy-free.

An onsite term has at most $4n$ anticommuting neighbors. An edge term has at most $2n$ such neighbors among onsite terms, $4n$ on other incident edges, and $n$ in the other term of its own edge. On the other hand, a nonzero-labeled row of any $A_v$ already gives at least $n/2-1$ anticommuting neighbors. Therefore the maximum anticommutation degree satisfies
\be
\frac n2-1\leq d\leq7n.
\label{eq:glass_degree}
\ee

Fix positive integers $t,b$ with $b\geq t$, and set
\be
\beta=\frac b{\sqrt n},
\qquad \widetilde J=\frac tb,
\qquad K=\frac{t^2}{2}.
\label{eq:glass_scales}
\ee
All coefficients then have magnitude at most one, and $\beta=\Theta(d^{-1/2})$ for fixed $b$. The quantity $K$ will be the gain in the log partition function from one cut edge. We use the normalized partition function
\be
\mathcal Z_{n,G}=2^{-n(N+M)}\operatorname{Tr}e^{-\beta H_G}.
\label{eq:glass_normalization}
\ee
\begin{theorem}
\label{thm:glass_hardness}
There are positive integers $t\leq b$ and $n_0$ with the following property. Fix any power of four $n\geq n_0$. Then, given a cubic graph $G$ on $N$ vertices, approximating $\mathcal Z_{n,G}(\beta)$ within a factor of two is NP-hard at inverse temperature $\beta=\Theta(d^{-1/2})$, with constants independent of $G$. The corresponding Hamiltonians $H_G$ are redundancy-free, have coefficients of magnitude at most one, and satisfy \cref{eq:glass_degree}. The block size $n$ is independent of the graph size $N$.
\end{theorem}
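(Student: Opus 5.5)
We reduce from Max-Cut on cubic graphs, which is NP-hard (indeed APX-hard), so there is a constant $c_0>0$ such that distinguishing cubic graphs with maximum cut at least $m_1$ from those with maximum cut at most $m_1-c_0N$ is NP-hard. We aim to show
\be
N(\Phi_0-C_1)+K\cdot\mathrm{MaxCut}(G)\;\leq\;\ln\mathcal Z_{n,G}\;\leq\;N(\Phi_0+C_2)+K\cdot\mathrm{MaxCut}(G),
\ee
where $\Phi_0$ is an explicitly computable per-block quantity, $K=t^2/2$ as in \cref{eq:glass_scales}, and $C_1,C_2$ depend only on $b$, $t$ and $n$, not on $G$. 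We then choose $t$ so that $K c_0>C_1+C_2$. This is possible only if the error constants grow more slowly in $t$ than $t^2$. Hence the real target is error terms depending on $b$ but bounded uniformly once $b\geq t$ is fixed, with $b$ chosen large relative to $t$. A factor-two approximation changes $\ln\mathcal Z$ by only $\ln 2$, so for large $N$ it distinguishes the two cases.

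\textbf{Lower bound.} Take any cut $\sigma\in\{X,Z\}^V$ with a maximum number of cut edges. Keep the onsite family chosen by $\sigma$ at each vertex, together with, on each cut edge, the single coupling family agreeing with both endpoints. These operators pairwise commute: the coupling term $X_{u,i}D_{v,i}X_{c}$ commutes with $X_{u,\cdot}$, with $D_{v,\cdot}$, and with the other kept edges, since the ancilla $X$ or $Z$ choice is fixed per edge and the blocks commute. They are also independent, so their thermal state is a product of free terms. Use the Peierls--Bogoliubov/Gibbs variational inequality with this commuting Hamiltonian $H_\sigma$:
\be
\ln\operatorname{Tr}e^{-\beta H_G}\geq\ln\operatorname{Tr}e^{-\beta H_\sigma}-\beta\langle H_G-H_\sigma\rangle_\sigma .
\ee
The omitted terms are either the other family at some vertex, or couplings involving it, or the wrong coupling family. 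Each is a Pauli word anticommuting with, or independent of, the kept set, so its expectation in the product state vanishes provided it is not in the group generated by the kept terms, which holds by redundancy-freeness. This gives $\ln\mathcal Z\geq Nn\ln\cosh\beta+M_{\rm cut}\,n\ln\cosh(\beta\widetilde J)$. Since $n\ln\cosh(\beta\widetilde J)=t^2/2+O(t^4/n)$, we get $\Phi_0=n\ln\cosh\beta\approx b^2/2$ and $C_1=O(t^4/n)$.

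\textbf{Upper bound.} This is the main obstacle. I would expand $\mathcal Z$ using the trace/polymer structure of \cref{sec:polymer}, but organized by blocks. For each vertex the key input, proved in the paper's Appendix~\ref{app:glass_finite} via the sign estimates for the matrices \cref{eq:glass_matrices}, is that in any trace word mixing the $X$ and $Z$ families of one block, anticommutation signs cause cancellation. Using the $\pm$ weight $n/2\pm1$ rows and the field-trace structure (Gauss-sum-like character estimates), I would show that the single-block partition function satisfies $\ln\operatorname{Tr}e^{-\beta H_v}\leq\ln\bigl(2\cosh^n\beta\bigr)+O(1)$. That is, it behaves like the sum of the two pure-family contributions, a two-basin picture with bounded error depending on $b$. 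I would then insert the coupling terms by expanding $e^{-\beta H_G}$ in the couplings: each coupling term must appear an even number of times because of the ancillas. Per edge, the total weight is dominated by configurations whose endpoint blocks are in compatible basins, giving a factor $e^{K}$ for cut edges and $1+O(\cdot)$ otherwise. Combining these across the bounded-degree graph, via a cluster or Hölder-type argument over edges with $b$ large so that cross-basin terms are suppressed, yields a sum over $\sigma$ of $e^{N\Phi_0+K\,\mathrm{cut}(\sigma)+O(N)}$, bounded by $2^N e^{N\Phi_0+K\cdot\mathrm{MaxCut}+O(N)}$. Controlling these per-vertex cross terms uniformly, with constants independent of $t$, is where I expect the hard work. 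If needed, I would fix $n_0$ large so that the $O(t^4/n)$ and finite-$n$ sign errors are negligible.
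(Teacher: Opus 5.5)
Your lower bound is correct. The retained generators commute and are independent, so expanding $e^{-\beta H_\sigma}$ as $\prod_k(\cosh h_k+P_k\sinh h_k)$ and using redundancy-freeness shows that every omitted generator has zero expectation. The Gibbs--Bogoliubov inequality then gives exactly the paper's bound, which the paper obtains instead from evenness and convexity of $g(x)=\log\operatorname{Tr}e^{Q+xP}$. The reduction bookkeeping is also fine. The genuine gap is the upper bound, which is only sketched, and the sketch would not work as stated.

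\textbf{The tools you name do not apply.} The polymer expansion of Section~3 converges only at small $\kappa$. Here $\kappa=\beta^2d\geq b^2(1/2-1/n)$ and $b$ must be large, so there is no convergent cluster expansion to organize by blocks.

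\textbf{The cancellation you invoke is overstated.} Words mixing the two families of a block do not cancel in general; for example, $\operatorname{tr}(X_iX_iD_jD_j)=1$. Averaging the microscopic labels only kills \emph{crossings} between opposite families. What survives is the positive chord model $F_G$, in which a block can carry both families provided their chords do not cross. So the trace is not literally a sum over two basins, and your ``sum over $\sigma$'' has no meaning until a phase is defined inside the expansion. The paper takes the family of the largest crossing-connected component as the phase.

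\textbf{A single-block bound is not enough.} Even a sharp bound on $\operatorname{Tr}e^{-\beta H_v}$ says nothing about how much room a block in one phase leaves for coupling terms of the other family. The paper needs a screening estimate for this: a wrong-family coupling chord must have both endpoints in one gap of the giant component, so $\mathbb E_{\mu_b^\sigma}\phi\leq\delta_b\to0$ by the Flajolet--Noy theorem. With onsite diagrams fixed and edge--edge constraints dropped, the edges factorize, and $\mathbb Ee^{3K\phi_v}\leq1+(e^{3K}-1)\delta_b$. This is precisely the ``$1+O(\cdot)$'' you assert without a mechanism.

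\textbf{The order of parameter choices fails.} You allow the block error to be ``$O(1)$ depending on $b$'', yet you also need $b$ large, chosen after $t$, to suppress cross-basin terms. Since $t$ is fixed first so that $\tfrac32K\gamma$ beats the per-vertex error, that error must split into three parts:
\begin{itemize}
\item a constant uniform in both $b$ and $t$ (the paper's $\log B_0=16$);
\item a term that vanishes as $b\to\infty$ at fixed $K$ (the paper's $\varepsilon_b$);
\item a term that vanishes as $n\to\infty$ at fixed $b,K$.
\end{itemize}
Character sums alone do not give uniformity in $b$. After cancellation, the single block is still weighted by $\mathbb E\,2^{C_k}$, where $C_k$ is the number of crossing components. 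The trivial bound $C_k\leq k$ gives only $F_b\leq e^{b^2}$, which is an error of $b^2/2$ per vertex and is fatal once $b$ is sent to infinity. The paper needs $\sup_k\mathbb E\,2^{C_k}<\infty$, proved by Lagrange inversion. The finite-$n$ step also uses specific structural inputs your sketch does not identify:
\begin{itemize}
\item distinct $\alpha_v$ on adjacent blocks, so that $\alpha_u+\alpha_v\neq0$ and the character sum at an incompatible crossing is $O(1/n)$;
\item a ten-coloring of $G^2$, so that cancellations at different incompatible vertices can be multiplied.
\end{itemize}
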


\subsubsection{A commuting lower bound}

First choose a family $\sigma_v\in\{X,Z\}$ at each vertex. Retain only that onsite family. On every edge whose endpoints choose different families, also retain the one coupling term agreeing with both choices. Remove all other terms.

Every retained term commutes with every other. At a block, all retained operators belong to the same commuting family, and each ancilla occurs in at most one retained term. The retained generators are also independent, since they form a subset of the original generators. Consequently their normalized partition function is a product of coshes:
\be
\mathcal Z_\sigma=(\cosh\beta)^{nN}
(\cosh(\beta\widetilde J))^{n\operatorname{cut}_G(\sigma)},
\label{eq:glass_commuting_partition}
\ee
where $\operatorname{cut}_G(\sigma)$ counts edges with different endpoint labels. To see the factorization, expand each exponential as $e^{hP}=\cosh h+P\sinh h$. Independence makes the trace of every nonempty product of distinct generators vanish, leaving only the product of coshes.

Removing a term from a redundancy-free Hamiltonian cannot increase its partition function. Indeed, fix all other couplings and write $g(x)=\log\operatorname{Tr}e^{Q+xP}$, where $P$ is the term being removed. Every word with an odd number of appearances of $P$ has zero trace, so $g$ is even. It is also convex, and hence $g(x)\geq g(0)$. Applying this one term at a time, and choosing a maximum cut, gives
\be
\log\mathcal Z_{n,G}\geq
nN\log\cosh\left(\frac b{\sqrt n}\right)
+n\log\cosh\left(\frac t{\sqrt n}\right)\operatorname{MAXCUT}(G).
\label{eq:glass_lower_exact}
\ee
To bound the finite-$n$ error, use
\be
0\leq\frac{x^2}{2}-\log\cosh x\leq\frac{x^4}{12}.
\label{eq:glass_cosh_bound}
\ee
For $x\geq0$, this follows by integrating $x-x^3/3\leq\tanh x\leq x$; the general case follows by evenness. Since $\operatorname{MAXCUT}(G)\leq3N/2$, we obtain
\be
\log\mathcal Z_{n,G}\geq\frac{Nb^2}{2}+K\operatorname{MAXCUT}(G)
-N\left(\frac{b^4}{12n}+\frac{t^4}{8n}\right).
\label{eq:glass_lower}
\ee
\subsubsection{A positive chord model}

Expand the trace in the individual Pauli terms, now keeping words from both families. Redundancy-freeness removes every word in which some generator appears an odd number of times. In each remaining word, pair the appearances of each generator, place the positions of the word around a circle, and join paired positions by chords.

The value of the word is determined by crossings. For example, $PQPQ$ equals $I$ or $-I$ according to whether $P$ and $Q$ commute or anticommute; its two chords cross. The word $PPQQ$ gives $I$, and its chords do not cross. More generally, bringing the two operators in each pair together gives one commutation sign for every crossing of chords. Summing over the microscopic generator indices will largely cancel diagrams with crossings between opposite families. We first study the positive model in which these crossings are forbidden, and then control the finite-size error in this replacement.

In the comparison model, a chord joins two independent uniform points on a circle of circumference one. An onsite chord records its vertex and its family, $X$ or $Z$. An edge chord records its edge and its interaction type, $(X,Z)$ or $(Z,X)$. It uses the same two circle coordinates at both endpoint blocks, since the two factors belong to a single Hamiltonian term. We call each choice of vertex and family, or of edge and interaction type, a species.

Each onsite species has intensity $b^2/2$, and each edge species has intensity $K$. These are the second-order weights after summing the microscopic index: there are $n$ choices, giving $n\beta^2/2=b^2/2$ onsite and $n(\beta\widetilde J)^2/2=t^2/2=K$ on an edge. More precisely, a species of intensity $a$ assigns the unnormalized Poisson weight
\be
\frac{a^k}{k!}\prod_{j=1}^k ds_j\,dt_j
\label{eq:glass_chord_measure}
\ee
to $k$ chords with endpoints $s_j,t_j$. Its total mass is $e^a$. Let $\nu$ be the product measure over all species. For a diagram $\mathcal D$, let $I_v(\mathcal D)$ be one if no two crossing chords have opposite family labels at $v$, and zero otherwise. Define
\be
F_G(b,K)=\int\prod_{v\in V}I_v(\mathcal D)\,d\nu(\mathcal D).
\label{eq:glass_positive_model}
\ee
Consider first one isolated block. Form the chord-intersection graph, with one vertex per chord and an edge between chords that cross. Compatibility requires every connected component to have a single family label. We use the label of a largest component as the phase $\sigma$ of the block. If there is a tie, choose a component using the cyclic endpoint order rooted at a uniformly selected endpoint, independently of the family labels and endpoint spacings. Give the empty diagram a fair auxiliary phase. This convention preserves the symmetry between $X$ and $Z$.

Let $F_b$ be the isolated-block partition function. Each phase has weight $F_b/2$; write $\mu_b^\sigma$ for the corresponding normalized conditional law. For an onsite diagram of phase $\sigma$, let $\phi$ be the probability that a fresh uniform chord crosses none of the onsite chords with family $\sigma$. A coupling chord using the opposite family can be compatible only on this part of its endpoint space. Thus a bound on $\phi$ will bound the gain from a interaction that disagrees with the phase of a block.

\begin{lemma}
\label{lem:glass_onsite}
There is an absolute constant $B_0$ and a  function $\delta_b\to0$ as $b\to\infty$ such that
\be
F_b\leq B_0e^{b^2/2},
\qquad \mathbb E_{\mu_b^\sigma}\phi\leq\delta_b.
\label{eq:glass_onsite_bounds}
\ee
One may take $B_0=e^{16}$. The second estimate holds for both phase choices.
\end{lemma}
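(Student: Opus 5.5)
The plan is to compute $F_b$ by integrating out one family exactly. Write $a=b^2/2$. In an isolated block the two species are independent Poisson chord processes of intensity $a$. Condition on the $X$-chords $\mathcal D_X$ and let $\phi_X=\phi(\mathcal D_X)$ be the probability that a fresh uniform chord crosses no $X$-chord. The $Z$-chords form a Poisson process of total mass $e^{a}$. Compatibility requires every $Z$-chord to lie in the ``allowed'' set, which has measure $\phi_X$. The mass of compatible $Z$-configurations is therefore exactly $e^{a\phi_X}$. Hence
\[
F_b=\int e^{a\phi(\mathcal D_X)}\,d\nu_X(\mathcal D_X)=e^{a}\,\mathbb E\bigl[e^{-a(1-\phi_X)}\bigr]\,e^{a}\cdot e^{-a}\cdot e^{a},
\]
that is, $F_b=e^{a}\,\mathbb E_{X}[e^{a\phi_X}]$. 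Here the expectation is over a Poisson($a$) number of independent uniform chords. The first claim thus reduces to showing that $\mathbb E[e^{a\phi_X}]$ is bounded by an absolute constant $B_0$, uniformly in $a$. The same identity with the roles of the families swapped shows that the two phases are symmetric, as the tie-breaking convention is designed to ensure.

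The key geometric step is a tail bound $\Pr[\phi_X\geq u]\leq C e^{-c\,a u}$ with $c>1$, or at least $\Pr[\phi_X\geq u]\leq C e^{-(1+\eta)au}$ for $u$ above a threshold of order $1/a$. A fresh chord avoids all $X$-chords iff both its endpoints lie on the boundary arcs of a single face of the chord arrangement. Thus $\phi_X=\sum_f L_f^2\leq\max_f L_f$, where $L_f$ is the total circle length on the boundary of face $f$. A face carrying boundary length $L$ is a set $S$ of arcs of total length $L$ such that no $X$-chord connects two different points of $S$ in a crossing manner. For a fixed candidate $S$, each Poisson chord independently violates this with probability at least $2\cdot(\text{measure of } S)\cdot(\text{measure of the complement of } S\text{ separating it})$, of order $L\cdot \min(L,1-L)$ or better. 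This gives a factor $e^{-a\,c(L)}$. I would discretize the circle at mesh $1/a$ (or $u/10$) to get a union bound over the combinatorial types of $S$. The number of faces with $j$ arcs is controlled by the number of endpoints, which is Poisson with mean $2a$, so the entropy cost is at most exponential in the number of arcs. This must be beaten by the exponent.

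Integrating the tail against $e^{au}$ then gives $\mathbb E[e^{a\phi_X}]\leq 1+\int a e^{au}\Pr[\phi_X\geq u]\,du\leq B_0$. Tracking constants should yield something like $e^{16}$. This tail estimate is the main obstacle. The naive bound $\phi_X\le\max_f L_f$ loses the square, and I would first try to use $\phi_X=\sum_f L_f^2$ directly together with the fact that $\sum_f L_f=1$-type constraints (each arc borders exactly one face). It may be necessary to treat small $k_X$ (where $\phi_X$ is close to $1$ but the Poisson probability $e^{-a}a^{k}/k!$ compensates) separately from $k_X\gtrsim a/2$.

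For the second estimate, fix the phase $\sigma=X$ by symmetry. The conditional law $\mu_b^X$ has density proportional to $e^{a\phi_X}$ times the $Z$-configuration weight, restricted to configurations whose largest component is $X$. The phase event only helps, since it excludes configurations with few $X$-chords. Then
\[
\mathbb E_{\mu_b^X}\phi\leq \frac{2}{F_b}\,e^{a}\,\mathbb E_X\bigl[\phi_X e^{a\phi_X}\mathbf 1_{\sigma=X}\bigr].
\]
Lower bound $F_b\geq e^{a}$ by the configurations with no $Z$-chords. Split the remaining expectation as $\phi_X\leq\delta$ plus $\phi_X>\delta$. The first part contributes at most $\delta\,\mathbb E[e^{a\phi_X}]\leq \delta B_0$. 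The second part is at most $\int_{\delta}^1 a e^{au}\Pr[\phi_X\geq u]\,du$, which tends to $0$ as $a\to\infty$ for fixed $\delta$ by the same tail bound. One subtlety needs care: on the phase-$X$ event, small $k_X$ with large $\phi_X$ must be suppressed. The requirement that the $X$ component be the largest forces $k_X$ to be comparable to the $Z$ count, which is the mechanism making $\phi_X$ small. Choosing $\delta=\delta(b)\to0$ slowly then gives $\delta_b\to0$.
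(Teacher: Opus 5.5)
Your exact integration over one family is correct. Given the $X$-chords, each $Z$-chord must lie in a set of measure $\phi_X$, and the $Z$-chords do not constrain one another, so $F_b=e^{a}\,\mathbb{E}_X[e^{a\phi_X}]$ with $a=b^2/2$. Your intermediate display is garbled, but this identity is right, and it is equivalent to the paper's $F_b=\sum_k\frac{a^k}{k!}\mathbb{E}\,2^{C_k}$.

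\textbf{First estimate.} Everything then rests on a tail bound that is false and whose intended proof has no slack. Since $\phi_X=1$ exactly when there are no $X$-chords, $\Pr[\phi_X\geq 1]=e^{-a}$, so no bound of the form $Ce^{-(1+\eta)au}$ can hold near $u=1$. More generally, a fresh chord crosses a chord of short arc $m_i$ with probability at most $2m_i$, and each $2m_i$ is uniform on $[0,1]$. Hence for $1-u=x/a$ one gets $\Pr[\phi_X\geq u]\geq e^{-a}\sum_k x^k/(k!)^2=c_x e^{-au}$. Because $\mathbb{E}_X e^{a\phi_X}=e^{-a}F_b$ is itself only of order one, you would need the tail matched to $e^{-au}$ up to integrable $O(1)$ factors over the whole range of $u$. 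A mesh-$1/a$ union bound over candidate faces with $m$ arcs instead pays an entropy factor of order $\binom{a}{2m}$. Such an argument naturally yields $F_b\leq e^{a+o(a)}$, not $B_0e^{a}$. Your guess that tracking constants gives about $e^{16}$ is unsupported. In the paper the constant comes from an exact enumeration: the component decomposition $\mathcal{A}(z,u)=1+u\,\mathcal{C}(z\mathcal{A}(z,u)^2)$ and Lagrange inversion give $\Pr(C_k=r)\leq 8^{r-1}/r!$, hence $\sup_k\mathbb{E}\,2^{C_k}\leq(e^{16}-1)/8$.

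\textbf{Second estimate.} This part fails concretely. The indicator $\mathbf{1}_{\sigma=X}$ depends on the $Z$-chords you integrated out, so in your display it must be dropped, and then the bound cannot tend to zero. The $k_X=0$ term alone contributes $e^{-a}\cdot e^{a}=1$ to $\mathbb{E}_X[\phi_X e^{a\phi_X}]$, so your right-hand side is at least $2e^{a}/F_b\geq 2/B_0$ for every $b$. Likewise $\int_\delta^1 a e^{au}\Pr[\phi_X\geq u]\,du\geq 1-e^{-a(1-\delta)}$ from the atom at $\phi_X=1$ alone. The phase-$Z$ configurations carry half of $F_b$ and all have $\phi_X$ close to one. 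So no estimate on $\phi_X$ under the unconditioned Poisson law can give $\delta_b\to0$.

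What makes $\phi$ small in phase $\sigma$ is connectivity of the selected family, not chord counts. Your proposal contains none of the required steps:
\begin{itemize}
\item Under the weighted law the chord number $k$ is large, since $1\leq\mathbb{E}\,2^{C_k}\leq B_0$ and $F_b\geq e^{a}$.
\item A uniform chord diagram is one giant crossing component plus a tight number $r$ of other chords, by Flajolet--Noy. This transfers to the $2^{C_k}$-weighted law by Cauchy--Schwarz and $\sup_k\mathbb{E}\,4^{C_k}<\infty$.
\item Conditioning on the phase leaves the uncolored pairing and the Dirichlet spacings unchanged.
\item A chord avoiding a connected component must have both endpoints in one of its gaps. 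The squared gap lengths have conditional mean at most $(2r+2)/(2k+1)$.
\end{itemize}
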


At large $b$, a typical onsite diagram has a crossing-connected component containing most of its chords. A chord avoiding this component must put both endpoints in the same gap between consecutive endpoints of the component, and these gaps are small. Appendix~\ref{app:glass_onsite} proves the lemma using the large-component theorem for uniform chord diagrams~\cite{acan2017chord}.

\subsubsection{The upper bound}

Fix all onsite diagrams and their phases, and drop compatibility conditions between edge chords. This can only increase the integral. The edge chords then interact only with the fixed onsite diagrams, so we can sum them independently on each edge.

Suppose first that $\sigma_u\neq\sigma_v$. One interaction term agrees with the phases at both endpoints. Its allowed endpoint area is at most one. The other interaction term disagrees at both endpoints, so its allowed area is at most $\min(\phi_u,\phi_v)$, and in particular at most $\phi_u+\phi_v$. If instead $\sigma_u=\sigma_v$, each term disagrees at one endpoint, and the sum of their allowed areas is again at most $\phi_u+\phi_v$.

A species of intensity $K$ restricted to an endpoint region of area $a$ has total weight $\sum_{k\geq0}(Ka)^k/k!=e^{Ka}$. Hence the two interaction types on an edge together contribute at most
\be
\exp\left[K\mathbf1\{\sigma_u\neq\sigma_v\}
+K(\phi_u+\phi_v)\right].
\ee
Let $W_G(\sigma)$ be the contribution to $F_G$ with all phases fixed to $\sigma$. Multiplying the edge bounds and integrating the independent onsite laws gives
\be
W_G(\sigma)\leq(F_b/2)^N e^{K\operatorname{cut}_G(\sigma)}
\prod_{v\in V}\mathbb E_{\mu_b^{\sigma_v}}
 e^{K\deg(v)\phi_v}.
\label{eq:glass_phase_upper}
\ee
Each $\phi_v$ appears once for every edge incident to $v$. Since $0\leq\phi_v\leq1$ and $\deg(v)=3$, convexity gives
\be
\mathbb E e^{3K\phi_v}
\leq1+(e^{3K}-1)\delta_b
\leq e^{\varepsilon_b},
\qquad \varepsilon_b=(e^{3K}-1)\delta_b.
\label{eq:glass_screening_error}
\ee
There are $2^N$ phase assignments, and every cut has size at most $\operatorname{MAXCUT}(G)$. Summing \cref{eq:glass_phase_upper} therefore gives
\[
F_G(b,K)\leq F_b^N
 e^{K\operatorname{MAXCUT}(G)+N\varepsilon_b}.
\]
The factor $2^N$ from the sum cancels the factor $2^{-N}$ in the fixed-phase onsite weights. Applying \cref{lem:glass_onsite} now bounds $\log F_G$ by $Nb^2/2+K\operatorname{MAXCUT}(G)+N(\log B_0+\varepsilon_b)$.

The finite-size comparison is given by the following lemma.

\begin{lemma}
\label{lem:glass_finite_upper}
For the matrices in \cref{eq:glass_matrices} and fixed $b,K$, there is a constant $C_{b,K}$, independent of $G,N,n$, such that
\be
\mathcal Z_{n,G}\leq F_G(b,K)
\left(1+C_{b,K}n^{-1/10}\right)^N
\label{eq:glass_finite_upper}
\ee
for every simple graph of maximum degree at most three and every power of four $n\geq16$.
\end{lemma}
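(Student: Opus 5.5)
We plan to compare the exact normalized trace with the chord model term by term. Expanding $e^{-\beta H_G}$ in the Pauli generators, every surviving word in $\mathcal Z_{n,G}$ has each generator appearing an even number of times. First we restrict to words where each generator appears at most twice, so each appearance pair is a single chord. Words with higher multiplicity contribute an extra factor $\beta^2=b^2/n$ per repeated generator, and summing over the $n$ choices of which generator repeats leaves a relative error of order $O(1/n)$ per block. This is absorbed into the $(1+C n^{-1/10})^N$ factor after grouping by blocks. For two-fold words, we use the Dyson / time-ordered form $\operatorname{Tr}e^{-\beta H}=\sum_k\int_{0\le s_1<\dots<s_k\le1}\ldots$. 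Summing over orderings of the $k$ Poisson-distributed chords then reproduces exactly the measure \cref{eq:glass_chord_measure} with intensities $n\beta^2/2=b^2/2$ and $n(\beta\widetilde J)^2/2=K$, decorated by the sign $\prod_{\text{crossings}}(\pm1)$ determined by commutation of the crossing pair.

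Next we average out the microscopic indices. For a fixed chord diagram with species labels, the sum over generator indices $x\in\mathbb F_n$ of the chords factorizes over blocks. Crossings between chords of the same family contribute $+1$. Crossings between an $X$-chord with index $x$ and a $Z$-chord with index $y$ at block $v$ contribute $(-1)^{A_v(y,x)}$. Edge chords carry the same index at both endpoint blocks and at the ancilla, and the ancilla crossings between the two interaction types contribute a sign as well. We bound the absolute value of the index-averaged sign. If there is an opposite-family crossing, we pick one such crossing whose index $x$ is not otherwise constrained and sum over $x$. Using $A_v(x,y)=\delta_{xy}+\operatorname{tr}(\alpha_v xy)$, the sum $\sum_x(-1)^{\operatorname{tr}(\alpha_v x\cdot s)}$ is a character sum. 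It vanishes unless the linear form $s$ (the sum of the crossing partners' labels) is zero, and the diagonal $\delta$ corrections occur only on coinciding indices, costing $1/n$. Hence each diagram with at least one opposite-family crossing at $v$ has index average bounded by $O(c/n)$, where $c$ counts coincidences. Diagrams with $I_v=1$ everywhere have average at most $1$ (in fact exactly $+1$ when no opposite-family crossing occurs). This gives $\mathcal Z_{n,G}\le\int\prod_v\bigl(I_v+\eta_v\bigr)d\nu$, with $\eta_v$ a nonnegative error supported on diagrams violating compatibility at $v$.

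The main obstacle is controlling $\int\prod_v(I_v+\eta_v)\,d\nu$ against $F_G$, since the error must be multiplicative per block, uniformly in $N$. We would bound $\eta_v$ by $\min(1,\,P_v(k_v)/n)$, where $P_v$ is a polynomial in the number $k_v$ of chords touching block $v$, so that $\eta_v\le C k_v^{10}n^{-1}\wedge1\le C' k_v n^{-1/10}$. Then we would expand the product over vertices and use that, under $\nu$, the chord counts at different blocks are Poisson with bounded intensities ($b^2$ plus $3K\cdot2$), independent except through shared edge chords. Each edge chord touches only two blocks and the graph has degree at most three, so a Hölder/tree-of-dependencies argument bounds $\mathbb E_\nu\prod_v(1+C'k_vn^{-1/10}/I\text{-normalization})$. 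The remaining difficulty is that $F_G$ carries the constraint $I_v$ while the error terms do not. To handle this we would compare against the unconstrained mass $e^{Nb^2+2MK}$ only at the error blocks, using $F_b\ge c_b e^{b^2/2}$ from the lower bound \cref{eq:glass_commuting_partition} at a single block, and absorb the ratio into $C_{b,K}$. The exponent $1/10$ would be set by balancing the $k^{10}$ polynomial moments against the $1/n$ character-sum savings.
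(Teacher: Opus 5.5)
Your overall architecture (chord expansion, a character sum at each incompatible vertex, expansion over the incompatible set $S$, then freeing the species near $S$ and absorbing their mass into $C_{b,K}^{|S|}$) matches the paper's. The step that produces a product over vertices, however, does not work. A character sum over the labels of one opposite-family crossing saves a single factor $1/n$ for the whole diagram. Your inequality $\mathcal Z_{n,G}\le\int\prod_v(I_v+\eta_v)\,d\nu$ with $\eta_v=O(\mathrm{poly}(k_v)/n)$ needs the pointwise bound $|w_n(\mathcal D)|\le\prod_{v\in S(\mathcal D)}\eta_v$, and that bound is false. Take an edge $uv$ and two crossing chords of opposite types $(X,Z)$ and $(Z,X)$, with labels $x,y$. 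They make both $u$ and $v$ incompatible. Their sign is $(-1)^{\delta_{xy}}\chi((\alpha_u+\alpha_v)xy)$, whose label average is exactly $1/n$, not $O(n^{-2})$.

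The same example shows that your single-block sum $\sum_x(-1)^{\operatorname{tr}(\alpha_v x s)}$ overlooks edge chords that cross at both endpoints. There the coefficients add, and you get cancellation only because adjacent blocks carry distinct $\alpha$'s; with $\alpha_u=\alpha_v$ the sign is $+1$ off the diagonal.

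What is missing is a decoupling argument. The paper selects one crossing pair per incompatible vertex, but uses simultaneously only vertices with pairwise disjoint closed neighborhoods. Their pairs then have disjoint block supports, so conditional on all other labels the averages factor, each contributing $L_v/n$ with $L_v=3+4\sum_{u\in N_G[v]}R_u$. Since $G^2$ has maximum degree $9$, a ten-coloring and the geometric mean of the ten color-class bounds give $|w_n(\mathcal D)|\le n^{-|S(\mathcal D)|/10}\prod_{v\in S(\mathcal D)}L_v^{1/10}$. That is the origin of the exponent $1/10$. Your interpolation $\min(1,Ck^{10}/n)\le C^{1/10}kn^{-1/10}$ is reverse-engineered from the statement and supplies no product structure.

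Your first step has a related gap. Words in which some generator appears four or more times carry signs. Bounding them in absolute value compares them with the unconstrained mass $e^{Nb^2+2KM}$, which exceeds $F_G$ by a factor exponential in $N$, so they are not a relative $O(1/n)$ error per block with respect to $F_G$. The paper never discards them. Every even word is written as a chord diagram whose chords carry independent uniform labels, weighted by the compensating factor $\Xi=\prod_p1/(2m_p-1)!!$. Compatible diagrams then have sign exactly $+1$ for every labeling, including coincident ones, so $0\le w_n\le 1$ and their total is at most $F_G$. Label coincidences need to be controlled only at incompatible vertices, where their probability $O\bigl(\sum_{u\in N_G[v]}R_u/n\bigr)$ is absorbed into $L_v/n$.

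Finally, your comparison with $F_G$ does not need $F_b$, and $F_G$ does not factor over blocks anyway. Once the species touching $W=N_G[S]$ are integrated freely, the remaining positive model on $G[V\setminus W]$ is at most $F_G$, because it is exactly the zero-chord term of those species in $F_G$.
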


Appendix~\ref{app:glass_finite} proves the lemma by averaging the microscopic labels at incompatible crossings and removing constraints near those crossings. Combining it with the bound on $F_G$ gives
\be
\begin{split}
\log\mathcal Z_{n,G}\leq{}&\frac{Nb^2}{2}+K\operatorname{MAXCUT}(G)\\
&+N\left[\log B_0+\varepsilon_b+
\log(1+C_{b,K}n^{-1/10})\right].
\end{split}
\label{eq:glass_upper}
\ee
\subsubsection{The reduction}

There are fixed constants $a_+>a_-$ for which distinguishing
\be
\operatorname{MAXCUT}(G)\geq a_+M
\qquad\hbox{from}\qquad
\operatorname{MAXCUT}(G)\leq a_-M
\label{eq:glass_gap_promise}
\ee
on cubic graphs is NP-hard~\cite{berman2001amplifiers}. Put $\gamma=a_+-a_->0$. The gap in the desired contribution to $\log\mathcal Z_{n,G}$ is $K\gamma M=(3/2)K\gamma N$. We first make this large by choosing an integer $t$ such that, with $K=t^2/2$,
\be
\frac32K\gamma>\log B_0+5.
\label{eq:glass_choose_K}
\ee

Increasing $K$ also increases the factor multiplying $\delta_b$ in \cref{eq:glass_screening_error}. We therefore choose $b$ after fixing $K$: take an integer $b\geq t$ for which $\varepsilon_b<1$. Finally choose $n_0\geq16$ large enough that, for all $n\geq n_0$,
\be
\frac{b^4}{12n}+\frac{t^4}{8n}\leq2,
\qquad \log(1+C_{b,K}n^{-1/10})\leq1.
\label{eq:glass_choose_n}
\ee
These constants are independent of the graph size $N$.

For any power of four $n\geq n_0$, \cref{eq:glass_lower,eq:glass_upper} give
\be
\begin{split}
\frac{Nb^2}{2}+K\operatorname{MAXCUT}(G)-2N
&\leq\log\mathcal Z_{n,G},\\
\log\mathcal Z_{n,G}
&\leq\frac{Nb^2}{2}+K\operatorname{MAXCUT}(G)
 +(\log B_0+2)N.
\end{split}
\label{eq:glass_sandwich}
\ee
Thus the lower bound for a YES instance exceeds the upper bound for a NO instance by
\be
K\gamma M-(\log B_0+4)N>N.
\ee
A factor-two approximation changes the logarithm by at most $\log2$. It therefore distinguishes the two cases for all sufficiently large $N$; the finitely many smaller cases can be handled directly.

There are $n(N+M)$ physical qubits and $2n(N+M)$ terms. For fixed admissible $n$, both counts are linear in $N$, and the construction takes polynomial time. This proves \cref{thm:glass_hardness}.

\section{Correlations and entanglement}
\Cref{sec:polymer,sec:QSP} show that redundancy-free Hamiltonians
are algorithmically tractable whenever $\kappa=\beta^2 d$ is small.
This regime extends to inverse temperatures $\beta\sim 1/\sqrt d$, quadratically
beyond the scale $\beta\sim 1/d$ at which generic Hamiltonians of the same
degree remain tractable.

In this section we quantify the strength of  correlations and entanglement in redundancy-free Hamiltonians. These Hamiltonians exhibit genuine quantum properties, even in the regime where they are tractable: just like general Hamiltonians, we show that redundancy-free Hamiltonians can be entangled already at inverse temperature $\beta = O(1/d).$ In \cref{sec:QMA_hard}, we further prove that their ground state problem is $\mathsf{QMA}$-complete. 

Interestingly, we show that correlations in redundancy-free Hamiltonians are \emph{more} localized than for generic Hamiltonians:  Specifically, their decay is quadratically stronger in their dependence on the inverse-temperature $\beta$. We also show a similar quadratically stronger bound on the mutual information between different regions. 

For comparison with results on general
Hamiltonians we will sometimes use the \emph{interaction degree} $d_{\mathrm{int}}$, the maximum
number of terms whose support intersects the support of a given term. Since two
Pauli operators with disjoint supports commute, $d\leq d_{int}$.

\subsection{Decay of Correlations}
\label{corrdecay2}

General Hamiltonians with bounded interaction-degree $d_{\rm int}$ are known to exhibit exponential decay of correlations for sufficiently small $\beta d_{\rm int}$, due to Ref.~\onlinecite{Kliesch2014Locality}. For a fixed local observable $O_A$ and a distant observable $O_C$, the usual high-temperature bounds take the form
\be
\left|
\langle O_A O_C\rangle_{\rho_\beta}
-
\langle O_A\rangle_{\rho_\beta}
\langle O_C\rangle_{\rho_\beta}
\right|
\leq
K_A \|O_C\| e^{-\ell/\xi_{\rm gen}},
\ee
where $\ell$ measures the separation in interaction steps and $K_A$ is a local prefactor. At sufficiently high temperatures, the decay factor $e^{-1/\xi_{\rm gen}}$ can be bounded by $O(\beta d_{\rm int})$. The reason for this scale is that a connected sequence of interactions joining the two observables contributes a factor of order $\beta$ for each interaction, while the number of possible sequences grows with the interaction-degree.

For redundancy-free Hamiltonians in the canonical realization \eqref{Oirep}, we instead obtain decay controlled by $\kappa=\beta^2d$, where $d$ is the degree of the anticommutation graph. Thus, exponential decay persists up to inverse temperatures $\beta=O(1/\sqrt d)$, even though, as shown in the previous subsection, the thermal state may already be entangled at $\beta=O(1/d)$. The improvement comes from the same even-occurrence property that controls the partition function: every generator contributing to a correlation through the intervening region must occur at least twice. Its contribution is therefore of order $\beta^2$ rather than $\beta$.

We will prove a version of this statement for the thermofield double, rather than only for the thermal state. In particular, the distant observable may act on both the left and right copies, and we impose no restriction on the size of its support. This distinction is useful for state preparation, where we need to control the correlation of a local Hamiltonian term with an entire distant region, not just with another local observable. Restricting the distant observable to the left copy gives the corresponding statement for the thermal state.

To state the result, fix a term $O_i$ and let $A$ be its support in the canonical realization. Let $C$ be a distant set of sites and let $B$ contain the remaining sites, separating $A$ from $C$. Distances are measured in the anticommutation graph, whose vertices label the physical qubits in this realization. We say that the buffer has width $\ell$ if every path from $A$ to $C$ contains at least $\ell$ intermediate vertices in $B$. Each region includes its corresponding sites in both copies of the thermofield double.

\begin{lemma}[Decay of correlations]
\label{lem:rfree_correlation_decay}
Let $H=\sum_j J_j O_j$ be a redundancy-free Hamiltonian in the canonical realization \eqref{Oirep}, with $|J_j|\leq 1$ and anticommutation-graph degree $d\geq 1$. There are constants $C_0,C_1,\kappa_0>0$ such that, whenever $\kappa=\beta^2d\leq\kappa_0$, with $C_1\kappa_0<1$, we have
\be
\label{eq:rfree_correlation_decay}
\left|
\left\langle
\left(O_{i,L}-\langle O_{i,L}\rangle\right)O_C
\right\rangle
\right|
\leq
C_0(C_1\kappa)^\ell \|O_C\| .
\ee
Here the expectation values are in the normalized thermofield double, and $O_C$ is any operator supported on the left and right copies of $C$. The constants are independent of the system size, the buffer width, and the size of the support of $O_C$.
\end{lemma}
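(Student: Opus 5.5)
Write the thermofield double as $|\psitf\rangle=|e^{-\beta H/2}\rangle/\sqrt{Z}$. Then any expectation value has the form
\[
\langle O\rangle=\tr\bigl(e^{-\beta H/2}\,\cdot\,e^{-\beta H/2}\,\cdot\bigr)/Z ,
\]
where the left-copy factor of $O$ acts by multiplication and the right-copy factor acts by transpose multiplication on the opposite side. Expand both exponentials $e^{-\beta H/2}$ in words of generators, as in \cref{sec:polymer}. The resulting series for $Z\langle O_{i,L}O_C\rangle$ is a sum over pairs of words $(w_1,w_2)$, with weights $\prod(\beta J)/k!$, of traces of the form $\tr(w_1 O_{i}\, w_2\, \widetilde O_C)$. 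Here $\widetilde O_C$ is the operator on $C$ obtained from $O_C$; it is a sum over Paulis supported on $C$ in both copies.

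The key structural step is to organize these terms into a polymer gas \emph{with two special insertions}, and to compare it with the ordinary polymer gas for $Z$. Expand $O_C$ in the Pauli basis on $C$ (left and right). In the canonical realization a Pauli $P$ supported on $C$ can be nontrivial in the trace only if it is the product of generators $O_j$ appearing an odd number of times. Since $O_j$ acts on qubit $j$ via $X_j$ and on its neighbors via $Z$, this forces each odd-multiplicity generator to lie in $C$ or adjacent to it. The same holds at $A$ for $O_i$. Every generator not in $C\cup N(C)\cup\{i\}$ must appear an even number of times, exactly as before.

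Group the generators into connected components of the anticommutation graph. Then $\langle O_{i,L}O_C\rangle$ factorizes as a ratio of polymer sums in which the component containing $i$ carries a modified weight $w_A(\gamma)$. Likewise the components touching $C$ carry weights depending on $O_C$. The subtraction of $\langle O_{i,L}\rangle\langle O_C\rangle$ cancels exactly those configurations where the polymer containing $i$ is disjoint from, and non-adjacent to, everything touching $C$. Via the cluster expansion \cref{Fexpansion}, applied to the ratio of two polymer partition functions (numerator with the $C$-insertion, denominator $Z$), the connected correlator is then bounded by the sum over clusters of polymers linking $i$ to the neighborhood of $C$.

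Each such cluster must contain a chain of polymers whose supports cover at least $\ell$ vertices of $B$. By the evenness property, each of these vertices contributes weight at most $\beta^2/(1-\beta^2)$. The counting is as in the convergence argument after \cref{convcon}: lattice animals through a vertex number at most $C(ed)^{W-1}$. Using \cref{oneterm} with $a(\gamma)=|\gamma|$, the sum over clusters of total support $W\geq \ell$ containing the special polymer at $i$ is bounded by $C_0(C_1\kappa)^{\ell}$. The special weight at $i$ is bounded by $\beta$ times a constant, and the $C$-end contributes only $\|O_C\|$.

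The main obstacle is the $C$ end. Because the support of $O_C$ is unbounded, we cannot pay a factor per Pauli in its expansion, nor a combinatorial factor per vertex of $C$. To avoid this, I would not expand $O_C$. Instead I would treat the whole region $C\cup N(C)$ as a single ``boundary'' and condition on it. Define $H=H_{\rm near}+H_{\rm far}$, where the far part consists of generators at distance at least $1$ from $C$. Only clusters of far generators are expanded into polymers. The remaining generators are kept exactly inside a boundary trace, and that trace is bounded by $\|O_C\|$ times the boundary partition function via $|\tr(XY)|\leq\|X\|\,\tr|Y|$ with positivity of the Gibbs factors. The ratio of boundary terms then cancels between numerator and denominator. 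What remains to show is that the polymer activities, conditioned on the boundary, still satisfy \cref{convcon} uniformly; the evenness argument still applies to generators in $B$, so I expect this to hold.
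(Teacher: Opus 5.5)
There is a genuine gap. It sits at the step you flag as the main obstacle, and there is a second one in the cancellation step.

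\emph{The boundary conditioning does not work as described.} You propose to expand only the far generators and keep the generators near $C$ exactly inside a positive ``boundary trace.'' But far generators adjacent to $C\cup N(C)$ anticommute with near generators, so $e^{-\beta H/2}$ does not split into a near factor times a far factor. Once far words are interleaved with near ones, the boundary factor is no longer a positive Gibbs operator, and $|\tr(XY)|\le\|X\|\tr|Y|$ does not yield $\|O_C\|$ times a boundary partition function. To restore factorization you would have to merge every far polymer touching the near region into the boundary. The weight of that enlarged boundary is then fixed by a ``must appear'' (inclusion--exclusion) condition, which destroys positivity and costs a factor per boundary vertex. That is exactly the non-uniformity in $|C|$ you wanted to avoid. ``Conditioning on the boundary'' is a classical device with no direct analogue here, so uniform validity of \cref{convcon} for conditioned activities is not a routine check. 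It is the whole difficulty, and you leave it as an expectation.

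\emph{The cancellation has no mechanism.} You assert that subtracting $\langle O_{i,L}\rangle\langle O_C\rangle$ cancels exactly the unlinked configurations. A ratio of polymer sums does not give this by itself. The usual route, adding $O_C$ as a source and differentiating $\log Z$, is unavailable because $O_C$ has unbounded support. This is precisely why the paper declines to use the polymer expansion for this lemma.

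The paper avoids both problems by reversing the roles you assign to the two ends. It doubles the state and writes the covariance as $\langle(O_{i,(1,L)}-O_{i,(2,L)})O_{C,1}\rangle$. It expands $e^{-\beta(H_1+H_2)/2}$ in words of $O_{j,1}+O_{j,2}$ and groups terms by the animal $a(T)$ grown from $A$. Terms whose animal does not reach $C$ cancel by the symmetry between the copies. For a fixed animal $a$ reaching $C$, the surviving terms factor as words on $a$ times $e^{-\beta H'/2}$, where $H'$ omits terms in or adjacent to $a$. That unexpanded factor is positive and contributes only a factor controlled by $Z'/Z\le 1$. This holds because deleting terms from a redundancy-free Hamiltonian cannot increase the partition function (the even, convex $g$ argument of \cref{sec:NP_formal}). $O_C$, together with everything else in $C$, is never expanded; it enters only through $\|O_C\|$ when the linked part of the state is bounded in norm. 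Evenness on $B$ gives $O(\beta^2)^{|a|}$, there are $O(d)^{|a|}$ animals, and $|a|\ge\ell$, so the result is $O(\kappa)^\ell$ with no polymer-gas convergence criterion needed. What your proposal is missing is the replica cancellation, and the idea of expanding only from $A$ while controlling everything else, including all of $C$, by $Z'\le Z$.
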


One may first think that the polymer expansion could be used to prove this as we might add a term proportional to $O_C$ to the Hamiltonian.  However, since we have no bound on the support of $O_C$, this method would perhaps run into issues, so we will use a different method.

We take two copies of the thermofield double, calling those copies $1,2$.  So, there are a total of four Hilbert spaces: $(1,L),(1,R),(2,L),(2,R)$.
Two copies of the thermofield double state is the same as
the thermofield double state of the Hamiltonian $H_1+H_2$.
We have
\be
\label{twocopy}
\langle (O_{i,L} - \langle O_{i,L} \rangle ) O_C \rangle=\langle (O_{i,(1,L)}-O_{i,(2,L)}) O_{C,1}\rangle,
\ee
where $O_{C,1}$ denotes $O_C$ acting on copy $1$.

Expand the right-hand side of \cref{twocopy} in a series in $H$ by writing 
the two copies of the thermofield double state
$Z^{-2} \exp[-\beta (H_{1,L}+H_{2,L})/2] \Psi_{ME,1}\otimes \Psi_{ME,2}$,
and expanding 
$\exp[-\beta (H_{1,L}+H_{2,L}]$ as a series in $H$ (do not expand $Z$), and then further expanding that series in $H$ by writing powers of $H$ as sums of products of operators.  Here the operators in the product will be $O_{i,(1,L)}+O_{i,(2,L)}$, i.e., the sum of the given operator on both copy $1$ and copy $2$.
The expectation value vanishes unless every operator supported in $B$ appears an even number of times in this expansion, so we may drop terms where it appears an odd number of times.

For each term $T$, consider the set containing all $i$ which are operators in that term $T$ as well as all sites in $A$.  Let $a(T)$ be the connected component containing $A$.

Note that any term $T$ for which $a(T)$ does not include any sites in $C$  gives a vanishing contribution to the expectation value of 
$(O_{i,(1,L)}-O_{i,(2,L)}) O_{C,1}$ because the contribution to the expectation of
$O_{i,(1,L)} O_{C,1}$ cancels the contribution to the expectation of $O_{i,(2,L)} O_{C,1}$.  This is why we introduced these two copies, $1,2$, to make cancellation between the product of the expectation value and the expectation value of the product more explicit.

We now bound the sum, over terms $T$ such that $a(T)$ does include some site in $C$, of the contribution to the expectation value.
Interchange the order of summation, so we sum over animals $a$ containing $A$ and also some site in $C$, and then sum over all terms $T$ such that $a(T)=a$, i.e., such every site in $a\setminus A$ appears in the term, and so that no sites neighboring $a$ appear in the term.

Consider given $a$.  Let $H'$ be the sum of terms in the Hamiltonian not including any site in $a$ or neighboring $a$.
Let $Z'$ be the partition function for $H'$.
In this paragraph, when we say norm, we refer to the $\ell_2$ norm of a state given by acting with some sum of terms on $\Psi_{ME}$.
The norm of the sum of terms including no sites in $a$ or neighboring $a$ is equal to $(Z'/Z)^2$, which for redundancy-free Hamiltonians is $\leq 1$.
So, the norm of the sum of terms in which each site $i$ in $a$ appears some number of times $n_i$ is bounded by by $$(Z'/Z)^2 \prod_{i\in a} (1/n_i!) (\beta J_i)^{n_i} \leq \prod_{i\in a} (\beta^{n_i}/n_i!).$$
Restricting to terms with $n_i$ even and positive, we can sum over $n_i$ and for $\beta<1$, the sum is bounded by
$O(\beta^2)^{|a|}$.

 We must have $|a|\geq \ell$.
Since the number of lattice animals of cardinality $|a|$ is bounded by $O(d)^{|a|}$, for sufficiently small $\kappa$
the sum of $O(\beta^2)^{|a|}$ over all $a$ is bounded by $O(\kappa)^{\ell}$.

Remark: a very similar method can show decay of correlations for general Hamiltonians for sufficiently small $\beta d$ on an interaction graph of degree $d$ as defined in the next section; we now only have that $Z'\leq Z \exp(O(\beta) d|a|)$ as removing terms from $H$ to define $H'$ may increase the partition function in this case, but this is sufficient.
\subsection{Separability}
General Hamiltonians with bounded interaction-degree $d$ are known to be separable below an inverse temperature $\beta = O(d_{int}^{-1})$ due to REF. Likewise, there exist generic methods for preparing the TFD on at least lattices (details) up to a temperature scale  $\beta = O(1/d_{int})$ in polynomial-time, whereas our quantum state preparation protocol achieves $\beta = O(1/\sqrt d)$. 
One may therefore ask if this is the case simply because redundancy-free Hamiltonians are separable up to $\beta =  O(1/\sqrt d)$ and only start to show entanglement beyond this point. 

We first give the following simple general construction shows that this is not true. 
Start with any Hamiltonian that is entangled already at some $\beta =  O(1/ d)$. For concreteness, we use a simple bipartite antiferromagnetic Heisenberg Hamiltonian. Take two sets $A = {a_1, ... , a_r}$ and $B = {b_1, ... , b_r}$ of $r$ qubits and define the Hamiltonian \begin{equation}H_r = \sum_{a \in A, b \in B} ( X_a X_b + Y_a Y_b + Z_a Z_b). \end{equation} Its  partition function on the sector 
$S_A=S_B=r/2$ is \begin{equation}
Q_r(\beta)=\sum_{\ell=0}(2 \ell+1) e^{-2 \beta \ell(\ell+1)} .
\end{equation}
The unique spin-zero state is maximally entangled between two $(r+1)$-dimensional spaces, and its thermal population is $1/Q_r(\beta)$. A PPT state can have overlap at most $1/(r+1)$ with this singlet, so
\begin{equation}
Q_r(\beta)<r+1 \quad \Longrightarrow \quad \rho_\beta\left(H_r\right) \text { is NPT across } A \mid B .
\end{equation} Hence this Hamiltonian is entangled already at inverse temperature $\beta = 1/2r = O( 1/d )$.

The above construction is 2-local but not redundancy-free. To remove redundancies, we simply multiply each term with a Pauli $Z$ operator on a fresh ancilla. The resulting redundancy-free Hamiltonian is \begin{equation}
\widetilde{H}_r=\sum_{a, b}\left(X_a X_b Z_{c_{a b, x}}+Y_a Y_b Z_{c_{a b, y}}+Z_a Z_b Z_{c_{a b, z}}\right) .
\end{equation} If the original Hamiltonian $H_r$ fails a PPT test, so will the new Hamiltonian $\widetilde{H}_r$, as its thermal state is an incoherent mixture of different states, depending on the sign of the ancillas $Z$. Taking the partial transpose preserves this block decomposition, so its negative eigenvalue survives in the full state. Hence we have constructed a $3$-local, redundancy-free Hamiltonian that is not separable already at temperature $O(1/d)$. The above construction does not exaclty agree with the . One example is the canonical realization \eqref{Oirep} of the complete bipartite anticommutation graph $K_{d,d}$ with a left copy $L$ and a right copy $R$, ordered with all $L$ vertices before all $R$ vertices 
\begin{equation}
O_i=X_i \quad(i \in L), \quad O_j=X_j \prod_{i \in L} Z_i \quad(j \in R) .
\end{equation} We mention the ordering since the canonical realization depends on it. For simplicity, take $d$ even and all coupling strengths equal to one. The operators $X_j$ with $j\in R$ commute with the Hamiltonian. Restricting to the sector where all these operators have eigenvalue $+1$, and applying a Hadamard to each qubit in $L$, gives the conditional Hamiltonian
\begin{equation}
G_d=\sum_{i=1}^d Z_i+dX_1\cdots X_d .
\end{equation}
Each computational-basis string and its bitwise complement span an invariant subspace. On the subspace spanned by $|0^d\rangle$ and $|1^d\rangle$, the Hamiltonian is $d(Z+X)$, so the corresponding off-diagonal entry of $e^{-\beta G_d}$ is $-\sinh(\sqrt{2}\beta d)/\sqrt{2}$. For strings with equal numbers of zeros and ones, the diagonal entries are instead $\cosh(\beta d)$.

Split $L=L_1\cup L_2$ into two sets of $d/2$ qubits. Partial transposition on $L_1$ moves the off-diagonal entry above to the pair of strings $|1^{d/2}0^{d/2}\rangle$ and $|0^{d/2}1^{d/2}\rangle$. Hence the partially transposed, unnormalized Gibbs operator contains the block
\begin{equation}
\begin{pmatrix}
\cosh(\beta d) & -\sinh(\sqrt{2}\beta d)/\sqrt{2}\\
-\sinh(\sqrt{2}\beta d)/\sqrt{2} & \cosh(\beta d)
\end{pmatrix}.
\end{equation}
Its smaller eigenvalue is
\begin{equation}
\cosh(\beta d)-\frac{1}{\sqrt{2}}\sinh(\sqrt{2}\beta d),
\end{equation}
which is negative, for example, at $\beta=2/d$. Since the conditioned sector has positive thermal weight and partial transposition preserves the sector decomposition, the full thermal state is NPT across $L_1\mid(L_2\cup R)$. Thus this canonical redundancy-free Hamiltonian is also entangled already at inverse temperature $O(1/d)$.

\subsection{Mutual information}
General Hamiltonians are known to satisfy a thermal mutual information area law due to Wolf et al.~\onlinecite{Wolf2008AreaLaws}. For a Hamiltonian $H=\sum_i J_i O_i$ with $\|O_i\|\leq 1$, take a bipartition $A|B$ of the physical sites and let $\partial(A:B)$ denote the terms whose supports intersect both sides. Writing $H=H_A+H_B+H_\partial$, where $H_\partial$ contains precisely these crossing terms, the general bound is
\begin{equation}
I(A:B)_{\rho_\beta}
\leq 2\beta\|H_\partial\|
\leq 2\beta\sum_{i\in\partial(A:B)}|J_i| .
\end{equation}
Here $\rho_\beta=e^{-\beta H}/\operatorname{Tr}e^{-\beta H}$ and $I(A:B)=S(\rho_A)+S(\rho_B)-S(\rho_\beta)$. The proof compares the free energy of the thermal state with that of the product of its marginals. The terms supported entirely within $A$ or $B$ have the same expectation in both states, so only the interactions crossing the boundary contribute.

For redundancy-free Hamiltonians, we can strengthen this bound by an additional factor of $\beta$ at high temperatures. More precisely, we have
\begin{equation}
\label{eq:rfree_mutual_information}
I(A:B)_{\rho_\beta}
\leq
\sum_{i\in\partial(A:B)}
\beta |J_i|\tanh(\beta |J_i|)
\leq
\beta^2\sum_{i\in\partial(A:B)}J_i^2 .
\end{equation}
For unit-strength interactions, the bound is therefore $\beta^2|\partial(A:B)|$ rather than $2\beta|\partial(A:B)|$. The important point is that the squared couplings add term by term, with no dependence on the strengths or number of the remaining interactions. The bound holds at every finite $\beta\geq 0$, without a restriction involving the interaction-degree.

The conceptual reason is that removing a redundancy-free interaction makes its expectation exactly zero, even in the presence of all the other interactions. Restoring that interaction then produces an expectation proportional to its own coupling at high temperatures. This supplies the additional factor of $\beta$.

To make this precise, first remove all interactions crossing the boundary and consider the resulting thermal state
\begin{equation}
\sigma
=
\frac{e^{-\beta(H_A+H_B)}}{\operatorname{Tr}e^{-\beta(H_A+H_B)}}
=
\sigma_A\otimes\sigma_B .
\end{equation}
For every omitted term, $\operatorname{Tr}(\sigma O_i)=0$. Indeed, expanding $O_i e^{-\beta(H_A+H_B)}$ gives only words in which the omitted generator occurs exactly once, so every trace vanishes by redundancy-freeness. Since $\sigma$ is a product state, its relative entropy from $\rho_\beta$ bounds the mutual information. Using the vanishing expectations gives
\begin{equation}
\begin{aligned}
I(A:B)_{\rho_\beta}
&\leq D(\rho_\beta\|\sigma)\\
&\leq D(\rho_\beta\|\sigma)+D(\sigma\|\rho_\beta)\\
&=-\beta\operatorname{Tr}(\rho_\beta H_\partial).
\end{aligned}
\end{equation}
Here $D(\rho\|\sigma)=\operatorname{Tr}[\rho(\log\rho-\log\sigma)]$. Thus it remains to bound the expectation of each crossing interaction in the full thermal state.

For this, fix all other couplings and write $g(x)=\log\operatorname{Tr}e^{K-xO}$, where $K$ contains only the remaining generators and $\|O\|\leq 1$. Redundancy-freeness makes $g$ even, since every term with an odd number of occurrences of $O$ has zero trace. Hence $g'(0)=0$. Moreover, $g'(x)=-\langle O\rangle_x$, and its derivative is a thermal susceptibility bounded by the variance of $O$:
\begin{equation}
0\leq g''(x)
\leq \langle O^2\rangle_x-\langle O\rangle_x^2
\leq 1-g'(x)^2 .
\end{equation}
Integrating from $x=0$ gives $0\leq g'(x)\leq\tanh x$ for $x\geq 0$. Taking $O=\operatorname{sgn}(J_i)O_i$ and $x=\beta|J_i|$ therefore yields
\begin{equation}
0\leq
-J_i\operatorname{Tr}(\rho_\beta O_i)
\leq |J_i|\tanh(\beta|J_i|).
\end{equation}
Summing this estimate over the crossing interactions proves \eqref{eq:rfree_mutual_information}.

We emphasize that this argument applies to any redundancy-free realization, not only the canonical realization \eqref{Oirep}. The boundary counts terms crossing the physical bipartition, rather than edges of the anticommutation graph. The stronger bound also does not imply separability: the thermal state may be entangled while its mutual information across the same cut is small, as in the examples above.

\section{Quantum State Preparation}
\label{sec:QSP}
We now consider preparing thermal states of these Hamiltonians on a quantum computer for $\beta \lesssim 1/\sqrt{d}$.   In fact, we will show how to prepare thermofield double states, which allows one to prepare a thermal state by tracing out the ancilla.

Before showing how to do this, we remark that it is 
trivially possible to efficiently prepare a quantum state for which the expectation value of every term $O_i$ is $\Theta(1/\sqrt{d})$, i.e., a state in which all expectation values have the same scaling with $d$ as they would in a thermal state at $\beta \sim 1/\sqrt{d}$.
Recall the representation of the operators $O_i$ on $n$ qubits as 
$O_i=X_i \prod_{j<i, (i,j)\in E} Z_j$.
Take a pure product state where each qubit lies in the $X-Z$ plane with $\langle Z_i \rangle=\cos(\theta_i)$ and $\langle X_i \rangle=\sin(\theta_i)$.
Take all the $\theta_i$ all equal to some fixed angle $\theta$.

Then, $\langle O_i \rangle = \sin(\theta) \cos(\theta)^d_i$, where $d_i$ is the degree of $i$ in the anticommutation  graph.  Thus,
$\langle O_i \rangle \geq \theta (1-d \theta^2/2)+O(\theta^5)$.  For large $d$, we have small $\theta$, and this is optimized at $\theta=\sqrt{3/2} \cdot \sqrt{1/d}$,
giving $\langle O_i \rangle \geq (1/4) \cdot \sqrt{3/2} \cdot \sqrt{1/d}$.

Of course, if the degree $d_i$ depends on $i$, one can perhaps improve the expectation values by taking $\theta_i$ depending on $i$.  We ignore this and now consider the question of preparing the thermal state.

We will give a method for preparing the thermofield state approximately assuming a certain correlation decay;
this method does not depend on the Hamiltonian being redundancy-free and only depends on locality.
From \cref{lem:rfree_correlation_decay}, this correlation decays holds for redundancy-free Hamiltonians for $\beta\lesssim 1/\sqrt{d}$.
The runtime of resulting algorithm is superpolynomial, but subexponential in $n/\epsilon$, where $\epsilon$ is the error.  (Remark: on an anticommutation graph where the number of vertices within a given distance of a given vertex grows only polynomially, the resulting algorithm becomes quasipolynomial.)
Later, in \cref{sec:beyond}, we propose a possible route to polynomial time preparation algorithms, and give an alternative algorithm for general Hamiltonians.

Consider a general Hamiltonian $H=\sum_i J_i O_i$, where $\Vert O_i \Vert \leq 1$, and
where we assume that there is some graph of degree at most $d$, with a qudit of dimension $\dhilb=O(1)$ on each site of the graph, and with some graph metric, and each $O_i$ is supported on a set of diameter $O(1)$.  There is no requirement that $H$ be redundancy-free.

We wish to prepare a thermofield double state.  This is defined as follows.  Consider two copies of the system, calling them the left and right copies.
We define the thermofield double to be
\be
\psitf(\beta) \equiv Z^{-1} \Bigl(\exp(-\beta H/4)_L \otimes \exp(-\beta \overline H/4)_R\Bigr) \Psi_{ME}
 \ee
where $Z$ is a normalization, where the tensor product is between left and right copies, and where $\Psi_{ME}$ is a maximally entangled state given by, up to normalization, $\sum_n |n\rangle_L \otimes |n\rangle_R$, where $|n\rangle$ is any complete basis of states.  Here $\overline H$ is the complex conjugate (or transpose) of $H$ in the given basis.
Indeed, this is equivalent to 
$\psitf(\beta) = Z^{-1} \Bigl(\exp(-\beta H/2)_L \otimes I \Bigr) \Psi_{ME}$.

So, for any $H=\sum_i J_i O_i$, 
\be
\label{betachange}
\partial_\beta \psitf(\beta) =
\sum_{i} (O_{i,L} - \langle O_{i,L} \rangle) \psitf(\beta),
\ee
where the expectation value is in the given thermofield double state.  We add the subscript $L$ to the operator to indicate which copy it acts on.

Suppose that for any $\epsilon>0$, for all $i$, for $\beta=O(1)$, there is some anti-unitary operator $\eta_i$ supported on sites within distance $\ell=\log_C(1/\epsilon))$ of $i$, for some fixed $C$, such that
\be
\label{erreq}
|(O_{i,L} - \langle O_{i,L} \rangle) \psitf(\beta)-\eta_i(\beta) \psitf(\beta)|\leq \epsilon.
\ee
Then, this operator $\eta_i$ acts on a Hilbert space of dimension $O(\dhilb^{d^\ell})$.

We can then
 simulate the evolution under operator $\eta(\beta)=\sum_i \eta_i(\beta)$ on a quantum computer in time polynomial in $n,\beta$, and in the total error, by regarding it as time evolution under a time-dependent Hamiltonian $i\eta(\beta)$, with $\beta$ playing the role of time.  First order Trotter-Suzuki methods suffice to show this.
Here, since the derivative $\partial_\beta\psitf(\beta)$ is bounded in norm by $O(n)$, we can assume a bound on the  derivative $\partial_\beta \eta_i(\beta)$, needed for this time-dependent evolution.

Since of course $\psitf(0)$ is easy to prepare, the time then scales at worst as 
$O(\dhilb^{d^\ell})$.  We now show that the existence of these operators follows from a certain correlation decay, \cref{corrdecayneeded}, and from an assumption of bounded $\beta$, \cref{likely}.

The correlation decay will lead to
with $\ell \sim \log_C(1/\epsilon)$ for $C \sim 1/(\beta^2 d)$ for redudancy-free Hamiltonians, and a related decay for general Hamiltonians, giving the claimed simulation.

Let us define three sets of sites, $A,B,C$.  Set $A$ will consist of the sites in the support of $O_i$.  Set $C$ will consist of sites a distance greater than $\ell$ from $i$.  Set $B$ will consist of the remaining set of sites.
Suppose that for all operators $O_C$ supported on $C$, with $\Vert O_C \Vert \leq 1$, where $O_C$ may act on both $L$ and $R$ copies in the thermofield double state, we have
\be
\label{corrdecayneeded}
\langle (O_{i,L} - \langle O_{i,L} \rangle ) O_C \rangle \leq \epsilon',
\ee
for some $\epsilon'$, where again the expectation value is in the thermofield double state $\psitf(\beta)$.

The operator $O_i$ acts on a Hilbert space of dimension $O(1)$.  Let us, for simplicity, take the case that, up to local unitary rotation, it acts on a single qubit and let $A$ denote that single qubit; the case of a larger Hilbert space dimension is similar.
Call the two states corresponding to different eigenvalues up and down, denoted as $|\uparrow\rangle,|\downarrow\rangle$.
So, given that $\psitf(\beta)$ is a pure state, we may write $\psitf(\beta)$ as $\sum_{\sigma=\uparrow,\downarrow} A(\sigma) |\sigma\rangle_A \otimes |\psi(\sigma)\rangle_{BC}$, where $A(\sigma)$ is a scalar and $|\psi(\sigma)\rangle_{BC}$ has norm $1$.   Define $\rho_C(\sigma)$ to be the reduced density matrix on $C$ of $\psi(\sigma)$.

We may assume that
\be
\label{likely}
|A(\sigma)|^2\geq C,
\ee
for some $C>0$, as $\beta$ is bounded.
From \cref{corrdecayneeded}, then, $|\rho_{BC}(\uparrow) - \rho_{BC}(\downarrow)|_1 \leq O(\epsilon)$.
So, by the Powers-St{\o}rmer inequality\cite{powers1970free}, 
\be
\label{PS}
|\rho_{C}(\uparrow)^{1/2} - \rho_{C}(\downarrow)^{1/2}|_2^2 \leq O(\epsilon),
\ee
where $|\ldots|_2$ is the Hilbert-Schmidt norm.
Here, for any positive semi-definite operator $O$, we define $O^{1/2}$ to be the non-negative Hermitian square-root of $O$.
Assume the dimension $D_C$ of the Hilbert space on $C$ equals the dimension $D_B$ of that on $B$; if it is not, then add additional states to $B$ or $C$, with $|\psi(\sigma)\rangle_{BC}$ having vanishing amplitude on these states.  Assume that the reduced density matrices $\rho_C(\sigma)$ are both full rank (if not, infinitesimally perturb $|\psi(\sigma)\rangle_{BC}$.
We can interpret $\rho_{C}(\sigma)^{1/2}$ as a state on a tensor product Hilbert space of dimension $D_C \otimes D_C$, 
and then
$\psi_{BC}(\sigma)=U(\sigma) \rho_C(\sigma)^{1/2}$, where $U$ is a unitary.
Then, by \cref{PS},
\be
\label{rotate}
|\psi(\uparrow)\rangle_{BC} -U(\uparrow) U(\downarrow)^{\dagger} |\psi(\downarrow)\rangle_{BC}|^2 \leq O(\epsilon),
\ee
where $|\ldots|$ is the usual $\ell_2$ norm in quantum mechanics.

In this case of $O_i$ having two eigenvalues, let us write $O_i=Z_i$.
Define operators $\tilde X_i=|\uparrow\rangle_A\langle\downarrow| \otimes U(\uparrow)U(\downarrow)^\dagger+{\rm h.c.}$
and $\tilde Y_i=\sqrt{-1} |\uparrow\rangle_A\langle\downarrow| \otimes U(\uparrow)U(\downarrow)^\dagger+{\rm h.c.}$
Then, $Z_i,\tilde X_i,\tilde Y_i$ obey Pauli anticommutation relations with each other, and are Hermitian,
and from \cref{rotate}, $(\tilde X_i + i \tilde Y_i)$ approximately maps $|\downarrow\rangle_A \otimes |\psi(\downarrow)\rangle_{BC}$ to 
 $|\uparrow\rangle_A \otimes |\psi(\uparrow)\rangle_{BC}$.
 
 Of course, if we did not have $BC$ and we had a single qubit on $A$ in some given state $\psi_A$, we could infinitesimally change that state to
 the state $\psi_A + (Z_i - \langle Z_i \rangle) \psi_A {\rm d}\beta$ by applying an infinitesimal unitary rotation generated by operators $X_i,Y_i$, i.e., in this way we can define $\eta_i$.
 With the operators $\tilde X_i,\tilde Y_i$, we can in the same way generate the evolution of $\psitf(\beta)$ up to error $O(\epsilon^{1/2})$ in $\ell_2$ norm.

Note: since our evolution to prepare the state is unitary, any error that we create does not grow.  If instead we created an approximate state, using this unitary evolution, and then evolved it under a non-unitary evolution \cref{betachange}, the error might grow.

\section{Towards Improved Scaling}
\label{sec:beyond}
Here we give a possible route to algorithms polynomial in $n,\epsilon$.  We will start by considering general Hamiltonians (not necessarily redundancy-free) defined on an interaction graph.  We will have some graph of degree at most $G$ and let $H=\sum_{(i,j)} O_{i,j}$, where the sum is over edges of the graph and where $\Vert O_{i,j}\Vert \leq 1$.  If these $O_{i,j}$ are proportional to products of Paulis then this leads to an anticommutation graph of degree at most $2(d-1)$.
We assume qudits on each site, with Hilbert space dimension $D=O(1)$.

For each $i$, define $P_i$ to project orthogonal to the maximally entangled state (entangled between $L$ and $R$) on site $i$ so that $\Psi_{ME}$ is the ground state of the ``Hamiltonian" $\sum_i P_i$.
Define a non-Hermitian operator $P_i$ by
$$P_i(\beta)=\exp(-\beta H_L/2) P_i \exp(\beta H_L/2).$$
Define a Hamiltonian, that we call the Feiguin-Klich Hamiltonian\cite{feiguin2013hermitian} by
\be
H^{FK}=\sum_i \hfk_i(\beta),
\ee
where we define
\be
\hfk_i(\beta) \equiv P_i(\beta)^\dagger P_i(\beta).
\ee

Remark: while this is the definition used in Ref,.~\onlinecite{feiguin2013hermitian}, it is possible that we could find some constant-factor improvement in the allowed by $\beta$ by defining instead $P_i(\beta)=\exp[-\beta (H_L+\overline H_R)/4] P_i \exp[\beta (H_L+\overline H_R)/4]$.

We can expand the operator $P_i(\beta)$ in a power series in $\beta$.  The $k$-th order term in the series can be
further expanded as a sum of terms proportional to commutators, each of which is a commutator $[\ldots[P,O_{i_1,j_1}],O_{i_2,j_2}],\ldots O_{i_k,j_k}]].$
These commutators define clusters which are lattice animals of size at most $k$.  
For a given \emph{multiset} of $k!$ edges $(i_a,j_a)$, there are at most $k!$ corresponding sequences and there is a factor $1/k!$ from the series for the exponential, so the sum of norm of terms corresponding to a given set is bounded by $O(\beta)^k$.
The number of multisets of size at most $k$ is $O(d)^k$.
Hence, for $\beta \lesssim 1/d$, the error in norm in truncating the series for $P_i(\beta)$ at a given $k$ is exponentially small in $k$
and a given $\hfk_i(\beta)$ can be implemented on a quantum computer (i.e., unitary evolution under this operator for small time) can be implemented with complexity exponential in $k$.
To explain this in more detail: we need to take $k\lesssim \log(\epsilon)$.  
The number of multisets at this order is $O(d)^k$, and hence polynomial in $\epsilon$.  
However, the number of sequences for a given multisets is factorial in $k$ and hence would be superpolynomial in $\epsilon$.
This would lead to a superpolynomial time if we implement each such sequence separately.  However, we can resolve this using a similar method as in \cref{compcalc}.  Let $\gamma$ denote a set (not multiset) of terms in the Hamiltonian, and let $H(\gamma)$ denote the sum of those terms.
The sum $[P,H(\gamma)]+(1/2!) [[P,H(\gamma)],H(\gamma)]+(1/3!)[[[P,H(\gamma)],H(\gamma)],H(\gamma)]+\ldots$ includes the sum of all terms 
in the expansion $P_i(\beta)$ corresponding to sequences where every Hamiltonian term in the sequence is supported in $\gamma$.
Call this sum $S(\gamma)$.
Using the same inclusion-exclusion trick as in \cref{compcalc}, the sum
$\sum_{\gamma'\subset \gamma} (-1)^{|\gamma|-|\gamma'|} S(\gamma')$ equals the sum of all sequence where every 
Hamiltonian term in the sequence is supported in $\gamma$ and where each term in $\gamma$ appears at least once in the sequence.
Hence, the sum over $\sum_\gamma S(\gamma)$ is the desired expansion for $P_i(\beta)$ and we may truncate at $\gamma$ with $|\gamma|\lesssim \log(\epsilon)$.  For each such $\gamma$, we can compute $S(\gamma)$ with complexity exponential in $|\gamma|$ and hence polynomial in $|\epsilon|$.
So,
we can simulate evolution under $H^{FK}$ on a quantum computer to accuracy $\epsilon$ in time polynomial in $n,\epsilon$ for $\beta \lesssim 1/d$.

The ground state of $H^{FK}$ is the desired thermofield double state, with zero energy.
If we could show that $H^{FK}$ has a spectral gap, then we can prepare $\Psi_{ME}$ and then adiabatically change $\beta$ in the definition of $H^{FK}$ from the $\beta=0$ to some desired final $\beta$.  The adiabatic evolution can be implemented in various ways\cite{farhi2000quantum,aharonov2003adiabatic,boixo2009eigenpath}

The next subsection is devoted to proving
\begin{theorem}
\label{gapthm}
For sufficiently small $\beta d$, the Hamiltonian $H^{FK}$ has a unique ground state and a spectral gap $\gamma$ strictly bounded away from zero, with $\gamma$ approaching $1$ as $\beta d$ tends to zero.
\end{theorem}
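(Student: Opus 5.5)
At $\beta=0$ the statement is immediate. There $P_i(0)=P_i$ is the projector orthogonal to the local maximally entangled state on site $i$. Hence $H^{FK}(0)=\sum_i P_i$ is a sum of commuting projectors on distinct sites, with unique ground state $\Psi_{ME}$ and gap exactly $1$. The plan is to treat $\beta>0$ as a perturbation of this frustration-free, exactly solvable model, in a way that preserves frustration-freeness and a uniform local structure.

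\textbf{Step 1: zero-energy states and frustration-freeness.} First observe that $H^{FK}(\beta)\geq 0$ and that $H^{FK}(\beta)\psitf(\beta)=0$. This holds because $P_i(\beta)\,e^{-\beta H_L/2}\Psi_{ME}=e^{-\beta H_L/2}P_i\Psi_{ME}=0$. Moreover, $\ker P_i(\beta)^\dagger P_i(\beta)=\ker P_i(\beta)=e^{-\beta H_L/2}\ker P_i$. The common kernel of all terms is therefore $e^{-\beta H_L/2}\bigcap_i\ker P_i=e^{-\beta H_L/2}\mathrm{span}(\Psi_{ME})$, since $e^{-\beta H_L/2}$ is invertible. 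This gives uniqueness of the zero-energy ground state for every $\beta$, so only the gap requires work.

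\textbf{Step 2: quasi-locality of the terms.} Next, use the commutator expansion from the preceding discussion, $P_i(\beta)=\sum_{\gamma\ni i}S'(\gamma)$ with $\Vert S'(\gamma)\Vert\leq O(\beta)^{|\gamma|}$ over lattice animals $\gamma$. Together with the $O(d)^k$ count of animals, this shows that for small $\beta d$,
\[
\hfk_i(\beta)=P_i+\sum_{r\geq 1}\Phi_{i,r},\qquad \Vert \Phi_{i,r}\Vert\leq (C\beta d)^r,
\]
where $\Phi_{i,r}$ is supported within distance $r$ of $i$. Thus $H^{FK}(\beta)$ is a quasi-local, exponentially decaying perturbation of $\sum_i P_i$, and the perturbation strength is controlled by $\beta d$.

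\textbf{Step 3: the gap.} Here there are two routes.

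\emph{Route 1 (first choice).} Apply the stability theorem for frustration-free, gapped commuting-projector Hamiltonians (Bravyi--Hastings--Michalakis, or Michalakis--Zwolak for the frustration-free case). The unperturbed model is a product, so local topological quantum order and local gap hold trivially. The theorem then gives a gap of at least $1-O(\beta d)$, uniformly in system size. The caveat is that the standard statements are for lattices with polynomial growth, while our graph has degree $d$ and possibly exponential growth. I would redo the Lieb--Robinson/quasi-adiabatic bookkeeping with animal counting, absorbing the factor $d^r$ into $(C\beta d)^r$.

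\emph{Route 2 (more elementary).} Work with the conjugated operator. Write $T=e^{-\beta H_L/2}$ and set $K=\sum_i P_i$. Then $H^{FK}=\sum_i (T P_i T^{-1})^\dagger(T P_i T^{-1})$. I would try to prove a local operator inequality
\[
(H^{FK})^2\geq (1-c\beta d)\,H^{FK}
\]
by Knabe/Gosset--Mozgunov-style local arguments. This uses that $(\hfk_i)^2$ is close to $\hfk_i$, since $\hfk_i\approx P_i$ is nearly a projector, and that cross terms $\hfk_i\hfk_j+\hfk_j\hfk_i$ have norm $O((\beta d)^{\mathrm{dist}(i,j)})$ away from being positive. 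Summing over $j$ with the $d^r$ count then gives a total error $O(\beta d)\,H^{FK}$. The cross terms should be bounded relative to $\hfk_i+\hfk_j$ (not just in norm), using that $[P_i,P_j]=0$ at $\beta=0$.

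\textbf{Main obstacle.} The hard step is the relative bound just described. Naive norm bounds give errors of order $n\beta d$, not $O(\beta d)\,H^{FK}$. One must exploit frustration-freeness: every perturbed term annihilates the same state, and the perturbation is "relatively bounded" with respect to $\sum_i P_i$. I would first try to show $\Vert \Phi_{i,r}\psi\Vert^2\lesssim (C\beta d)^{2r}\sum_{j\in B_r(i)}\langle\psi,P_j\psi\rangle$. This should follow since each $\Phi_{i,r}$ contains a factor $P_i$ or its conjugate on one side. Combining this with a Cauchy--Schwarz sum over balls would yield $H^{FK}\geq (1-O(\beta d))\sum_i P_i$ restricted appropriately, and hence a gap $\gamma\to1$ as $\beta d\to0$.
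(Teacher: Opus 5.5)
Your Step 1 is correct, and it gives uniqueness of the ground state more directly than the paper does. Step 2 is fine. Route 2 has the same architecture as the paper's proof: derive $(H^{FK})^2\geq\gamma H^{FK}$ from $(\hfk_i)^2\geq c\,\hfk_i$ together with a lower bound on $\hfk_i\hfk_j+\hfk_j\hfk_i$ \emph{relative to} $\hfk_i+\hfk_j$.

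\textbf{The gap.} That relative bound is the substance of the proof, and you leave it open. Route 1 does not supply it: the Bravyi--Hastings--Michalakis and Michalakis--Zwolak theorems are proved for polynomial-growth lattices, and you only propose to redo them.

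Worse, the tactic you suggest for the obstacle fails. The bound $\Vert\Phi_{i,r}\psi\Vert^2\lesssim(C\beta d)^{2r}\sum_{j\in B_r(i)}\langle\psi,P_j\psi\rangle$ is false at $\psi=\Psi_{ME}$. There the right side vanishes for every $r$. But $\sum_r\Phi_{i,r}\Psi_{ME}=\hfk_i\Psi_{ME}$, which is generically nonzero by your own Step 1, since $e^{\beta H_L/2}\Psi_{ME}\notin\ker P_i$. To first order it equals $\tfrac{\beta}{2}P_iH_L\Psi_{ME}$.

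Likewise $H^{FK}\geq(1-O(\beta d))\sum_iP_i$ cannot hold: $\psitf(\beta)$ is in the kernel of the left side but not of the right. The bare projectors $P_j$ are the wrong reference operators, because their common kernel is not that of the perturbed terms. For the same reason, norm closeness of $\hfk_i$ to a projector does not by itself give $(\hfk_i)^2\geq c\,\hfk_i$; you also need the near-zero eigenvalues to be exactly zero.

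\textbf{The missing idea} is to work relative to the exact kernels, which your Step 1 already identifies.

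First, $\ker\hfk_i\cap\ker\hfk_j=e^{-\beta H_L/2}(\ker P_i\cap\ker P_j)$. So $\hfk_i+\hfk_j$ has exactly as many zero eigenvalues as the commuting sum $P_i+P_j$. Combined with $\Vert\hfk_i+\hfk_j-P_i-P_j\Vert=O(\beta d)$ and Weyl's inequality, this forces every nonzero eigenvalue of $\hfk_i+\hfk_j$ to be at least $\Delta=1-O(\beta d)$. The same argument for a single term gives $(\hfk_i)^2\geq c\,\hfk_i$. The paper obtains both facts via a similarity transform and Bauer--Fike.

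Second, write $\hfk_i\hfk_j+\hfk_j\hfk_i=2P_i(\beta)^\dagger\hfk_jP_i(\beta)+Y_{ij}+Y_{ij}^\dagger$ with $Y_{ij}=P_i(\beta)^\dagger[P_i(\beta),\hfk_j]$. This operator is at least $-2\Vert Y_{ij}\Vert$. It annihilates $\ker(\hfk_i+\hfk_j)$, so it is block diagonal with respect to that subspace, and on the complement $\hfk_i+\hfk_j\geq\Delta$. Together these give $\hfk_i\hfk_j+\hfk_j\hfk_i\geq-(2\Vert Y_{ij}\Vert/\Delta)(\hfk_i+\hfk_j)$.

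Third, watch the counting. Your per-pair estimate $O((\beta d)^{\mathrm{dist}(i,j)})$, summed over the $d^r$ sites at distance $r$, gives $\sum_r(C\beta d^2)^r$. That requires $\beta d^2$ small, not $\beta d$. The paper instead bounds $\Vert Y_{ij}\Vert\leq\sum_\ell n(i,j,\ell)\,O(\beta)^\ell$, where $n(i,j,\ell)$ counts length-$\ell$ paths from $i$ to $j$. Since $\sum_j n(i,j,\ell)\leq d^\ell$, this gives $\sum_j\Vert Y_{ij}\Vert=O(\beta d)$. Summing over pairs then yields $(H^{FK})^2\geq(1-O(\beta d))H^{FK}$.
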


\subsection{Spectral Gap}
To show spectral gap, consider first the spectrum of $\hfk_i(\beta)$ for a given $i$.
We claim:
\begin{lemma}
\label{hfkspec}
For sufficiently small $\beta d$, the operator $\hfk_i(\beta)$ has the same number of zero eigenvalues as $P_i$ does, and the nonzero eigenvalues of $\hfk_i(\beta)$ lie in an interval around $1$, with the width of the interval tending to zero as $\beta d$ tends to zero.
\begin{proof}
While this operator is Hermitian, we apply a similarity transform which does not change the eigenvalues
and consider the spectrum of the non-Hermitian operator
$$
\exp(\beta H_L) P_i \exp(-\beta H_L) P_i.
$$
This operator is block upper triangular, if the blocks corresponding to the subspaces $P_i=0$ and $P_i=1$.  So, it has at least as many zero eigenvalues as $P_i$ does, and to compute the remaining spectrum we may consider the operator on the diagonal block where $P_i=1$.
In this block, the operator is close in operator norm to $P_i$ which is equal to $1$ on this block. Here ``close" means that for any desired bound $\delta>0$ on the norm, this bound holds for sufficiently small $\beta d$.  Since $P_i$ is Hermitian, by the Bauer-Fike theorem all nonzero eigenvalues of $P_i(\beta)^\dagger P_i(\beta)$ lie in some (arbitrarily small) interval near $1$.
\end{proof}
\end{lemma}

In particular, the lemma implies that the nonzero eigenvalues of $\hfk_i(\beta)$ are lower bounded by a constant times $\hfk_i(\beta)^2$.

Remark: indeed, we can tighten this and show that the nonzero eigenvalues of $\hfk_i(\beta)$ are at least $1$.  This is a general result, which is not needed and this paragraph may be skipped.  Consider any projector $\Pi$ and any Hermitian invertible matrix $X$.  Consider $M=(X\Pi X^{-1})^\dagger (X^{-1}\Pi X)=X \Pi X^{-2} \Pi X$.  Let $\Pi$ have rank $r$ so $M$ has rank at most $r$.  Write $X$ as a block two-by-two matrix with the blocks corresponding to the range of $\Pi$ and the orthogonal complement of that range,
$$X=\begin{pmatrix} A & B \\ B^\dagger & C\end{pmatrix}.$$
The upper-left block of $X^2$ equals $A^2+B B^\dagger$.
Using formulas for inverse of block matrices, the upper-left block of $X^{-2}$ is some matrix $Q$ with $Q\geq (A^2+B^\dagger B)^{-1}$ where the inequality for matrices means $B-A$ is positive semi-definite.  Hence,
$$M\geq \begin{pmatrix} A\\ B^\dagger \end{pmatrix} \cdot (A^2+B B^\dagger)^{-1} \cdot \begin{pmatrix} A & B\end{pmatrix},$$
and one may verify by squaring that the right-hand side is a projector, also of rank $r$, so $M$ is lower bounded by a projector of the same rank as $M$ and hence all nonzero eigenvalues of $M$ are at least $1$.

The next step is similar to Ref.~\onlinecite{fannes1992finitely} in that
to show a spectral gap $\gamma$ for $H^{FK}$
it suffices to show that
$(H^{FK})^2 \geq \gamma H^{FK}$.
We have
$$(H^{FK})^2 =\sum_i \hfk_i(\beta)^2+\sum_{i\neq j} \hfk_i(\beta) \hfk_j(\beta)$$ and using the spectrum of
$\hfk_i(\beta)$, we have that $\hfk_i(\beta)^2\geq \hfk_i(\beta)$.  
Here we use the remark above lower bounding the eigenvalues of $\hfk_i(\beta)$ by $1$; even just using \cref{hfkspec} suffices to lower bound $\hfk_i(\beta)^2$ by a constant times $\hfk_i(\beta)$ which is also sufficient for us.
So, we now lower bound
$\hfk_i(\beta) \hfk_j(\beta)$ which is not necessarily positive semi-definite.

Consider any pair $i \neq j$.  Here these $i,j$ are \emph{not} necessarily neighbors.
We will lower bound the operator
$\hfk_i(\beta) \hfk_j(\beta)$.
Indeed, $\hfk_i(\beta) \hfk_j(\beta)=P_i(\beta)^\dagger \hfk_j(\beta) P_i(\beta) + P_i(\beta)^\dagger [P_i(\beta),\hfk_j(\beta)]\geq 
P_i(\beta)^\dagger [P_i(\beta),\hfk_j(\beta)]$.
Hence, 
\be
\label{Cbound}
\hfk_i(\beta) \hfk_j(\beta)\geq -\Vert P_i(\beta)^\dagger [P_i(\beta),\hfk_j(\beta)] \Vert.
\ee

Next consider the spectrum of $\hfk_i(\beta)+\hfk_j(\beta)$.  
\begin{lemma}
\label{hiplushjbound}
For $i\neq j$,
for sufficiently small $\beta d$, the operator $\hfk_i(\beta)+\hfk_j(\beta)$ has the same number of zero eigenvalues as $P_i P_j$ does, and the nonzero eigenvalues of $\hfk_i(\beta)+\hfk_j(\beta)$ are lower bounded by 
some quantity $\Delta$ that approaches $1$ as $\beta d$ tends to zero.
\begin{proof}
While this operator is Hermitian, we apply a similarity transform which does not change the eigenvalues
and consider the spectrum of the non-Hermitian operator
$$
\exp(\beta H_L) P_i \exp(-\beta H_L) P_i+
\exp(\beta H_L) P_j \exp(-\beta H_L) P_j.
$$
This operator is block upper triangular, if the two blocks correspond to the block where $P_i=P_j=0$ and the block where $P_i+P_j>0$.  Hence, it has at least as many zero eigenvalues as $P_i+P_j$ does and on the remaining block we can again regard it as a non-Hermitian perturbation of a Hermitian operator and apply Bauer-Fike to lower bound it by a constant $\Delta$ which may be taken arbitrarily close to $1$ for $\beta$ sufficiently small compared to $1/d$.
So, indeed, it has the same number of zero eigenvalues as $P_i+P_j$ does.
\end{proof}
\end{lemma}

Next we show
\begin{lemma}
For $i\neq j$, we have
\be
\hfk_i(\beta) \hfk_j(\beta) \geq -(1/\Delta)  \Vert P_i(\beta)^\dagger [P_i(\beta),\hfk_j(\beta)] \Vert    (\hfk_i(\beta)+\hfk_j(\beta)),
\ee
where $\Delta$ is from \cref{hiplushjbound}.
\begin{proof}
Let ${\cal H}_0$ denote the subspace of the Hilbert space on which $\hfk_i(\beta)+\hfk_j(\beta)=0$.  Let ${\cal H}^\perp$ denote the orthogonal subspace.
Write $\hfk_i(\beta) \hfk_j(\beta)$ as a block two-by-two matrix, with the blocks corresponding to ${\cal H}_0$ and ${\cal H}^\perp$.
This matrix vanishes every except in the lower-right block, where it is lower bounded by $\Vert P_i(\beta)^\dagger [P_i(\beta),\hfk_j(\beta)] \Vert$ by \cref{Cbound}.
Writing $\hfk_i(\beta)^ + \hfk_j(\beta)$ also as a block two-by-two matrix, it also vanishes everywhere except on the lower-right block where it is lower bounded by $\Delta$.
\end{proof}
\end{lemma}

Next we bound $\Vert P_i(\beta)^\dagger [P_i(\beta),\hfk_j(\beta)] \Vert $, and finally we use that to prove \cref{gapthm}.

We have
\begin{lemma}
For sufficiently small $\beta d$, we have
$$ \Vert P_i(\beta)^\dagger [P_i(\beta),\hfk_j(\beta)] \Vert =\sum_{\ell>0} n(i,j,\ell) O(\beta)^{\ell},$$ where $n(i,j,\ell)$ is the number of distinct paths from $i$ to $j$ with distance $\ell$ with the graph metric.
\begin{proof}
To bound the norm  
$ \Vert P_i(\beta)^\dagger [P_i(\beta),\hfk_j(\beta)] \Vert $,
write each given $\hfk$ in terms of $P_i$ or $P_j$.  So, there are a total of \emph{four} different factors of $P$, i.e., one each of $P_i,P_i^\dagger,P_j,P_j^\dagger$.
Expand each $P$ in a series in $\beta$, which as above, 
can be written as a sum of terms on connected multisets, with the norm of the term bounded by the
$O(\beta)^k$ where $k$ is the cluster multisets.
The commutator is only nonvanishing if some multiset in an expansion of $P_i$ intersects some multiset in an expansion of $P_j$ or $P_j^\dagger$.
So, the nonvanishing terms can be written as a single connected multiset, as well as, for each element of the multiset, a number from $1$ to $4$ indicating which of the different $P_i$ or $P_j$ it contributes to the expansion of.  This choice of a number from $1$ to $4$ increases the norm by only $4^k$ if the multiset is of size $k$.  For each $\ell>0$, consider the sum of all terms where the minimum length path from
$i$ to $j$ is $\ell$.  There are $n(i,j,\ell)$ possible choices of that minimum length path; order those paths arbitrarily.  For each such path $P$, consider the terms containing that path $P$ and no previous path in the given order; the sum of the norm of those terms is bounded by $O(\beta)^{\ell}$ for sufficiently small $d\beta$ as the number of connected multisets containing $P$ with size at most $P+m$ is bounded by $O(d)^m$.
\end{proof}
\end{lemma}

We can now show that $(H^{FK})^2 \geq \gamma H^{FK}$ for sufficiently small $\beta d$, proving \cref{gapthm}.
Indeed, then
\begin{eqnarray}
\label{thegap}
(H^{FK})^2 &=& \sum_i \hfk_i(\beta)^2+\sum_{i\neq j} \hfk_i(\beta)^2 \hfk_j(\beta)^2 \\ \nonumber 
&\geq &H^{FK}-  (1/\Delta) \sum_i \Bigl(\sum_j \Vert P_i(\beta)^\dagger [P_i(\beta),\hfk_j(\beta)] \Vert \Bigr)     \hfk_i(\beta).
\end{eqnarray}
Then, if we have the given bound on  $ \Vert P_i(\beta)^\dagger [P_i(\beta),\hfk_j(\beta)] \Vert $ then for $\beta$ sufficiently small compared to $1/d$, the sum 
$\sum_j \Vert P_i(\beta)^\dagger [P_i(\beta),\hfk_j(\beta)] \Vert$
will be bounded by $\sum_j \sum_{\ell>0} n(i,j,\ell) O(\beta)^{\ell}=\sum_{\ell>0} \sum_j n(i,j,\ell) O(\beta)^{\ell}\leq \sum_{\ell>0} d^{\ell} O(\beta)^{\ell}$ which for
small enough $d\beta$
will be bounded by a constant which we may take arbitrarily small, and so the second term on the right-hand side of \cref{thegap} is lower bounded by an arbitrarily small constant times $-H^{FK}$.

\subsection{Redundancy-Free}
We now discuss the redundancy-free case, for $\beta \lesssim 1/\sqrt{d}$.

First let us show that we can truncate the power series series expansion of $P_i(\beta)$ with minimal error \emph{in the energy of the ground state of} $H^{FK}$.
That is, we will seek a finite approximation to $\hfk_i(\beta)$ such that $\langle \Psi_{tf} | \hfk_i(\beta) | \Psi_{tf}\rangle$ converges to zero exponentially in the order of our expansion.

As before expand the operator $P_i(\beta)$ in a power series in $\beta$.  We now write this as a sum over terms, where each term is specified by a lattice animal $a$ and a multiplicity $m_j$ for each vertex $j$ in the animal.
The term for given $a$ and vector $\vec m$ of multiplicities will include all terms where 
each operator $O_j$ appears $m_j$ times.  As there are at most $(\sum_j m_j)!$ possibly orderings of these terms, and there is a factor $1/(\sum_j m_j)!$ in the expansion of the exponential, the norm of a term is bounded by $O(\beta)^{\sum_j m_j}$.

Consider truncating the series at given $k$, where by truncation we mean the sum over all terms with $|a|\leq k$, regardless of whether the total multiplicity is greater than $k$.
This is still an infinite number of terms, as one must sum over multiplicities, however, since the norm decays exponentially in $\sum_j m_j$, we can approximate this sum to exponentially small error by also truncating in the total multiplicity.  

  However, since we now assume only that
 $\beta \lesssim 1/\sqrt{d}$,  and since there are $O(d)^{|a|}$ animals $a$ with cardinality $|a|$,
 we cannot bound the error in operator norm by truncating the series at a given $k$.
 Instead, we show that if $\hfk_i(\beta,k)$ is defined to be the truncation of the series for $\hfk_i$ at given $k$, then
 $\langle \Psi_{tf} | \hfk_i(\beta,k)^2 | \Psi_{tf}\rangle$ is exponentially small in $k$ for $\beta$ sufficiently small compared to $1/\sqrt{d}$.
 Note here that $\hfk_i(\beta)$, being a product of $P_i^\dagger(\beta)$ and $P_i(\beta)$ which have power series expansions with the given norm bound, also has a power series expansion as a sum of animals with multiplicities, with the given norm bound; given an animal $a_1$ in the expansion of $P_i^\dagger(\beta)$ with multiplicity $\vec m_1$, and an animal $a_2$ in the expansion of $P_i(\beta)$ with multiplicity $\vec m_2$, their union is also an animal with multiplicity equal to the sum of the multiplicities, and, at a factor of at most $2^k$, we can label for each vertex in the union animal, whether it came from $a_1$ or $a_2$.
  
  Define $\overline \hfk_i(\beta,k)$ to equal $\hfk_i(\beta)-\hfk(\beta,k)$ so that it equals the error in truncating at order $k$.
 Then, since $\hfk_i(\beta) | \Psi_{tf}\rangle=0$, we have
  $\langle \Psi_{tf} | \hfk_i(\beta,k)^2 | \Psi_{tf}\rangle=
  \langle \Psi_{tf} | \overline \hfk_i(\beta,k)^2 | \Psi_{tf}\rangle$.
The operator $\overline \hfk_i(\beta,k)$ is a sum over animals $a$ with $|a|>k$ and over multiplicites for each animal, and similarly
$\overline \hfk_i(\beta,k)^2$ is a sum over animals $a$ with $|a|>2k$ and over multiplicities.
For each such animal $a$, specify for each site $j$ in the animal whether $m_j$ is even or odd.
So, for each animal $a$, we have $2^{|a|}$ terms, each term being a sum of all possible multiplicities with the given parity of each $m_j$.
For a given term, let $n_{\rm odd}$ be the number of sites with odd multiplicities and $n_{\rm even}$ be the number of those with even multiplicities.
The norm of a given term is bounded by $O(\beta)^{n_{\rm odd}+2 n_{\rm even}}$.
Since there are $O(d)^{|a|}$ animals with given $|a|$, if we can bound the expectation value of a given animal by $O(\beta)^{n_{\rm odd}}$ times the norm of the given term,
it will bound the 
sum over all animals with $|a|>2k$ by $O(\kappa)^{2k}$ and so for sufficiently small $\kappa$ this will decay exponentially in $k$.

To bound the expectation of an animal with given parities, again use the cluster expansion as was used for the partition function.  Allow also in the definition of polymers for the partition function not just terms in the exponential of the Hamiltonian but also terms where the animal with given parity appears, with some prefactor $\delta$ for those terms; the polymer $\gamma$ must contain $a$ for the given animal $a$ to appear.  We then compute the expectation value by taking the derivative of the log partition function with respect to
$\delta$.  One may verify that the weight of a polymer $\gamma$ is bounded by as $(1+\delta) O(\beta^2)^{|\gamma|}$.  The term independent of $\delta$ is the previous bound on the weight of a polymer while the term proportional to $\delta$ is still bounded by $O(\beta^2)^{|\gamma|}$ occurs because the norm is bounded by $O(\beta)^{n_{\rm odd}+2 n_{\rm even}}$ and for even site with odd parity, there must be at least one appearance of the corresponding $O_i$ which gives an additional factor of $O(\beta)^{n_{\rm odd}}$.

We conjecture that $H^{FK}$ has a spectral gap for $\beta\lesssim 1/\sqrt{d}$, but we do not have a proof of this.  We do still
 have that all nonzero eigenvalues of $\hfk_i(\beta)$ are at least $1$, so $\hfk_i(\beta)^2\geq \hfk_i(\beta)$, but establishing the gap becomes more difficult.
 
 It seems likely that one way to establish a spectral gap would be to prove the convergence of a cluster expansion for $H^{FK}$, as was done in Ref.~\onlinecite{yarotsky2006ground,klich2010stability}.  This becomes more complicated as $H^{FK}$ is not strictly local.  We leave this for future work.

\section{QMA-Hardness of the Ground Energy}
\label{sec:QMA_hard}
Estimating the ground energy remains QMA-complete for redundancy-free Hamiltonians, even with bounded locality and anticommutation-degree.

\begin{theorem}
\label{thm:qma_ground_energy}
The local Hamiltonian problem is QMA-complete for redundancy-free Pauli Hamiltonians with terms of locality at most five, coefficients of magnitude at most one, and anticommutation-degree at most $26$.
\end{theorem}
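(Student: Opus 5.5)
Membership in QMA is standard: a redundancy-free Pauli Hamiltonian with five-local terms and coefficients bounded by one is a special case of the $k$-local Hamiltonian problem, so the verifier is the usual phase-estimation verifier. The work is in hardness. The plan is to start from a QMA-complete local Hamiltonian problem with Pauli terms of bounded degree, for instance the 2-local Pauli (XY or Heisenberg-type, or $XZ$) Hamiltonians of Cubitt–Montanaro on a bounded-degree interaction graph. We then remove redundancies by the ancilla trick already used above: replace each term $J_e P_e$ by $J_e P_e Z_{c_e}$, with a fresh ancilla qubit $c_e$ for each term $e$. As argued for the anticommutation glass couplings, a relation among the new terms would need each ancilla exponent to be even. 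Hence every term would appear an even number of times, so there are no relations and the Hamiltonian is redundancy-free.

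The key observation is that every $Z_{c_e}$ commutes with the new Hamiltonian $\widetilde H=\sum_e J_eP_eZ_{c_e}$. So $\widetilde H$ is block diagonal over ancilla sectors $s\in\{\pm1\}^m$, and in sector $s$ it equals $H_s=\sum_e s_eJ_eP_e$. The ground energy of $\widetilde H$ is therefore $\min_s \lambda_{\min}(H_s)$, and signs can only lower it. This is the main obstacle. Sign flips could make a NO instance have low energy, since $\min_s$ is generally much smaller than $\lambda_{\min}(H_{+})$.

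I would attack this obstacle in one of two ways.

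\textbf{Option 1 (sign-invariant family).} Choose a starting QMA-complete family whose terms are each individually forced positive by a penalty. For each term $P_e$, also add a term coupling $Z_{c_e}$ to something that energetically pins $s_e=+1$. A bare field $-\lambda Z_{c_e}$ would anticommute with nothing but would create a relation with $P_eZ_{c_e}$ only through products involving $P_e$, which do not cancel. However, $Z_{c_e}$ alone is a generator, and together with $P_eZ_{c_e}$ it gives the relation-free pair $\{Z_{c_e},P_eZ_{c_e}\}$, provided the original $P_e$ have no relations among themselves. That fails for generic starting Hamiltonians.

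\textbf{Option 2 (preferred).} Use the fact that a gadget can be built inside a class where sign changes are gauge transformations. I would take a QMA-complete Hamiltonian whose terms have no relations among the relevant products up to the ones the gauge fixes. The cleanest route is to go through circuit-to-Hamiltonian constructions (Kitaev/Kempe–Kitaev–Regev) rewritten with Pauli terms, and then pin the sign of each ancilla-dressed term $P_eZ_{c_e}$ with a term $-\lambda X_{c_e}$-type penalty replaced by a large-coefficient ``clock'' term. Concretely, I would add for each ancilla a term $-Z_{c_e}Q_e$, where $Q_e$ is a fresh Pauli on a second ancilla qubit $c'_e$ (e.g.\ $Z_{c'_e}$ together with $X_{c'_e}$ terms). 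The $X$/$Z$ exponents on $c'_e$ keep the set relation-free, while the effective sector choice for $c_e$ is energetically biased. Perturbation theory (projection lemma of Kempe–Kitaev–Regev) then shows the low-energy subspace has $s_e=+1$ for all $e$, up to an inverse-polynomial error. After rescaling so coefficients are at most one, the promise gap remains inverse-polynomial.

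Finally I would count parameters. Original 2-local terms on a bounded-degree graph, dressed with one or two ancillas plus gadget terms, give locality at most five. The anticommutation degree is bounded by the original degree times a constant; I would verify it comes out at most $26$ by explicit enumeration of neighbors, as in the degree count \cref{eq:glass_degree}. The pinning/projection step is the part I expect to require the most care.
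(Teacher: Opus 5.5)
There is a genuine gap at the step you flag as delicate: the pinning in Option 2 creates no energetic bias at all. After you add $-Z_{c_e}Q_e$ together with terms supported on the fresh ancilla $c'_e$, the operator $Z_{c_e}$ still commutes with every term, so the sector decomposition survives. In sector $s_e$ the gadget becomes, e.g., $-\lambda s_eZ_{c'_e}-\mu X_{c'_e}$, acting on a qubit that nothing else touches. Conjugation by $X_{c'_e}$ maps the $s_e=+1$ gadget to the $s_e=-1$ gadget, so both have ground energy $-\sqrt{\lambda^2+\mu^2}$. The sector energies are therefore $\lambda_{\min}(\sum_e s_eJ_eP_e)$ plus an $s$-independent constant, the minimum over sign patterns is exactly as before, and the projection lemma has nothing to select.

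This is not an artifact of the example. Take any gadget on fresh ancillas that couples to the flag only through $Z_{c_e}$-dressed terms. Either there is a Pauli on those ancillas that anticommutes with all dressed gadget terms and commutes with the undressed ones, or some product of gadget terms is proportional to $Z_{c_e}$; this is just solvability of a linear system over $\mathbb F_2$. In the first case the two sectors of the gadget are isospectral. In the second case the gadget generates $Z_{c_e}$, which is your Option 1 in disguise: any original relation all of whose flags are treated this way reappears.

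The missing idea is to make flipped flags costly through the interaction itself rather than through a penalty. Suppose every flagged term has the form $-w_jM_jZ_{f_j}$ with $w_j\geq0$ and $M_j\succeq0$ on the relevant low-energy space. Then flipping a flag adds $2w_jM_j\succeq0$, so the all-$(+1)$ sector is automatically optimal and no fields on the flags are needed. Because a nonidentity Pauli is never positive, this forces an encoding. The paper replaces each source qubit by the block $h=-X_c-Z_c(X_a+Y_a+Z_a)-Z_c(X_b+Z_b)$. Its two-dimensional ground space makes the compressed operators $P(\sigma_a^\mu X_b)P$ and $P(\sigma_a^\mu Z_b)P$ positive definite: the diagonal entries are $1/\sqrt6$, and the off-diagonal entries are shrunk by the central-qubit overlap $\chi=1/R<1/\sqrt2$. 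At the same time, $\sum_\mu M^X_\mu\otimes M^Z_\mu=\frac12I-\frac{\chi^2}{2}(\overline X\otimes\overline X+\overline Y\otimes\overline Y)$.

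The intended couplings must be those of the preferred sector. That is why the source is the antiferromagnetic XY model on triangular-lattice subgraphs (Piddock--Montanaro) rather than a general-sign family. A first-order perturbative bound using the block gap then transfers the inverse-polynomial promise gap. The stated locality $5$ and anticommutation-degree $26$ come from counting in this specific construction, whose edge terms are $\sigma^\mu_{a_u}\sigma^\mu_{a_v}X_{b_u}Z_{b_v}Z_{f_{uv,\mu}}$. Without a fixed construction, your deferred count cannot establish them.
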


We reduce from the antiferromagnetic XY Hamiltonian on subgraphs $G=(V,E)$ of the triangular lattice,
\be
K_G=\sum_{uv\in E}J_{uv}(X_uX_v+Y_uY_v),
\qquad 0<J_{uv}\leq1.
\label{eq:qma_source}
\ee
The ground energy problem for this Hamiltonian is QMA-complete by Piddock and Montanaro~\onlinecite{piddock2017complexity}. Their polynomially bounded couplings can be rescaled to the normalization in \cref{eq:qma_source}, leaving an inverse-polynomial promise gap. Write $N=|V|$. The interaction graph has degree at most six, so $|E|\leq3N$.

Of course, this Hamiltonian need not be redundancy-free. For example, $(X_uX_v)(X_vX_w)(X_wX_u)=I$ around a triangle. We will replace each source qubit by a small block of physical qubits, and arrange that the interactions between blocks reproduce $K_G$ in an encoded subspace. The physical Pauli terms will have no relations, even though the encoded interactions do.

\subsection{Removing Relations}

Recall the simple way of removing relations mentioned in the introduction. Give each Pauli term $P_j$ its own ancilla $f_j$, and replace it by $P_jZ_{f_j}$. Any product in which that term appears an odd number of times contains a $Z$ on its ancilla, and hence cannot be scalar. We call these ancillas \emph{flags}.

However, this operation does not in general preserve the ground energy. Each $Z_{f_j}$ commutes with the Hamiltonian, and in a sector of fixed flag values it simply chooses the sign of the corresponding interaction. The ground energy is therefore the minimum over all these sign choices, rather than the energy for the signs we intended. One might try to add fields $-Z_{f_j}$ to fix the flags, but this reintroduces the original relations: multiply a relation among the original terms, now with their flags attached, by the corresponding flag fields.

There is a way around this problem. Suppose that the interaction with flag $f_j$ is $-w_jM_jZ_{f_j}$, where $w_j\geq0$ and $M_j\succeq0$. Changing that flag from $+1$ to $-1$ adds $2w_jM_j\succeq0$ to the Hamiltonian. Changing any set of flags therefore adds a positive operator. Thus the all-$(+1)$ sector has the lowest ground energy, without any fields to fix the flags.

A nonidentity Pauli operator is not positive. Its compression to a subspace, however, can be positive. We will use a three-qubit block whose ground space encodes one qubit, and choose physical coupling operators that become positive when compressed to that space. Their off-diagonal matrix elements will give the XY interaction.

\subsection{A Three-Qubit Encoding}

Take two qubits $a,b$, with polarization axes
\be
O_a=\frac{X_a+Y_a+Z_a}{\sqrt3},
\qquad
O_b=\frac{X_b+Z_b}{\sqrt2}.
\label{eq:qma_axes}
\ee
These operators commute and square to the identity. Write $|a_\pm\rangle$ and $|b_\pm\rangle$ for their eigenvectors, and consider the two states $|a_+b_+\rangle$ and $|a_-b_-\rangle$. We will couple blocks using the six Pauli operators $\sigma_a^\mu X_b$ and $\sigma_a^\mu Z_b$, where $\mu\in\{x,y,z\}$.

Both expectation values change sign between the two states, so their product does not. Each $\sigma_a^\mu$ has diagonal matrix elements $\pm1/\sqrt3$ in the $|a_\pm\rangle$ basis, and off-diagonal magnitude $\sqrt{2/3}$. For $X_b$ and $Z_b$, the corresponding quantities are $\pm1/\sqrt2$ and $1/\sqrt2$. Hence each two-qubit coupling operator has both diagonal entries equal to $1/\sqrt6$, while its off-diagonal magnitude is $1/\sqrt3$. The diagonal part has the sign we want, but the off-diagonal part is too large for the compressed operator to be positive.

We can reduce the off-diagonal entries without changing the diagonal ones by adding a third qubit $c$. Suppose that we attach normalized states $|c_+\rangle$ and $|c_-\rangle$ to the two outer-qubit states, with real positive overlap $\chi=\langle c_+|c_-\rangle$. For any operator acting only on $a,b$, its matrix on these two states changes as
\be
\begin{pmatrix}q_{00}&q_{01}\\q_{10}&q_{11}\end{pmatrix}
\longmapsto
\begin{pmatrix}q_{00}&\chi q_{01}\\\chi q_{10}&q_{11}\end{pmatrix}.
\label{eq:qma_overlap}
\ee
Here the matrix on the left is in the basis $|a_+b_+\rangle,|a_-b_-\rangle$, and on the right we have included the central-qubit states. Thus any $0<\chi<1/\sqrt2$ makes the six compressed coupling operators positive, while leaving their off-diagonal entries nonzero.

Couple both outer polarizations to $Z_c$, and add a transverse field on $c$:
\be
h=-X_c-Z_c(X_a+Y_a+Z_a)-Z_c(X_b+Z_b).
\label{eq:qma_block}
\ee
This is a sum of six Pauli terms, all with coefficient $-1$. The $Z_c$ couplings favor the two aligned choices of the outer polarizations. The transverse field will give the corresponding central-qubit states a nonzero overlap.

Let $|c_\pm\rangle$ now be the ground states of $-X_c\mp(\sqrt3+\sqrt2)Z_c$, with phases chosen so that their overlap is positive. Define
\be
R=\sqrt{6+2\sqrt6},
\qquad
r=\sqrt{6-2\sqrt6},
\qquad
\chi=\langle c_+|c_-\rangle=\frac1R.
\label{eq:qma_constants}
\ee
The two encoded basis states are
\be
|0_L\rangle=|a_+\rangle|b_+\rangle|c_+\rangle,
\qquad
|1_L\rangle=|a_-\rangle|b_-\rangle|c_-\rangle.
\label{eq:qma_code}
\ee
Let $P$ project onto their span. Use the same basis choices in every block, and write $\overline X,\overline Y,\overline Z$ for the usual Pauli operators on this encoded qubit. In particular, $\overline X=|0_L\rangle\langle1_L|+|1_L\rangle\langle0_L|$. Denote the compressed coupling operators by
\be
M_\mu^X=\left.P(\sigma_a^\mu X_b)P\right|_{\operatorname{im}P},
\qquad
M_\mu^Z=\left.P(\sigma_a^\mu Z_b)P\right|_{\operatorname{im}P}.
\label{eq:qma_ports}
\ee
\begin{lemma}
\label{lem:qma_encoding}
The block Hamiltonian has a two-dimensional ground space equal to the code, and
\be
hP=-RP,
\qquad
h+RI\succeq(R-r)(I-P)\succeq I-P.
\label{eq:qma_block_gap}
\ee
Every $M_\mu^X$ and $M_\mu^Z$ is positive definite, with eigenvalues
\be
\frac1{\sqrt6}\pm\frac{\chi}{\sqrt3}.
\label{eq:qma_port_eigenvalues}
\ee
For two blocks,
\be
\sum_{\mu=x,y,z}M_\mu^X\otimes M_\mu^Z
=\frac12I-\frac{\chi^2}{2}
(\overline X\otimes\overline X+\overline Y\otimes\overline Y).
\label{eq:qma_port_identity}
\ee
\end{lemma}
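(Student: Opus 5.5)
The plan is to use the symmetry of the block Hamiltonian $h$ to diagonalize it sector by sector, and then to compute the compressed operators directly from single-qubit matrix elements.

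\emph{Step 1: reduce $h$ to a one-qubit problem.} Write $X_a+Y_a+Z_a=\sqrt3\,O_a$ and $X_b+Z_b=\sqrt2\,O_b$. Then
$$h=-X_c-(\sqrt3\,O_a+\sqrt2\,O_b)Z_c .$$
The operators $O_a$ and $O_b$ commute with $h$. In the joint eigensector $O_a=s_a$, $O_b=s_b$, the Hamiltonian becomes the single-qubit operator
$$-X_c-(\sqrt3 s_a+\sqrt2 s_b)Z_c ,$$
whose eigenvalues are $\pm\sqrt{1+(\sqrt3 s_a+\sqrt2 s_b)^2}$. There are two kinds of sectors.
\begin{itemize}
\item In the aligned sectors $s_a=s_b$, the eigenvalues are $\pm\sqrt{6+2\sqrt6}=\pm R$.
\item In the misaligned sectors $s_a=-s_b$, the eigenvalues are $\pm r$.
\end{itemize}
The ground space is therefore two-dimensional, with energy $-R$. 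It is spanned by $|a_\pm b_\pm\rangle|c_\pm\rangle$, where $|c_\pm\rangle$ are exactly the ground states in the definition of the code, so it equals the code. This gives $hP=-RP$. Every other eigenvalue is at least $-r$, which gives $h+RI\succeq(R-r)(I-P)$. Finally, $R-r\approx2.1\geq1$.

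\emph{Step 2: the overlap.} The state $|c_\pm\rangle$ is the spin state pointing along $(1,0,\pm h_0)/R$, where $h_0=\sqrt3+\sqrt2$. The Bloch angle $\varphi$ between the two states satisfies $\cos\varphi=(1-h_0^2)/R^2$. Hence
$$|\langle c_+|c_-\rangle|^2=\frac{1+\cos\varphi}{2}=\frac{1}{R^2}.$$
After fixing phases, $\chi=1/R$, and in particular $\chi<1/\sqrt2$.

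\emph{Step 3: matrix elements of the ports.} Let $\alpha_\mu=\langle a_+|\sigma^\mu|a_-\rangle$ and $\beta^{X}=\langle b_+|X_b|b_-\rangle$, $\beta^{Z}=\langle b_+|Z_b|b_-\rangle$. Choose $|b_\pm\rangle$ real, which is possible since the $O_b$ axis lies in the $xz$-plane.
\begin{itemize}
\item The diagonal entries are $\langle a_\pm|\sigma^\mu|a_\pm\rangle=\pm1/\sqrt3$ and $\pm1/\sqrt2$ for $X_b$ and $Z_b$. Each product is therefore $1/\sqrt6$ on both code states.
\item By \eqref{eq:qma_overlap}, the off-diagonal entry is $\chi\alpha_\mu\beta^{X}$ or $\chi\alpha_\mu\beta^{Z}$. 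Its magnitude is $\chi\sqrt{2/3}\cdot(1/\sqrt2)=\chi/\sqrt3$.
\end{itemize}
A Hermitian $2\times2$ matrix with equal diagonal entries $1/\sqrt6$ and off-diagonal magnitude $\chi/\sqrt3$ has eigenvalues $1/\sqrt6\pm\chi/\sqrt3$. These are positive because $\chi<1/\sqrt2$.

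\emph{Step 4: the identity \eqref{eq:qma_port_identity}.} Expand $\sum_\mu M_\mu^X\otimes M_\mu^Z$ into four kinds of terms.
\begin{itemize}
\item Diagonal$\otimes$diagonal: this gives $3\cdot\tfrac16 I=\tfrac12 I$.
\item Mixed terms: each is proportional to $\sum_\mu\alpha_\mu=\sqrt3\langle a_+|O_a|a_-\rangle=0$, so they vanish.
\item Off-diagonal$\otimes$off-diagonal: we need the vector identities $\sum_\mu\alpha_\mu^2=0$ and $\sum_\mu|\alpha_\mu|^2=2$. These hold because, up to phase, $\alpha=e_1+ie_2$ with $e_1,e_2$ orthonormal and perpendicular to $(1,1,1)/\sqrt3$.
\item The $b$ factor: since $\langle b_+|O_b|b_-\rangle=0$, we have $\beta^{Z}=-\beta^{X}$, so $\beta^{X}\beta^{Z}=-\tfrac12$.
\end{itemize}
Combining these, only the $|01\rangle\langle10|+|10\rangle\langle01|$ terms survive, with coefficient $\chi^2\cdot2\cdot(-\tfrac12)=-\chi^2$. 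This equals $-\tfrac{\chi^2}{2}(\overline X\otimes\overline X+\overline Y\otimes\overline Y)$.

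The main obstacle is the phase bookkeeping in Step 4: $\sum_\mu\alpha_\mu^2=0$ must hold exactly, and the phase conventions used for $|a_\pm\rangle$ in $M^X$ and in $M^Z$ must be the same across blocks. I would check the phases by an explicit choice of $e_1,e_2$.
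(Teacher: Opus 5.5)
Your proof is correct. Steps 1--3 match the paper's proof: the same reduction to the central qubit in the joint eigensectors of $O_a,O_b$, the same Bloch-vector computation of $\chi=1/R$, and the same diagonal/off-diagonal count for the compressed ports. One small numerical slip: $R-r=\sqrt{12-4\sqrt3}\approx2.25$, not $2.1$, though this does not affect $R-r>1$.

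The difference is in Step 4. The paper never splits the ports into diagonal and off-diagonal parts. It first sums over $\mu$ on the two physical $a$ qubits using $\sum_\mu\sigma^\mu\otimes\sigma^\mu=2\,\mathrm{SWAP}-I$. This operator is invariant under $U\otimes U$, so in the $|a_\pm\rangle\otimes|a_\pm\rangle$ basis it has diagonal entries $(-1)^{s_1+s_2}$ and a single off-diagonal entry $2$ exchanging $|a_+a_-\rangle$ and $|a_-a_+\rangle$. Multiplying by the $X_b\otimes Z_b$ entries and $\chi^2$ finishes the computation.

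Your three identities $\sum_\mu\alpha_\mu=0$, $\sum_\mu\alpha_\mu^2=0$ and $\sum_\mu|\alpha_\mu|^2=2$ are exactly that SWAP identity written in coordinates. They respectively kill the mixed terms, kill the $|00\rangle\langle 11|$ terms, and fix the $|01\rangle\langle 10|$ coefficient. The paper's route is shorter and needs no explicit frame $e_1,e_2$. Yours makes visible which property of the transition vector $\alpha$ does the work: it is isotropic and orthogonal to $(1,1,1)$.

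The phase issue you flag as the main obstacle is not actually one. A phase on $\alpha$ preserves both $\sum_\mu\alpha_\mu^2=0$ and $\sum_\mu|\alpha_\mu|^2=2$. The surviving coefficient involves $\beta^X\overline{\beta^Z}=-|\beta^X|^2=-\tfrac12$, which holds for any phases because $\langle b_+|X+Z|b_-\rangle=0$. All that is needed is the same choice of $|a_\pm\rangle$, $|b_\pm\rangle$, $|c_\pm\rangle$ in every block, which the paper stipulates.
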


\begin{proof}
Since $h$ commutes with $O_a$ and $O_b$, we can fix their eigenvalues $\sigma,\tau\in\{\pm1\}$. The central-qubit Hamiltonian in this sector is
\be
-X_c-(\sqrt3\sigma+\sqrt2\tau)Z_c.
\ee
When the signs are aligned, its eigenvalues are $\pm R$; when they are opposite, they are $\pm r$. Each eigenvalue occurs twice in the full block. Thus the two states in \cref{eq:qma_code} span the ground space, with energy $-R$, and the next energy is $-r$. Since $R-r>1$, this proves \cref{eq:qma_block_gap}.

The Bloch vectors of $|c_\pm\rangle$ are $(1,0,\pm(\sqrt3+\sqrt2))/R$. Their overlap is therefore $\chi=1/R$, as claimed. From the matrix elements computed above and \cref{eq:qma_overlap}, each compressed coupling operator has diagonal entries $1/\sqrt6$ and off-diagonal magnitude $\chi/\sqrt3$. This gives \cref{eq:qma_port_eigenvalues}, and both eigenvalues are positive because $\chi<1/\sqrt2$.

To see how the XY interaction arises, first sum over the three Pauli axes on the $a$ qubits. We have
\be
\sum_{\mu=x,y,z}\sigma^\mu\otimes\sigma^\mu
=2\,\mathrm{SWAP}-I.
\ee
This identity is unchanged by applying the same basis change to both qubits, so it also holds in the $|a_\pm\rangle$ basis. On a logical basis state $|s_1s_2\rangle_L$, with $s_1,s_2\in\{0,1\}$, the diagonal entry of the $a$-part is $(-1)^{s_1+s_2}$. The diagonal entry of the $X_b\otimes Z_b$ part is $(-1)^{s_1+s_2}/2$. Thus all four diagonal entries of the compressed sum are $1/2$.

The only possible off-diagonal transition is $|01\rangle_L\leftrightarrow|10\rangle_L$, since the $a$-part only exchanges the two $a$ states. Its matrix element is $2$ from the $a$-part, multiplied by
\be
\langle b_-|X|b_+\rangle\langle b_+|Z|b_-\rangle
=-|\langle b_-|X|b_+\rangle|^2=-\frac12
\ee
from the $b$-part. Here we used that $X+Z=\sqrt2 O_b$ is diagonal in the $|b_\pm\rangle$ basis, so the off-diagonal entries of $X$ and $Z$ are negatives of one another. Finally, the two central qubits contribute a factor $\chi^2$. The off-diagonal matrix element is therefore $-\chi^2$, which proves \cref{eq:qma_port_identity}.
\end{proof}

By \cref{eq:qma_port_identity}, a negative physical coupling gives an antiferromagnetic XY interaction on the encoded qubits, together with a scalar shift.

\subsection{Coupling the Encoded Qubits}

Now put one block at each vertex of $G$. Choose an arbitrary orientation $u\to v$ for each edge; this only specifies which end uses $X_b$ and which uses $Z_b$. For each $\mu\in\{x,y,z\}$, introduce a separate flag $f_{uv,\mu}$, and set
\be
W_{uv}=-J_{uv}\sum_{\mu=x,y,z}
\sigma_{a_u}^\mu\sigma_{a_v}^\mu
X_{b_u}Z_{b_v}Z_{f_{uv,\mu}}.
\label{eq:qma_edge}
\ee
The Hamiltonian we construct is
\be
H=H_0+\eta W,
\qquad
H_0=\sum_{v\in V}h_v,
\qquad
W=\sum_{uv\in E}W_{uv},
\label{eq:qma_output}
\ee
where $\eta>0$ will be chosen small enough that exciting a block cannot change the ground energy by much. Define
\be
S=\sum_{uv\in E}J_{uv},
\qquad
L=\max(1,3S),
\qquad
\lambda=\frac{\chi^2}{2}=\frac1{2R^2}>\frac1{24}.
\label{eq:qma_parameters}
\ee
There are three Pauli terms per edge, so $\|W\|\leq3S\leq L$.

Let $P_v$ be the code projector for block $v$, and let $P_0$ project onto the product of these codes, tensored with the identity on the flags. Fix the flag values $z_{uv,\mu}\in\{\pm1\}$, and call the corresponding restriction of $P_0WP_0$ $W_{\rm enc}(z)$. When all flags are $+1$, \cref{eq:qma_port_identity} gives
\be
W_{\rm enc}(+)=\lambda\overline K_G-\frac S2I,
\ee
where $\overline K_G$ is the source Hamiltonian acting on the encoded qubits.

For general flag values,
\be
\begin{split}
W_{\rm enc}(z)
={}&\lambda\overline K_G-\frac S2I\\
&+2\sum_{\substack{uv\in E,\ \mu\in\{x,y,z\}\\z_{uv,\mu}=-1}}
J_{uv}M_\mu^{X,(u)}\otimes M_\mu^{Z,(v)}.
\end{split}
\label{eq:qma_flag_order}
\ee
Each term in the second line is positive, since its factors are positive operators on different blocks. Hence $W_{\rm enc}(z)\succeq W_{\rm enc}(+)$ for every flag configuration. Writing $E_0(A)$ for the ground energy of a Hermitian operator $A$, we obtain
\be
e_{\rm enc}:=E_0\left(\left.P_0WP_0\right|_{\operatorname{im}P_0}\right)
=\lambda E_0(K_G)-\frac S2.
\label{eq:qma_encoded_energy}
\ee
\subsection{Redundancy-Freeness and Locality}

Every edge term has a flag on which no other term acts. Thus a product of distinct Pauli terms that is scalar cannot contain any edge term. It remains to consider relations within a single block, since different blocks have disjoint supports.

The six terms of $h$ also have no relation. The term $X_c$ is the only one containing $X$ on $c$, so it cannot occur in a scalar product. The two terms $Z_cX_b$ and $Z_cZ_b$ are then excluded by their action on $b$: neither one, nor their product, is scalar on that qubit. Among the remaining three terms $Z_cX_a,Z_cY_a,Z_cZ_a$, a nonempty product can be scalar on $a$ only if it contains all three. But their product still contains $Z_c^3=Z_c$, so it is not scalar. Hence the only scalar product of distinct terms is the empty product. Since any Pauli product can be reduced, up to a scalar phase, by canceling repeated factors in pairs, the full Hamiltonian is redundancy-free.

There are $3N+3|E|\leq12N$ qubits and $6N+3|E|\leq15N$ terms. The block terms are at most two-local, and the edge terms are five-local. Their coefficients are $-1$ and $-\eta J_{uv}$, respectively, so all have magnitude at most one provided $\eta\leq1$.

To bound the anticommutation-degree, first consider an edge term. At each endpoint it anticommutes with two of the three terms $Z_c\sigma_a^\nu$, and with one of $Z_cX_b,Z_cZ_b$. This gives six block neighbors in total. It commutes with the other two terms on its own edge, since the two anticommutations on the $a$ qubits cancel. On another edge sharing one endpoint, it anticommutes with two of that edge's three terms if the $b$-qubit Pauli is the same, and with one if it is different. There are at most ten such edges, since $G$ has degree at most six. Thus the anticommutation-degree of an edge term is at most $6+2\cdot10=26$.

For a block term, $X_c$ has five neighbors, all within its block. A term $Z_c\sigma_a^\mu$ has three block neighbors and at most $2\cdot6=12$ edge neighbors. A term $Z_cX_b$ or $Z_cZ_b$ has two block neighbors and at most $3\cdot6=18$ edge neighbors. The degree of every block term is therefore at most $20$, and the degree of the full anticommutation graph is at most $26$.

\subsection{Ground Energy and Promise Gap}

The interaction can mix code states with excited block states. By \cref{eq:qma_block_gap},
\be
H_0+NR I\succeq\sum_{v\in V}(I-P_v)\succeq I-P_0.
\label{eq:qma_product_gap}
\ee
For the second inequality, use a common eigenbasis of the commuting block projectors. Every basis vector outside their joint image has $P_v=0$ for at least one block. Thus $H_0$ has ground energy $-NR$ and a gap of at least one above the product code.

For $\eta L\leq1/4$, a variational estimate gives
\be
\begin{split}
-NR+\eta e_{\rm enc}-2\eta^2L^2
&\leq E_0(H)\\
&\leq-NR+\eta e_{\rm enc}.
\end{split}
\label{eq:qma_energy_bound}
\ee
The upper bound follows by using an encoded ground state as a trial state. For the lower bound, write any unit vector as $\psi=p+q$, where $p=P_0\psi$, $q=(I-P_0)\psi$, and $x=\|q\|$. The component outside the code costs at least $x^2$ in unperturbed energy. The cross term from $\eta W$ has magnitude at most $2\eta Lx$, and its contribution within the excited space has magnitude at most $\eta Lx^2$. Hence
\be
\begin{split}
\langle\psi|H|\psi\rangle
&\geq-NR+x^2+\eta e_{\rm enc}(1-x^2)
-2\eta Lx-\eta Lx^2\\
&\geq-NR+\eta e_{\rm enc}
+(1-2\eta L)x^2-2\eta Lx\\
&\geq-NR+\eta e_{\rm enc}
-\frac{\eta^2L^2}{1-2\eta L}.
\end{split}
\ee
We used $|e_{\rm enc}|\leq L$ in the second line and completed the square in the third. Since $\eta L\leq1/4$, the last correction is at most $2\eta^2L^2$, proving \cref{eq:qma_energy_bound}.

Let $0<\delta\leq1$ be an inverse-polynomial lower bound on the promise gap of the normalized source problem. The separation between the two encoded promise cases is at least $\eta\lambda\delta$, while the error in \cref{eq:qma_energy_bound} is at most $2\eta^2L^2$. Thus it suffices to take $\eta L^2$ small compared to $\delta$. Choose
\be
\eta=\frac{\delta}{1000L^2}.
\label{eq:qma_eta}
\ee
Then $\eta\leq1$ and $\eta L\leq1/1000$. Substituting \cref{eq:qma_encoded_energy} into \cref{eq:qma_energy_bound} gives
\be
\begin{split}
-NR+\eta\left(\lambda E_0(K_G)-\frac S2\right)
-\frac{\eta\delta}{500}
&\leq E_0(H),\\
E_0(H)
&\leq-NR+\eta\left(\lambda E_0(K_G)-\frac S2\right).
\end{split}
\label{eq:qma_final_bound}
\ee

Suppose that the source thresholds are $a_-<a_+$ with $a_+-a_-\geq\delta$. Set
\be
\begin{split}
b_-&=-NR+\eta\left(\lambda a_--\frac S2\right),\\
b_+&=-NR+\eta\left(\lambda a_+-\frac S2\right)
-\frac{\eta\delta}{500}.
\end{split}
\label{eq:qma_thresholds}
\ee
A source YES instance gives $E_0(H)\leq b_-$, and a source NO instance gives $E_0(H)\geq b_+$. Since $\lambda>1/24$, the separation satisfies
\be
b_+-b_-\geq\eta\delta\left(\lambda-\frac1{500}\right)
>\frac{\eta\delta}{26}.
\ee
Since $L\leq\max(1,9N)$, the gap is inverse-polynomial. This proves QMA-hardness, and since containment is trivial, also \cref{thm:qma_ground_energy}.

\appendix

\section{The HDQI Expansion and Decoding Bounds}
\label{app:hdqi}
We now prove the identities for Hamiltonian DQI established in \cref{sec:hdqi_background}. 
\subsection{Ordered products}
\label{app:hdqi_expansion}

Sorting a word in the relaxed generators and using $\widehat P_i^2=I$ gives
\begin{equation}
\label{eq:hdqi_word_reduction}
\widehat P_{i_1}\cdots\widehat P_{i_k}
=\sigma_A(i_1,\ldots,i_k)\widehat P(y),
\qquad y=e_{i_1}+\cdots+e_{i_k}\in\mathbb F_2^m,
\end{equation}
where $e_i$ is the $i$th standard basis vector and
\[
\sigma_A(i_1,\ldots,i_k)
=(-1)^{\sum_{r<s:\,i_r>i_s}A_{i_ri_s}}.
\]
Each interchange of two generators contributes its commutation sign. Grouping the power series of the exponential by $y$ therefore gives
\begin{equation}
\label{eq:hdqi_coefficient_series}
a_y=\sum_{k=0}^{\infty}\frac{(-\beta/2)^k}{k!}
\sum_{\substack{i_1,\ldots,i_k\in[m]\\
e_{i_1}+\cdots+e_{i_k}=y}}
\sigma_A(i_1,\ldots,i_k)J_{i_1}\cdots J_{i_k}.
\end{equation}
Replacing $\widehat P_i$ by $v_iP_i$ gives the same reordering sign and an additional factor $v^y$. This proves \eqref{eq:hdqi_physical_expansion}. A degree-$\ell$ polynomial has the same expansion with $k\leq\ell$, so its coefficient vanishes whenever $|y|>\ell$.

For the canonical representation, distinct ordered products have distinct $X$ labels and are orthonormal for normalized trace. They also satisfy $\widehat P(y)|0^m\rangle=|y\rangle$, so
\begin{equation}
\label{eq:hdqi_imaginary_time_pilot}
|\pi_\beta\rangle
=\frac{e^{-\beta\widehat H/2}|0^m\rangle}{\sqrt{\widehat Z}}.
\end{equation}

A different redundancy-free Pauli realization has the same coefficients and admits a Clifford map to the same pilot. Suppose its generators act on $k$ qubits and write $Q_i=\omega_iX^{u_i}Z^{w_i}$, where $u_i,w_i\in\mathbb F_2^k$ and $\omega_i\in\{\pm1,\pm i\}$. With $U=(u_1\ \cdots\ u_m)$ and $V=(w_1\ \cdots\ w_m)$, their ordered products are
\begin{equation}
\label{eq:hdqi_general_pauli_phase}
Q(y)=\left(\prod_i\omega_i^{y_i}\right)
(-1)^{\sum_{i<j}(w_i\cdot u_j)y_iy_j}
X^{Uy}Z^{Vy}.
\end{equation}
A Bell basis change sends $|X^{Uy}Z^{Vy}\rangle$ to $|Uy,Vy\rangle$. Redundancy-freeness makes this binary label injective in $y$, so reversible binary elimination changes it to $|y\rangle|0^{2k-m}\rangle$. Single-qubit phase gates and controlled-$Z$ gates remove the phase in \eqref{eq:hdqi_general_pauli_phase}. Applying these operations to the TFD leaves the pilot and $2k-m$ zero qubits.

\subsection{Thermal error weights}
\label{app:hdqi_weights}

Conjugating by $Z_i$ flips the sign of $\widehat P_i$ and leaves the other canonical generators unchanged. On an ordered product it acts by $\widehat P(y)\mapsto(-1)^{y_i}\widehat P(y)$. Hence the orthogonal projection
\[
\Pi_S=\prod_{i\in S}\frac{I-\operatorname{Ad}Z_i}{2},
\qquad \operatorname{Ad}Z_i(Q)=Z_iQZ_i,
\]
selects precisely the coefficients with $S\subseteq\operatorname{supp}y$. By orthonormality,
\begin{equation}
\label{eq:hdqi_projection_probability}
\bigl\|\Pi_Se^{-\beta\widehat H/2}\bigr\|_2^2
=\widehat Z\Pr[S\subseteq\operatorname{supp}Y],
\end{equation}
where $\|Q\|_p=(\tr|Q|^p)^{1/p}$ is the normalized Schatten norm.

Interpolate the couplings indexed by $S$ over the cube $[-1,1]^S$:
\[
\widehat H(\mathbf z)
=\sum_{i\notin S}J_i\widehat P_i
+\sum_{i\in S}z_iJ_i\widehat P_i.
\]
At every corner, this Hamiltonian is conjugate to $\widehat H$ by a product of the $Z_i$. Convexity of the trace exponential then gives
\begin{equation}
\label{eq:hdqi_interpolated_partition}
\tr(e^{-\beta\widehat H(\mathbf z)})\leq\widehat Z
\qquad(\mathbf z\in[-1,1]^S).
\end{equation}
Writing each difference of endpoint values as an integral of a derivative yields
\begin{equation}
\label{eq:hdqi_finite_difference}
\Pi_Se^{-\beta\widehat H/2}
=2^{-|S|}\int_{[-1,1]^S}
\partial_S e^{-\beta\widehat H(\mathbf z)/2}\,d\mathbf z,
\end{equation}
where $\partial_S$ differentiates once in each coordinate in $S$.

Each derivative inserts $-\beta J_i\widehat P_i/2$ into the exponential. For $s=|S|$ derivatives, Duhamel's formula sums over $s!$ insertion orders, each integrated over a simplex of volume $1/s!$. For one such order, let $t_0,\ldots,t_s\geq0$ be the intervening time lengths, with $\sum_jt_j=1$. Schatten H\"older gives
\begin{align}
\label{eq:hdqi_duhamel_holder}
&\left\|e^{-\beta t_0\widehat H(\mathbf z)/2}\widehat P_{i_1}
 e^{-\beta t_1\widehat H(\mathbf z)/2}\cdots
 \widehat P_{i_s}e^{-\beta t_s\widehat H(\mathbf z)/2}\right\|_2
\notag\\
&\hspace{15mm}\leq
\prod_{j=0}^s\left\|e^{-\beta t_j\widehat H(\mathbf z)/2}\right\|_{2/t_j}
=\sqrt{\tr(e^{-\beta\widehat H(\mathbf z)})}
\leq\sqrt{\widehat Z}.
\end{align}
The generators have operator norm one, and the reciprocal H\"older exponents sum to $1/2$. The insertion orders cancel the simplex volume, giving
\[
\left\|\partial_S e^{-\beta\widehat H(\mathbf z)/2}\right\|_2
\leq\sqrt{\widehat Z}\prod_{i\in S}\frac{\beta J_i}{2}.
\]
The normalized cube integral in \eqref{eq:hdqi_finite_difference} has total mass one. Squaring its norm bound and dividing by $\widehat Z$ proves \eqref{eq:hdqi_local_stochastic}.

Put $q_i=(\beta J_i/2)^2$ and $\lambda=\sum_iq_i$. Expanding a product and applying \eqref{eq:hdqi_local_stochastic} to each subset gives, for $\theta\geq0$,
\begin{align}
\label{eq:hdqi_weight_mgf}
\mathbb E e^{\theta|Y|}
&=\mathbb E\prod_i\bigl[1+(e^\theta-1)Y_i\bigr]
\notag\\
&\leq\prod_i\bigl[1+q_i(e^\theta-1)\bigr]
\leq\exp\bigl(\lambda(e^\theta-1)\bigr).
\end{align}
For example, taking $\theta=1$ in Markov's inequality yields the tail bound
\begin{equation}
\label{eq:hdqi_weight_tail}
\Pr\!\left[|Y|\geq(e-1)\lambda+\log\frac1\eta\right]\leq\eta,
\qquad 0<\eta<1.
\end{equation}

\subsection{Robustness of coherent decoding}
\label{app:hdqi_robustness}

\begin{proof}[Proof of Theorem~\ref{thm:hdqi_robustness}]
Let $r_y=y+\operatorname{Dec}(By)$ be the residual label in \eqref{eq:hdqi_coherent_output}. Compare the normalized circuit output $|\Psi_v\rangle$ with the unnormalized target vector
\[
|\Phi_v\rangle
=\frac{|0^m\rangle|e^{-\beta H_v/2}\rangle}{\sqrt{\widehat Z}}.
\]
Their difference contains exactly the incorrectly decoded labels:
\begin{equation}
\label{eq:hdqi_output_difference}
|\Psi_v\rangle-|\Phi_v\rangle
=\frac1{\sqrt{\widehat Z}}
\sum_{y:\,r_y\neq0}a_yv^y
\bigl(|r_y\rangle-|0^m\rangle\bigr)|P(y)\rangle.
\end{equation}
The independent uniform signs satisfy $\mathbb E_v[v^yv^z]=\delta_{y,z}$. Since $\|\,|P(y)\rangle\|=1$, averaging the squared norm leaves
\begin{equation}
\label{eq:hdqi_average_vector_error}
\mathbb E_v\bigl\|\Psi_v-\Phi_v\bigr\|^2
=\frac2{\widehat Z}\sum_{y:\,r_y\neq0}|a_y|^2
=2r.
\end{equation}

The vector $|\Phi_v\rangle$ has its physical register in $|\psi_{\beta,v}\rangle$. Projecting onto the orthogonal complement of this state therefore gives
\begin{equation}
\label{eq:hdqi_projection_infidelity}
1-\langle\psi_{\beta,v}|\rho_v|\psi_{\beta,v}\rangle
\leq\bigl\|\Psi_v-\Phi_v\bigr\|^2.
\end{equation}
Taking the expectation proves \eqref{eq:hdqi_average_infidelity}. Markov's inequality bounds the infidelity by $2r/\delta$ for a $1-\delta$ fraction of sign choices. The trace distance to a pure state is at most the square root of its infidelity, so the ideal output has error at most $\sqrt{2r/\delta}$ on these signs.

An input of squared fidelity at least $1-\varepsilon_{\rm p}$ is within trace distance $\sqrt{\varepsilon_{\rm p}}$ of the ideal redundancy-free TFD. The Clifford conversion, coherent decoding, and disposal of the residual register form a quantum channel, which cannot increase trace distance. The triangle inequality proves \eqref{eq:hdqi_robustness}.
\end{proof}

For a polynomial $f$, replace $a_y$ by the coefficients of $f(\widehat H)$ and $\widehat Z$ by $\tr(f(\widehat H)^\dagger f(\widehat H))$. Equations~\eqref{eq:hdqi_output_difference}--\eqref{eq:hdqi_projection_infidelity} give the same robustness bound for the normalized vectors $|f(H_v)\rangle$, when these are nonzero.

\section{Chord expansion}
\label{app:glass_onsite}

In a compatible onsite chord diagram, each connected component of the chord-intersection graph has a single family label, $X$ or $Z$.

For a uniform pairing of $2k$ cyclically ordered endpoints, let $C_k$ be the number of connected components of the chord-intersection graph. There are $a_k=(2k-1)!!$ pairings, with $a_0=1$. Each component has two choices of family label, independently of the others. Hence, after summing the two onsite species in \cref{eq:glass_chord_measure}, the single-block partition function is
\be
F_b=\sum_{k\geq0}\frac{(b^2/2)^k}{k!}w_k,
\qquad
w_k=\mathbb E\,2^{C_k},
\qquad w_0=1.
\label{eq:glass_single_series}
\ee
The expectation is over uniform pairings. At fixed $k$, the onsite law weights each pairing by $2^{C_k}$.

\subsection{A Bound on the Number of Components}

The elementary bound $C_k\leq k$ gives only $F_b\leq e^{b^2}$. We instead use a bound on the moments of $C_k$ that is uniform in $k$.

Let $c_j$ count connected chord diagrams on $2j$ cyclically ordered endpoints, and write
\[
\mathcal C(z)=\sum_{j\geq1}c_jz^j.
\]
To count diagrams with several components, consider the component containing the first endpoint. If it has $j$ chords, its $2j$ endpoints leave $2j$ gaps. Every other component lies entirely in one of these gaps, and each gap can contain an arbitrary chord diagram. Thus, if $\mathcal A(z,u)$ counts all diagrams with a factor $u$ for each component, then
\be
\mathcal A(z,u)=1+u\mathcal C\bigl(z\mathcal A(z,u)^2\bigr).
\ee
Applying Lagrange inversion gives, for $1\leq r\leq k$,
\be
\#\{\pi:|\pi|=k,\ C_k(\pi)=r\}
=\frac1{2k+1}\binom{2k+1}{r}
[z^k]\mathcal C(z)^r.
\label{eq:glass_component_count}
\ee
Put
\[
B(z)=\sum_{j\geq0}a_{j+1}z^j.
\]
We claim that $B(z)^2\leq4B(z)$ coefficientwise. For degree $j\geq2$, the two endpoint terms in the convolution contribute $2a_{j+1}$. Each of the $j-1$ interior terms is at most $3a_j$, by log-convexity of the double factorials. Their total contribution, relative to $a_{j+1}$, is therefore at most
\[
\frac{3(j-1)}{2j+1}<\frac32.
\]
Degrees zero and one satisfy the same coefficientwise bound directly. Since all coefficients are nonnegative, we can apply the bound repeatedly. Using $\mathcal C(z)\leq zB(z)$, we obtain
\be
[z^k]\mathcal C(z)^r
\leq[z^{k-r}]B(z)^r
\leq4^{r-1}a_{k-r+1}.
\ee

Dividing by the total number $a_k$ of pairings, the remaining factors in \cref{eq:glass_component_count} satisfy
\be
\frac1{2k+1}\binom{2k+1}{r}\frac{a_{k-r+1}}{a_k}
=\frac1{r!}\prod_{j=0}^{r-2}
\frac{2k-j}{2k-2j-1}
\leq\frac{2^{r-1}}{r!}.
\ee
To see the last inequality, note that the factors in the product increase with $j$. The product of all $k-1$ factors is $2^k/(k+1)\leq2^{k-1}$. Hence every initial segment has geometric mean at most two. For $r=1$, the product is empty and equals one.

Combining the last two estimates gives
\be
\Pr(C_k=r)\leq\frac{8^{r-1}}{r!}.
\label{eq:glass_component_tail}
\ee
For every fixed $u\geq1$,
\be
\sup_{k\geq1}\mathbb E\,u^{C_k}
\leq\sum_{r\geq1}\frac{u^r8^{r-1}}{r!}
=\frac{e^{8u}-1}{8}.
\label{eq:glass_component_moment}
\ee
Taking $u=2$ in \cref{eq:glass_single_series} proves
\be
F_b\leq B_0e^{b^2/2},\qquad B_0=e^{16}.
\ee
We also have $F_b\geq e^{b^2/2}$, since retaining only one onsite family removes every compatibility constraint within that family.

\subsection{Avoiding the Large Component}

A chord avoiding the selected family must avoid its largest component.

For a uniform chord diagram, a theorem of Flajolet and Noy, stated as Theorem 3.2 of Ref.~\onlinecite{acan2017chord}, says that with probability tending to one as $k\to\infty$, the diagram consists of one large connected component and isolated simple chords. The number of these isolated chords converges in distribution to a Poisson variable of mean one. In particular, the number of chords outside a largest component is tight in probability.

For an event $E$ of uniform probability $p$, Cauchy--Schwarz gives
\be
\frac{\mathbb E[2^{C_k}\mathbf1_E]}{\mathbb E2^{C_k}}
\leq (\mathbb E4^{C_k})^{1/2}p^{1/2}.
\ee
We used $\mathbb E2^{C_k}\geq1$. The prefactor is bounded by \cref{eq:glass_component_moment}, so tightness also holds under the weighted law.

Choose one of the $2k$ endpoints uniformly as an auxiliary root, rotate it to zero, and order the others clockwise. The $2k$ circular spacings are then independent of the pairing and have the Dirichlet distribution with all parameters equal to one. Weighting by $2^{C_k}$ does not change this conditional spacing law, since $C_k$ depends only on the pairing.

As in the main text, choose a largest component using a tie rule depending only on the rooted pairing, not on its labels or spacings. Its family label defines the phase. Conditional on the uncolored pairing, the component labels are independent fair choices. Hence conditioning on either phase does not change the law of the uncolored pairing or the spacings. For the empty diagram, the same conclusion follows from the auxiliary fair phase.

Suppose the largest component leaves $r$ other chords. Let its successive endpoint gaps contain $q_1,\ldots,q_m$ elementary spacings. Every endpoint not in this component belongs to one of the $r$ remaining chords, so
\be
\sum_jq_j=2k,\qquad q_j\leq2r+1.
\ee
A test chord avoiding a connected component must have both endpoints in the same gap between consecutive endpoints of the component. Indeed, otherwise the two endpoints of the test chord divide the component's endpoints into two nonempty sets. If no chord crosses the test chord, the component splits into two collections with no crossing between them, contradicting connectivity. Conversely, a chord with both endpoints in one gap crosses none of the component's chords.

If the physical gap lengths are $g_j$, the avoidance probability is therefore $\sum_jg_j^2$. The Dirichlet second moments give, for a fixed rooted pairing $\pi$,
\be
\mathbb E\left[\sum_jg_j^2\,\middle|\,\pi\right]
=\frac{\sum_jq_j(q_j+1)}{2k(2k+1)}
\leq\frac{2r+2}{2k+1}.
\label{eq:glass_gap_bound}
\ee
Recall that $\phi$ is the probability that a fresh chord avoids every onsite chord of the selected family. Its conditional mean is bounded by the left-hand side of \cref{eq:glass_gap_bound}. Tightness of $r$ now implies that this mean tends to zero as $k\to\infty$. For any fixed $R$, split into the events $r\leq R$ and $r>R$. The first contributes at most $(2R+2)/(2k+1)$, and the second contributes at most its probability, since $0\leq\phi\leq1$. First choose $R$ large and then $k$ large.

\subsection{Averaging Over the Chord Count}

For a nonempty rooted pairing, put
\be
\psi(\pi)=\frac{\sum_jq_j(q_j+1)}{2k(2k+1)},
\ee
where the gaps are those of the selected largest component. Put $\psi=1$ for the empty diagram. This is the spacing-averaged probability of avoiding that component, so $0\leq\psi\leq1$. Define
\be
\delta_b=
\frac1{F_b}\sum_{k\geq0}\frac{(b^2/2)^k}{k!}
\mathbb E_{\pi}\left[2^{C_k}\psi(\pi)\right],
\label{eq:glass_delta}
\ee
where $\pi$ is a uniform rooted pairing. Since conditioning on the phase leaves the weighted pairing and spacing laws unchanged, the preceding bound gives
\[
\mathbb E_{\mu_b^\sigma}\phi\leq\delta_b
\]
for either phase $\sigma$.

Write $a=b^2/2$. The chord count has weights proportional to $a^kw_k/k!$, where $1\leq w_k\leq B_0$ and $F_b\geq e^a$. Thus, for every fixed $m$,
\[
\Pr(k\leq m)\leq B_0e^{-a}\sum_{k=0}^m\frac{a^k}{k!}
\longrightarrow0.
\]
The weighted mean of $\psi$ tends to zero with $k$, and $0\leq\psi\leq1$. Hence $\delta_b\to0$.

This proves \cref{lem:glass_onsite}.

\section{A Finite-Size Upper Bound}
\label{app:glass_finite}

The physical chord expansion assigns a commutation sign to each crossing of opposite families. Averaging the microscopic labels nearly cancels these signs. We compare the resulting weights with $F_G$ to prove \cref{lem:glass_finite_upper}.

\subsection{The Paired Expansion}

Expand $\mathcal Z_{n,G}$ in the individual Pauli terms of \cref{eq:glass_block,eq:glass_coupling}; we call these terms the primitive operators. Redundancy-freeness removes every word in which some primitive appears an odd number of times. For each remaining word, pair the appearances of each primitive and join paired positions by chords. Bringing the two operators in each pair together gives one commutation sign for each crossing of chords. Since every primitive squares to the identity, the product of these signs is exactly the normalized trace of the word.

Four appearances of the same primitive admit three pairings. More generally, $2m_p$ appearances of a primitive $p$ admit $(2m_p-1)!!$ pairings, so each pairing needs a compensating factor.

An onsite species specifies a block and one of its two families; an edge species specifies an edge and one of its two types. Each species contains $n$ primitive operators, indexed by $\mathbb F_n$. Assign each chord an independent uniform microscopic label in this field. For a diagram $\mathcal D$ with labels $\mathbf x$, let $m_p$ be the number of chords assigned to primitive $p$, and put
\be
\Xi(\mathcal D,\mathbf x)=
\prod_p\frac1{(2m_p-1)!!},
\qquad (-1)!!=1.
\label{eq:glass_multiplicity}
\ee
Let $\chi_A(\mathcal D,\mathbf x)$ be the product of the actual Pauli crossing signs, including those on the edge ancillas. Then
\be
\mathcal Z_{n,G}=\int w_n(\mathcal D)\,d\nu(\mathcal D),
\qquad
w_n(\mathcal D)=\mathbb E_{\mathbf x}
[\Xi(\mathcal D,\mathbf x)\chi_A(\mathcal D,\mathbf x)].
\label{eq:glass_exact_chords}
\ee
Here $\nu$ is the same product measure as in \cref{eq:glass_chord_measure,eq:glass_positive_model}.

To check the coefficients, a species whose primitive has dimensionless coupling $h$ contributes $nh^2/2$ per chord: $n$ choices of microscopic label and a factor $1/2$ for the two orders of its endpoints. This gives intensity $b^2/2$ for an onsite species and $t^2/2=K$ for an edge species. The factorials in the chord measure account for permuting chords of the same species, while integrating their endpoints accounts for the orders of appearances in the exponential series. Multiplying by the correction in \cref{eq:glass_multiplicity} therefore recovers each even word with its correct coefficient.

Since $0<\Xi\leq1$ and $\nu$ has total mass $\exp(Nb^2+2KM)$, these rearrangements are absolutely convergent. The positive model replaces $w_n$ by $\prod_v I_v$.

\subsection{The Binary Matrices}

Write $n=2^r$, with $r\geq3$, and let $\mathbb F=\mathbb F_{2^r}$. For $\alpha\neq0$, define the binary matrices
\be
M_\alpha(x,y)=\operatorname{tr}_{\mathbb F}(\alpha xy),
\qquad A_\alpha=I+M_\alpha,
\ee
where the field trace takes values in $\mathbb F_2$. Both matrices are symmetric, and $M_\alpha M_\gamma=0$ over $\mathbb F_2$. Indeed, an entry of this product is the parity of the number of $y$ for which two binary linear functionals of $y$ are both one. This number is either zero, $2^{r-1}$, or $2^{r-2}$, all even for $r\geq3$. It follows that $A_\alpha^2=I$.

For $x\neq0$, the functional $y\mapsto\operatorname{tr}_{\mathbb F}(\alpha xy)$ is nonzero, so its row has $n/2$ ones. Adding the identity toggles one entry, giving row weight $n/2\pm1$. The row labeled by zero has weight one.

All nonzero $\alpha$ give permutation-equivalent blocks. If $c^2=\alpha$, which has a unique solution in $\mathbb F$, then
\be
A_\alpha(x,y)=A_1(cx,cy).
\ee
There is also an exact exchange of the two onsite families. Let $U_A|z\rangle=|Az\rangle$ on the binary computational basis, and let $\mathcal H$ apply a Hadamard to every qubit. Since $A=A^{\mathsf T}=A^{-1}$, the Clifford $\mathcal H U_A$ sends $X_i$ to $D_i$ and $D_i$ to $X_i$.

Put $\chi(z)=(-1)^{\operatorname{tr}_{\mathbb F}(z)}$. Away from coincidences between incident chord labels, the crossing sign is
\be
\chi\left(\sum_{i<j}c_{ij}x_ix_j\right),
\label{eq:glass_quadratic_sign}
\ee
where $c_{ij}$ is the sum of $\alpha_v$ over the blocks at which chords $i,j$ cross with opposite family labels. A pair of chords meets at at most two blocks. If it meets at two, both chords lie on the same logical edge $uv$. Thus an incompatible crossing has coefficient either $\alpha_v$ or $\alpha_u+\alpha_v$. Both are nonzero, since adjacent blocks were assigned different nonzero field elements.

Suppose we fix all microscopic labels except those of one incompatible crossing. The sign then depends on the remaining labels $x,y$ through $cxy+ax+by$, up to an additive constant, with $c\neq0$. For arbitrary $a,b\in\mathbb F$,
\be
\frac1{n^2}\sum_{x,y\in\mathbb F}\chi(cxy+ax+by)
=\frac1n\chi(ab/c).
\label{eq:glass_character}
\ee
Indeed, summing over $x$ gives zero unless $cy+a=0$. There is exactly one such $y$, and substituting it gives the displayed answer.

The identity part of $A_\alpha$ changes this character only when incident labels coincide. Ancilla signs also require coincident labels.

\subsection{Cancellation at Separated Vertices}

For a diagram $\mathcal D$, let $N_G[v]$ denote the closed neighborhood of $v$ in the logical graph, and define
\be
\begin{split}
S(\mathcal D)&=\{v:I_v(\mathcal D)=0\},\\
R_v&=\#\{\hbox{chords incident to }v\},\qquad
L_v=3+4\sum_{u\in N_G[v]}R_u.
\end{split}
\label{eq:glass_local_counts}
\ee
At $v\in S(\mathcal D)$, select two chords crossing with opposite family labels, using a rule independent of their microscopic labels. They belong to different species, and their block supports lie in $N_G[v]$. Condition on the labels of every other chord, which we call the background, and write $x,y$ for the two selected labels.

First suppose neither selected label coincides with the other or with an incident background label. Each selected primitive then appears on only one chord, so its multiplicity factor in \cref{eq:glass_multiplicity} is one. All identity and ancilla corrections involving the selected chords also vanish. The dependence on $x,y$ is therefore
\[
\chi(cxy+ax+by+q),\qquad c\neq0,
\]
multiplied by the background's actual sign and multiplicity factor. The background multiplier has absolute value at most one.

Use this expression as a comparison function for all $x,y$, including coincident labels. By \cref{eq:glass_character}, its unrestricted average has absolute value at most $1/n$. A union bound shows that a coincidence involving a selected label has probability at most
\be
\frac{1+2\sum_{u\in N_G[v]}R_u}{n}.
\ee
The first term accounts for $x=y$; the remaining terms bound coincidences with background chords. On this event, both the exact and comparison expressions have absolute value at most one, so they differ by at most two. Adding this error to the character bound gives
\be
\left|\mathbb E_{x,y}
[\Xi(\mathcal D,\mathbf x)\chi_A(\mathcal D,\mathbf x)
\mid\hbox{background}]\right|
\leq\frac{1+2+4\sum_{u\in N_G[v]}R_u}{n}
=\frac{L_v}{n}.
\label{eq:glass_one_pair}
\ee
Adding the selected chords can only increase the double-factorial denominators in \cref{eq:glass_multiplicity}. Thus, even on a coincidence, the exact absolute weight is at most the background's multiplicity weight. The same argument also bounds the conditional average by this background weight times $L_v/n$.

We can multiply these estimates when the selected vertices have disjoint closed neighborhoods. Their selected pairs then have disjoint block supports. After conditioning on all other labels, no crossing sign or multiplicity factor couples two selected pairs, so the averages over the pairs factor. Consequently, for every such set $T\subseteq S(\mathcal D)$,
\be
|w_n(\mathcal D)|\leq\prod_{v\in T}\frac{L_v}{n}.
\ee
The square of $G$, which joins vertices at distance one or two, has maximum degree at most $3+3\cdot2=9$. Color it with ten colors. Vertices of one color have disjoint closed neighborhoods. Apply the preceding bound to each color class of $S(\mathcal D)$ and take the geometric mean of the ten bounds. We obtain
\be
|w_n(\mathcal D)|
\leq n^{-|S(\mathcal D)|/10}
\prod_{v\in S(\mathcal D)}L_v^{1/10}.
\label{eq:glass_all_pairs}
\ee
\subsection{Summing the Incompatible Diagrams}

First consider a compatible diagram, so that $S(\mathcal D)=\emptyset$. Every crossing pair uses the same family at each shared block and hence commutes. The ancillas cannot add a sign either: a crossing with opposite ancilla axes would come from the two opposite terms on one edge, already violating compatibility at its endpoint blocks. Therefore $\chi_A=1$ for every microscopic label assignment, and
\be
0\leq w_n(\mathcal D)\leq1.
\ee
Thus the total contribution of compatible diagrams is at most $F_G$.

For the remaining diagrams, fix a nonempty incompatible set $S$ and put $W=N_G[S]$, the union of its closed neighborhoods. Since the maximum degree is three, $|W|\leq4|S|$. By \cref{eq:glass_all_pairs}, dropping the requirement of incompatibility inside $S$ while retaining compatibility outside it gives
\be
\int_{S(\mathcal D)=S}|w_n(\mathcal D)|\,d\nu
\leq n^{-|S|/10}
\int\prod_{v\in S}L_v^{1/10}
\prod_{v\notin S}I_v(\mathcal D)\,d\nu.
\label{eq:glass_fixed_S}
\ee
The factors $L_v$ depend only on chords incident to $W$. We therefore integrate without constraints every species touching $W$: the onsite species there and the edge species on edges incident to $W$. We also drop every compatibility constraint involving one of these chords. The remaining constraints then involve only the induced graph on $V\setminus W$.

Put $a=b^2+6K$. Their total intensity is
\be
A_W=b^2|W|+2K\,\#\{e:e\cap W\neq\emptyset\}
\leq a|W|\leq4a|S|.
\ee
Here the two onsite species have combined intensity $b^2$, and at most three edges per vertex each contribute combined intensity $2K$.

Let $m_W$ be the total chord count of these species. Each chord is incident to at most two blocks, and each block belongs to at most four closed neighborhoods. Hence
\be
\sum_{v\in S}\sum_{u\in N_G[v]}R_u\leq8m_W.
\ee
Using $3+4x\leq3e^{4x/3}$ in the definition of $L_v$, we get
\be
\prod_{v\in S}L_v^{1/10}
\leq3^{|S|/10}e^{16m_W/15}.
\ee
For the unnormalized chord measure with total intensity $A$,
\[
\int e^{cm}\,d\nu=\sum_{m\geq0}\frac{A^m e^{cm}}{m!}
=e^{Ae^c}.
\]
Applying this with $c=16/15$ bounds the removed-species integral by $C_{b,K}^{|S|}$, where we may take
\be
C_{b,K}=3^{1/10}
\exp\left[4(b^2+6K)e^{16/15}\right].
\label{eq:glass_finite_constant}
\ee
After integrating the removed species, the remaining integral is the positive partition function on $G[V\setminus W]$. It is at most $F_G$: the term with zero chords in every removed species contributes exactly this smaller-graph integral to $F_G$, and all other contributions to $F_G$ are nonnegative. Thus \cref{eq:glass_fixed_S} becomes
\be
\int_{S(\mathcal D)=S}|w_n(\mathcal D)|\,d\nu
\leq F_G\left(C_{b,K}n^{-1/10}\right)^{|S|}.
\ee
Summing over nonempty $S$ and adding the compatible contribution gives
\be
\mathcal Z_{n,G}
\leq F_G\sum_{S\subseteq V}
\left(C_{b,K}n^{-1/10}\right)^{|S|}
=F_G\left(1+C_{b,K}n^{-1/10}\right)^N.
\ee
This proves \cref{lem:glass_finite_upper}.

\bibliography{rfree-ref}
\end{document}